\newtheorem{theorem}{\bf Theorem}[section]
\newtheorem{lemma}[theorem]{\bf Lemma}
\newtheorem{proposition}[theorem]{\bf Proposition}
\newtheorem{corollary}[theorem]{\bf Corollary}
\newtheorem{remark}{\bf Remark}[section]
\newtheorem{assumption}{\bf Assumption}[section]
\title{An inverse scattering problem for \\the Klein-Gordon equation 
	with \\ a classical source in quantum field theory}
\author{Hironobu Sasaki
\thanks{Department of Mathematics and Informatics, 
    Chiba 263-8522, Japan, 
    E-mail: \texttt{sasaki@math.s.chiba-u.ac.jp}} 
\\ and \\
Akito Suzuki
\thanks{Department of Engineering, Shinshu University, 
	Nagano, 380-8553, Japan
    E-mail: \texttt{akito@shinshu-u.ac.jp}}
} 
\begin{document}

\maketitle
\begin{center}
\begin{minipage}{115mm}
{\footnotesize 
\textbf{Key words}
Quantum field theory, 
scattering theory, 
inverse scattering problem,
external field problem

\textbf{MSC(2000)}
81T10 
81U40 
35R30 
}
\end{minipage}
\end{center}
\abstract{An inverse scattering problem for a quantized scalar field ${\bm \phi}$
obeying a linear Klein-Gordon equation
\[ (\square + m^2 + V) {\bm \phi} = J 
\quad \mbox{in $\mathbb{R} \times \mathbb{R}^3$} \]
is considered, 
where $V$ is a repulsive external potential
and $J$ an external source $J$.
We prove that the scattering operator 
$\mathscr{S}= \mathscr{S}(V,J)$ associated with ${\bm \phi}$
uniquely determines $V$.
Assuming that $J$ is of the form 
$J(t,x)=j(t)\rho(x)$, $(t,x) \in \mathbb{R} \times \mathbb{R}^3$,
we represent $\rho$ (resp. $j$) 
in terms of $j$ (resp. $\rho$) and $\mathscr{S}$. }

\section{Introduction}
We consider 
an inverse scattering problem for a quantized scalar field ${\bm \phi}$
interacting with an external potential $V$ and an external source $J$
(see, e.g., \cite{HT, PS}) which obeys the Klein-Gordon equation
	\begin{align}
	\label{KG}
		(\square + m^2 + V(x) ){\bm \phi}(t,x) & = J(t,x)
		\quad \mbox{in $\mathbb{R} \times \mathbb{R}^3$}.
	\end{align} 
Here $\square = \frac{\partial^2}{\partial t^2}-\Delta$, 
$\Delta$ is the Laplacian in $\mathbb{R}^3$,
$m>0$ and $J$, $V$ are real functions.
${\bm \phi}(t,x)$ and its conjugate field ${\bm \pi}(t, x) = \frac{\partial}{\partial t}{\bm \phi}(t,x)$ 
are operator valued distributions (see, e.g., \cite{WightmanGarding}).
A typical example of \eqref{KG} is the nucleon-pion interaction,
that is, ${\bm \phi}$ describes the pion field and $J$ the distribution function of the nucleons (see, e.g., \cite{HT}).

Under suitable conditions, 
one can show that the asymptotic fields 
\begin{align}
\label{phipm}
& {\bm \phi}_{\rm out/in}(t,x) 
= \mbox{s-} \lim_{s \to \pm \infty}{\bm \phi}_s(t,x), \\
\label{pipm}
& {\bm \pi}_{\rm out/in}(t,x) 
= \mbox{s-} \lim_{s \to \pm \infty}{\bm \pi}_s(t,x)
\end{align}
exist. 
Here 
${\bm \pi}_s(t,x) = \frac{\partial}{\partial t}{\bm \phi}_s(t,x)$
and ${\bm \phi}_s(t,x)$ is the solution
of the free Klein-Gordon equation with the initial condition:
${\bm \phi}_s(s,x)={\bm \phi}(s,x)$ and ${\bm \pi}_s(s,x)={\bm \pi}(s,x)$.
Suppose that ${\bm \phi}_{\rm in}(x) = {\bm \phi}_{\rm in}(0,x)$ 
and ${\bm \pi}_{\rm in}(x) = {\bm \pi}_{\rm in}(0,x)$
give the Fock representation of the canonical commutation relations (CCR):
\begin{align}
\label{ccr1}
		& [{\bm \phi}_{\rm in}(x),{\bm \pi}_{\rm in}(y)]=i\delta(x-y), \\ 
\label{ccr2}
		& [{\bm \phi}_{\rm in}(x), {\bm \phi}_{\rm in}(y)]
	=[{\bm \pi}_{\rm in}(x), {\bm \pi}_{\rm in}(y)]=0.
\end{align}
See Section 3 for the detail.
The scattering operator $\mathscr{S} = \mathscr{S}(V,J)$ 
is defined by the following relations (up to a constant factor): 
\begin{equation}
\label{111}
\mathscr{S}^{-1} {\bm \phi}_{\rm in}(x) \mathscr{S}
	= {\bm \phi}_{\rm out}(x), 
\quad \mathscr{S}^{-1} {\bm \pi}_{\rm in}(x) \mathscr{S}
	= {\bm \pi}_{\rm out}(x). 
\end{equation}
We prove that $\mathscr{S}$ uniquely determines $V$.
Suppose that $J(t,x) = j(t)\rho(x)$. 
Then we show that
$\rho$ (resp. $j$) is uniquely determined by $\mathscr{S}$ and $j$
(resp. $\rho$).

To state our results precisely, 
we introduce several assumptions.
We set $\mathfrak{h} = L^2(\mathbb{R}^3;dx)$
and assume the following:
\begin{assumption}
\label{asspot}
The potential function $V:\mathbb{R}^3 \to \mathbb{R}$ 
is non negative and satisfies $V \in H^2(\mathbb{R}^3)$. 
\end{assumption}
Then the multiplication operator $V$ acting in $\mathfrak{h}$ 
is infinitesimally small with respect to $h_0=-\Delta$
since $V \in L^2(\mathbb{R}^3)$.
Hence the operator 
\[ h = h_0 + V \] 
is self-adjoint with the domain $D(h)=D(h_0)=H^2(\mathbb{R}^3)$.
Since $V$ is relative compact with respect to $h_0$, 
i.e., $V(h_0+1)^{-1}$ is compact, and $V$ is positive,
the spectrum of $h$ is $\sigma(h) = \sigma_{\rm ess}(h) = [0,\infty)$.
The condition $V \in H^2(\mathbb{R}^3)$ allows us to 
construct the solution of \eqref{KG} by a Bogoliubov transformation
(see Lemma \ref{HSlemma} and Proposition \ref{repKG}).

We set 
\[ \omega = \varphi(h), \quad \omega_0 = \varphi(h_0), \]
where $\varphi(s) = \sqrt{s+m^2}$.

\begin{assumption}
\label{assscat}
We assume that $h$ has no positive eigenvalue and
\begin{equation}
\label{geom}
\int_0^\infty dR \|V(x)(-\Delta + 1)^{-1}F(|x|\geq R) \| < \infty, 
\end{equation}
where 
\[ F(|x|\geq R) 
	= \begin{cases} 1 & \mbox{if $|x| \geq R$}, \\ 0 & \mbox{if $|x| < R$}. \end{cases} \]  
\end{assumption}
We make some comments on Assumption \ref{assscat}:
\begin{itemize}
\item By \eqref{geom}, the following limits exist
\[ w_{\pm}
:= {\rm s-}\lim_{t \to \pm \infty} 
	e^{ith}e^{-ith_0} \]
and the intertwining property $h w_\pm =  w_\pm h_0$ holds.
By Enss and Weder \cite{EnssWeder}, 
we see that the scattering map  for the Schr\"odinger operator 
defined by
\[ \mathcal{V}_{\rm SR} \ni V \mapsto S(V)=w_+^*w_- \]
is injective, where 
\[ \mathcal{V}_{\rm SR} = \{ V: \mathbb{R}^3 \to \mathbb{R}
\mid \mbox{$V$ is Kato-small in $\mathfrak{h}$ and satisfies \eqref{geom}} \}. \] 
\item Since $h$ has no positive eigenvalue, 
\eqref{geom} implies that $h$ has purely absolutely continuous spectrum.
In particular, we have $w_\pm^* w_\pm = w_\pm w_\pm^* = I$ on $\mathfrak{h}$.
\item 
We see that 
the following limits exist:
\begin{align}
\label{1738}
\mbox{s-}\lim_{t \to \pm \infty} 
e^{it\omega}e^{-it\omega_0}.
\end{align} 
For a proof of the existence, 
see Appendix A.2.
By \cite[Theorem 1]{Woll},
above limits \eqref{1738} are equal to $w_\pm$, 
respectively,
i.e., 
the invariance principal holds for $\varphi$.
\end{itemize}

\begin{assumption}
\label{assext}
The function $J:\mathbb{R} \times \mathbb{R}^3 \to \mathbb{R}$
satisfies 
\begin{itemize}
\item[(a)]
For each $t \in \mathbb{R}$, the function $J_t(x):=J(t,x)$
satisfies $J_t \in H^{-1/2}(\mathbb{R}^3)$.
\item[(b)] 
The vector valued function 
$\mathbb{R} \ni t \mapsto e^{-it\omega}\omega^{-1/2}J_t \in \mathfrak{h}$
satisfies
\[ \int_{-\infty}^\infty dt 
	\|\omega^{-1/2}J_t\|_{\mathfrak{h}} < \infty. \]
\end{itemize}
\end{assumption}

We say that $V \in \mathcal{V}$ if $V$ satisfies 
Assumptions \ref{asspot} and \ref{assscat}
and that $J \in \mathcal{J}$ if $J$ satisfies
Assumption \ref{assext}. 

\begin{theorem}
\label{inj}
Suppose that $V, V^\prime \in \mathcal{V}$ and $J, J^\prime \in \mathcal{J}$.
If $\mathscr{S}(V,J) = \mathscr{S}(V^\prime,J^\prime)$,
then:
\begin{itemize}
\item[(i)] $S(V) = S(V^\prime)$,
\item[(ii)] $V = V^\prime$,
\item[(iii)] $\int_{-\infty}^{+\infty} ds e^{-is \omega}J_t
= \int_{-\infty}^{+\infty} ds e^{-is \omega}J_t^\prime$.
\end{itemize}
\end{theorem}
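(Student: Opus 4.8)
The plan is to read the three conclusions off the explicit form that the scattering operator acquires through the Bogoliubov construction of the preceding sections. Recall (Lemma \ref{HSlemma}, Proposition \ref{repKG}) that, because $J$ enters \eqref{KG} as a $c$-number and $V\ge0$ keeps the one-particle frequency $\omega^2=h+m^2\ge m^2$ strictly positive (so that the potential produces no pair creation), $\mathscr{S}(V,J)$ acts on the free Fock space over $(\mathfrak h,\omega_0)$ in the factorized form
\[
\mathscr{S}(V,J)=e^{i\theta(V,J)}\,W\!\bigl(\zeta(V,J)\bigr)\,\Gamma\!\bigl(S(V)\bigr),
\]
where $\Gamma(\cdot)$ is second quantization, $W(\cdot)$ the Weyl operator on Fock space, $S(V)=w_+^*w_-$ the complex-linear unitary one-particle scattering operator, $\theta(V,J)\in\mathbb R$ a phase absorbing the normalization ambiguity in \eqref{111}, and $\zeta(V,J)\in\mathfrak h$ a classical amplitude that, for fixed $V$, is the image of the datum $\int_{\mathbb R}e^{-is\omega}J_s\,ds$ under bounded injective operators (a wave operator together with suitable powers of $\omega$; the factor $\omega^{-1/2}$ makes the integral $\mathfrak h$-valued by Assumption \ref{assext}). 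By the invariance principle recorded after Assumption \ref{assscat} (see \cite{Woll}), this $S(V)$ coincides with the Schr\"odinger scattering operator of the pair $h_0=-\Delta$, $h=-\Delta+V$.

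With this structure in hand the rest is short. Write $\mathscr{S}=\mathscr{S}(V,J)$ and $\mathscr{S}'=\mathscr{S}(V',J')$. Applying $\mathscr{S}=\mathscr{S}'$ to the Fock vacuum $\Omega$ and using that $\Gamma$ of a unitary fixes $\Omega$, one gets $W(\zeta(V,J))\Omega=e^{i(\theta'-\theta)}W(\zeta(V',J'))\Omega$; comparing the zero- and one-particle components of these coherent vectors forces $\zeta(V,J)=\zeta(V',J')$ and $\theta=\theta'$. Cancelling the now-equal Weyl and phase factors in $\mathscr{S}=\mathscr{S}'$ leaves $\Gamma(S(V))=\Gamma(S(V'))$, and restricting to the one-particle sector gives $S(V)=S(V')$, which is (i). For (ii), every $V\in\mathcal{V}$ is infinitesimally $h_0$-bounded by Assumption \ref{asspot} (hence Kato-small) and satisfies \eqref{geom} by Assumption \ref{assscat}, so $\mathcal{V}\subseteq\mathcal{V}_{\rm SR}$; the Enss--Weder injectivity of $V\mapsto S(V)$ on $\mathcal{V}_{\rm SR}$ \cite{EnssWeder} together with (i) gives $V=V'$. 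For (iii), feed $V=V'$ back into the identity $\zeta(V,J)=\zeta(V',J')$ obtained above: both sides are now the image of $\int_{\mathbb R}e^{-is\omega}J_s\,ds$, resp.\ $\int_{\mathbb R}e^{-is\omega}J'_s\,ds$, under one and the same injective operator, so these two integrals agree.

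I expect the only delicate point not already settled in the preceding sections to be the rigorous handling of the unbounded field operators behind the vacuum argument; I would sidestep this by working exclusively with the bounded Weyl operators and testing on $\Omega$, as above. The genuinely substantive inputs are external to Theorem \ref{inj}: the factorization of $\mathscr{S}$ with the whole $J$-dependence concentrated in the single classical vector $\zeta(V,J)$, the injectivity of $J\mapsto\zeta(V,\cdot)$ in the datum $\int e^{-is\omega}J_s\,ds$, and the identification of the linear part of $\mathscr{S}$ with $w_+^*w_-$ via the invariance principle; once these are available the theorem is essentially a reading-off. An equivalent route avoiding the explicit factorization is to use $\mathscr{S}=\mathscr{S}'$ directly in \eqref{111}: the operator part of ${\bm \phi}_{\rm out}(x)$, obtained by subtracting its in-vacuum expectation $\langle\Omega,{\bm \phi}_{\rm out}(x)\Omega\rangle$, carries the one-particle scattering and yields (i)--(ii), while $\langle\Omega,{\bm \phi}_{\rm out}(x)\Omega\rangle$ is the classical field radiated by $J$ and yields (iii).
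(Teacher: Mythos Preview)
Your approach is correct but genuinely different from the paper's. The paper never writes $\mathscr{S}$ in the factored form $e^{i\theta}W(\zeta)\Gamma(S(V))$. For (i) it proves a separate technical lemma (Lemma~\ref{tec1}) that extracts $S(V)$ through the large-time limit
\[
\lim_{t\to\pm\infty}\bigl(c^*(e^{it\omega_0}f)\Omega,\ \mathscr{S}\,c^*(e^{it\omega_0}g)\Omega\bigr)=(\Omega,\mathscr{S}\Omega)\,(f,Sg),
\]
the limit being needed precisely because the paper keeps the full Bogoliubov data $W_\pm$ with their Hilbert--Schmidt off-diagonal blocks and lets compactness combined with $e^{it\omega_0}\rightharpoonup 0$ kill the extra terms. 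For (iii) the paper computes the vacuum expectation $F(t,f)=(\Omega,{\bm\phi}_{\rm out}(t,f)\Omega)$, builds from it the function $z$ of \eqref{z}, and shows (Proposition~\ref{039}) that $z$ determines $\int e^{-is\omega}J_s\,ds$. Your coherent-state reading of $\mathscr{S}\Omega$ is cleaner and sidesteps both the limit and the auxiliary functional $z$; the paper's route, by contrast, never needs to know that the Bogoliubov part of $\mathscr{S}$ is trivial.

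One caveat: the factorization you rely on is correct, but Lemma~\ref{HSlemma} and Proposition~\ref{repKG} do not establish it, and the paper never states it. What you actually need is that the classical scattering map $QW_-^{*}QW_+$ on $\mathcal{H}$ is \emph{block-diagonal}. This follows from the intertwining $\omega w_\pm=w_\pm\omega_0$: writing $W_\pm=M\begin{pmatrix}w_\pm&0\\0&\bar w_\pm\end{pmatrix}$ with the $\pm$-independent matrix $M=\begin{pmatrix}A&B\\B&A\end{pmatrix}$, $A=\tfrac12(\omega_0^{-1/2}\omega^{1/2}+\omega_0^{1/2}\omega^{-1/2})$, $B=\tfrac12(\omega_0^{-1/2}\omega^{1/2}-\omega_0^{1/2}\omega^{-1/2})$, one checks $QM^{*}QM=I$, whence $QW_-^{*}QW_+=\begin{pmatrix}S^{*}&0\\0&\bar S^{*}\end{pmatrix}$. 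Once you insert this short computation your argument is complete; without it, ``the potential produces no pair creation'' is an assertion rather than a proof.
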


By the above theorem, we immediately see the following:
\begin{corollary}
Let $J \in \mathcal{J}$ be given.
Then the map $\mathcal{V} \ni V \mapsto \mathscr{S}(V,J)$ is injective.
\end{corollary}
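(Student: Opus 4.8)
The plan is to read off the corollary directly from Theorem \ref{inj} by specializing the second source to coincide with the first. Fix $J \in \mathcal{J}$, and suppose $V, V^\prime \in \mathcal{V}$ satisfy $\mathscr{S}(V,J) = \mathscr{S}(V^\prime,J)$; the goal is to conclude $V = V^\prime$, which is precisely injectivity of the map $\mathcal{V} \ni V \mapsto \mathscr{S}(V,J)$ for this fixed $J$.

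First I would set $J^\prime := J$ and check that the hypotheses of Theorem \ref{inj} are met verbatim. The membership $V, V^\prime \in \mathcal{V}$ is assumed, and $J, J^\prime \in \mathcal{J}$ holds trivially since both equal the fixed $J \in \mathcal{J}$. Moreover, with this choice the equality $\mathscr{S}(V,J) = \mathscr{S}(V^\prime,J^\prime)$ required by the theorem is exactly our standing assumption $\mathscr{S}(V,J) = \mathscr{S}(V^\prime,J)$. Thus the theorem applies without any additional work.

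Invoking conclusion (ii) of Theorem \ref{inj} then yields $V = V^\prime$, which is the assertion. (Conclusions (i) and (iii) are not needed here, since the source is held fixed and only the recovery of the potential is at issue.) There is no genuine obstacle in this argument: all of the analytic substance---the injectivity of the Schr\"odinger scattering map $S(\cdot)$ via Enss--Weder, the invariance principle identifying the wave operators for $\omega$ with those for $h$, and the Bogoliubov-transformation representation of $\mathscr{S}$---has already been absorbed into Theorem \ref{inj}. The corollary is merely the specialization of that result to a single prescribed external source $J$.
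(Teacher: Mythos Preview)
Your argument is correct and is exactly the paper's own approach: the corollary is stated as an immediate consequence of Theorem~\ref{inj} (the paper writes ``By the above theorem, we immediately see the following''), and your specialization $J' := J$ followed by invoking conclusion~(ii) is precisely that immediate deduction.
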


\begin{proof}[Proof of Theorem \ref{inj}]
(i) will be proved in Theorem \ref{main1}.
Since $\mathcal{V} \subset \mathcal{V}_{\rm SR}$,
(ii) follows from the injectivity of the map $V \mapsto S(V)$.
We will prove (iii) in Proposition \ref{039}.
\end{proof}

In order to recover the external source $J$,
we henceforth suppose that $J \in \mathcal{J}$ is expressed by
\begin{align}
\label{separate}
J(t,x) = j(t) \times \rho(x),
\end{align} 
where $j \in L^1(\mathbb{R})$ and $\rho \in H^{-1/2}(\mathbb{R}^3)$.
Let
\[ F(t,f) 
= (\Omega_{\rm in}, {\bm \phi}_{\rm out}(t, f) \Omega_{\rm in}), \quad f \in \mathcal{S}(\mathbb{R}^d), \]
where $\Omega_{\rm in}$ is the Fock vacuum.
From \eqref{111},
we see that $F(t,f)$ is uniquely determined by $\mathscr{S}$.
One  can show the following:
\begin{itemize}
\item[(1)] For a given function $j$ 
such that the Fourier transform $\hat{j}$ of $j$ is real analytic,
we represent $\rho$ in terms of $j$ and $F(t,f)$.  
See Theorem \ref{Thm:3.1} for the detail.
\item[(2)] Let $\rho$ be a given function 
and assume that $j$ satisfies the following:
\begin{align}\label{condition:3.2.0}
\text{For some } \delta>0,\quad 
e^{\delta |t|} j(t)\in L^1(\mathbb{R}_t).
\end{align}
Then we express $j$ by means of $\rho$ and $F(t,f)$.
See Theorem \ref{Thm:3.2} for the detail.
\end{itemize}

From the reconstruction formulas above,
we observe that the scattering operator $\mathscr{S}(V,j\times \rho)$
uniquely determines $j$ and $\rho$: 
\begin{theorem}
\label{inj2}
Let $V \in \mathcal{V}$ and 
$j \times \rho$, $j^\prime \times \rho^\prime \in \mathcal{J}$.
Suppose that $j, j^\prime \in L^1(\mathbb{R})$ 
and $\rho, \rho^\prime \in H^{-1/2}(\mathbb{R})$.
Then:
\begin{itemize}
\item[(i)] Assume that that $\hat{j}$ is real analytic.
Then $\rho = \rho^\prime$ 
if $\mathscr{S}(V,j\times \rho) = \mathscr{S}(V, j\times \rho^\prime)$.
\item[(ii)] Assume that \eqref{condition:3.2.0} holds.
Then $j=j^\prime$
if $\mathscr{S}(V,j\times \rho) = \mathscr{S}(V, j^\prime \times \rho)$.
\end{itemize}
\end{theorem}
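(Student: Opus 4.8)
The plan is to reduce both assertions to Theorem~\ref{inj}(iii) and then to exploit the purely absolutely continuous spectrum of $\omega$. We assume throughout that $j\not\equiv0$ (and, in part (ii), $\rho\not\equiv0$), as is implicit in the statement. For $g\in L^{1}(\mathbb{R})$ write $\widehat{g}(\lambda)=\int_{\mathbb{R}}e^{-is\lambda}g(s)\,ds$, a bounded continuous function with $\|\widehat{g}(\omega)\|\le\|g\|_{L^{1}}$. Since $j\not\equiv0$, Assumption~\ref{assext}(b) forces $\omega^{-1/2}\rho\in\mathfrak{h}$, and the Bochner integral identity
\[
\int_{\mathbb{R}}e^{-is\omega}\,\omega^{-1/2}J_{s}\,ds=\widehat{j}(\omega)\,\omega^{-1/2}\rho
\]
holds (and likewise for $J'$). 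Applying Theorem~\ref{inj}(iii) with $V=V'$, the hypothesis $\mathscr{S}(V,j\times\rho)=\mathscr{S}(V,j'\times\rho')$ then yields
\[
\widehat{j}(\omega)\,\omega^{-1/2}\rho=\widehat{j'}(\omega)\,\omega^{-1/2}\rho' .
\]
(Equivalently, one could invoke the reconstruction formulas of Theorems~\ref{Thm:3.1} and~\ref{Thm:3.2}, using that $\mathscr{S}$ determines $F(t,f)$; I prefer the direct route because it isolates the obstacle.) By Assumption~\ref{assscat}, $h$ has purely absolutely continuous spectrum $[0,\infty)$, so $\omega=\varphi(h)$ has purely absolutely continuous spectrum $[m,\infty)$; writing the spectral measure of $\xi\in\mathfrak{h}$ as $w_{\xi}\,d\lambda$, one has $g(\omega)\xi=0$ iff $g=0$ Lebesgue--a.e.\ on $\{\lambda\ge m:w_{\xi}(\lambda)>0\}$, and in particular $g(\omega)$ is injective on $\mathfrak{h}$ whenever $\{g=0\}$ is Lebesgue--null.

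Part~(i) is then immediate: there $j=j'$, so $\widehat{j}(\omega)\,\omega^{-1/2}(\rho-\rho')=0$; since $\widehat{j}$ is real analytic and $\widehat{j}\not\equiv0$, its zero set is discrete, hence Lebesgue--null, so $\widehat{j}(\omega)$ is injective and $\rho=\rho'$ (the bounded operator $\omega^{-1/2}$ being injective). This step presents no difficulty.

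Part~(ii) is where the work lies. Here $\rho=\rho'$, so $\bigl(\widehat{j}(\omega)-\widehat{j'}(\omega)\bigr)\psi=0$ with $\psi:=\omega^{-1/2}\rho\in\mathfrak{h}\setminus\{0\}$, i.e.\ $\widehat{j}=\widehat{j'}$ Lebesgue--a.e.\ on the positive--measure set $\Lambda:=\{\lambda\ge m:w_{\psi}(\lambda)>0\}$. The difficulty is that this only controls $\widehat{j}-\widehat{j'}$ on the part of $[m,\infty)$ ``seen'' by $\rho$, which in general is a proper subset; to propagate the identity one uses that, by~\eqref{condition:3.2.0}, $e^{\delta|s|}j(s)\in L^{1}$, so $\widehat{j}$ extends holomorphically to the strip $\{|{\rm Im}\,z|<\delta\}$ and is in particular real analytic on $\mathbb{R}$ --- and the same must hold for $j'$ (which is the intended reading of~\eqref{condition:3.2.0}, and is how Theorem~\ref{Thm:3.2} is set up). Since $\Lambda$ has positive Lebesgue measure it possesses an accumulation point, so the identity theorem forces $\widehat{j}=\widehat{j'}$ on all of $\mathbb{R}$, whence $j=j'$ by Fourier inversion. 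I expect the genuine obstacle to be precisely this passage from the ``visible'' part of the spectrum to all of $\mathbb{R}$: without the analyticity hypothesis the conclusion is false, since $\rho$ may be band-limited in the spectral variable, in which case $\widehat{j}$ and $\widehat{j'}$ can be made to agree on that band yet differ outside it. A minor secondary point is the rigorous justification of the displayed Bochner identity and of the functional calculus in $H^{-1/2}(\mathbb{R}^{3})$, which should follow from the domain identifications recorded earlier in the paper.
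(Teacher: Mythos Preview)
Your argument is correct and is essentially the abstract version of what the paper does. The paper's proof simply invokes the reconstruction Theorems~\ref{Thm:3.1} and~\ref{Thm:3.2}: $\mathscr{S}$ determines $F(t,f)$, hence the function $z$ of~\eqref{z}, and the formula $(\mathcal{F}_{+}\rho)(k)=z(k)/\widehat{j}(\widehat{\omega}_{0}(k))$ (resp.\ the analytic continuation procedure of Theorem~\ref{Thm:3.2}) then pins down $\rho$ (resp.\ $j$). You reach the same conclusion via Theorem~\ref{inj}(iii) and the purely absolutely continuous spectrum of $\omega$, bypassing the explicit diagonalisation by $\mathcal{F}_{+}$. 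The two routes coincide once one observes that $\mathcal{F}_{+}$ is precisely the unitary that turns your abstract spectral statement ``$g(\omega)\xi=0$ iff $g=0$ a.e.\ on $\{w_{\xi}>0\}$'' into the pointwise identity underlying the reconstruction formulas; your presentation has the virtue of making transparent why real analyticity of $\widehat{j}$ (resp.\ of $\widehat{j}-\widehat{j'}$) is exactly what is needed, while the paper's buys an explicit inversion formula.

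Your caveat about part~(ii) is well taken: the argument---both yours and the paper's via Theorem~\ref{Thm:3.2}---needs $j'$ to satisfy~\eqref{condition:3.2.0} as well, so that $\widehat{j'}$ is also real analytic and the identity theorem applies. The paper's statement is ambiguous on this point, but the intended reading is indeed symmetric in $j,j'$.
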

\begin{proof}
See Theorem \ref{Thm:3.1}. $\rho$ is uniquely determined by
$j$ and a function $z$ defined in \eqref{z}. 
$z$ is uniquely determined by $F(t,f)$.
Since, as was noted above, 
$\mathscr{S}$ determines $F(t,f)$ uniquely,
we see that $\rho \mapsto \mathscr{S}(V,j\times \rho)$ is injective.  
Hence (i) holds.
(ii) is proved similarly.
\end{proof}

This paper is organized as follows.
Section 2 is devoted to some mathematical preliminaries.
In Subsections 2.1 and 2.2,
we review well-known facts.
The quantized Klein-Gordon field is constructed in Subsection 2.3
and the wave operator in Subsection 2.4..
In Section 3, we discuss the scattering theory
and define the scattering operator $\mathscr{S}$.
Section 4 deals with the inverse scattering problem.
In Subsection 4.1, we show the uniqueness of $V$.
The reconstruction formulas of $\rho$ and $j$
are given in Subsections 4.2 and 4.3, respectively.
In Appendix, we prove Lemmas \ref{HSlemma} and \ref{waveop}.

\section{Preliminary} 

In general we denote the inner product and the associated norm of a Hilbert space $\mathcal{L}$ by 
$(*,\cdot )_{\mathcal{L}}$ and $\| \cdot \|_{\mathcal{L}}$, respectively.
The inner product is linear in $\cdot$ and antilinear in $*$.
If there is no danger of confusion, we omit the subscript $\mathcal{L}$ in 
$( \cdot,\cdot )_{\mathcal{L}}$ and $\| \cdot \|_{\mathcal{L}}$.
For a linear operator $T$ on $\mathcal{L}$,
we denote the domain of $T$ by $D(T)$ 
and, if $D(T)$ is dense in $\mathcal{H}$, the adjoint of $T$ by $T^*$.

\subsection{Boson Fock space}

We first recall the abstract Boson Fock space and operators therein.
The Boson Fock space over a Hilbert space $\mathfrak{h}$ is defined by
\begin{align*}
&\Gamma(\mathfrak{h})
	=\bigoplus_{n=0}^{\infty}\bigotimes_{\rm s}^n \mathfrak{h} \\
&= \left\{ \Psi = \{\Psi^{(n)} \}_{n=0}^{\infty} \Bigg| \Psi^{(n)} \in \bigotimes_{\rm s}^n \mathfrak{h}, 
\quad 
\sum_{n=0}^{\infty}\left\| \Psi^{(n)} \right\|^2_{\otimes_{\rm s}^n 			\mathfrak{h}} < \infty \right\}, 
\end{align*}
where $\otimes^n_{\rm s}\mathfrak{h}$ denotes the symmetric tensor product of $\mathfrak{h}$ 
with the convention $\otimes_{\rm s}^{0}\mathfrak{h}=\mathbb{C}$.  

The creation operator $c^*(f)$ ($f \in \mathfrak{h}$) 
acting in $\Gamma({\mathfrak{h}})$ is defined by
	\[ \left(c^*(f)\Psi \right)^{(n)} = \sqrt{n}S_n\left(f \otimes \Psi^{(n-1)} \right) \]
with the domain
	\[ D(c^*(f)) = \left\{ \Psi = \{\Psi^{(n)} \}_{n=0}^{\infty} \Bigg|
		\sum_{n=0}^{\infty} n \left\|S_n\left(f \otimes \Psi^{(n-1)} \right)
			\right\|^2_{\otimes_{\rm s}^n \mathfrak{h}} < \infty \right\}, \]
where $S_n$ denotes the symmetrization operator on $\otimes^n \mathfrak{h}$
satisfying $S_n = S_n^* = S_n^2$ and 
$S_n (\otimes^n \mathfrak{h})=\otimes^n_{\rm s} \mathfrak{h}$.

The annihilation operator $c(f)$ ($f \in \mathfrak{h}$) is defined 
by the adjoint of $c^*(f)$,
i.e., $c(f):=c^*(f)^*$. 
By definition, $c^*(f)$ (resp. $c(f)$) is linear (resp. antilinear) in $f \in \mathfrak{h}$. 
As is well known, the creation and annihilation operators leave
the finite particle subspace 
\[ D_{\rm f} = \bigcup_{m = 1}^{\infty}
		\left\{ \Psi = \{\Psi^{(n)} \}_{n=0}^{\infty} \mid \Psi^{(n)}=0, ~~ n \geq m \right\}\]
invariant.
The canonical commutation relations (CCR) hold on $D_{\rm f}$:
\begin{equation} 
\label{ccr}
[c(f), c^*(g)] = ( f,g )_{\mathfrak{h}}, 
	\quad [ c(f), c(g) ]=[c^*(f), c^*(g)]=0. 
\end{equation}
It follows from \eqref{ccr} that
\begin{align}
\label{913}
\|c^*(f) \Psi\|^2 = \|f\|^2 \|\Psi\|^2 + \|c(f) \Psi\|^2,
\quad \Psi \in D_{\rm f}.
\end{align}
The Segal field operator
$\tau(f) = \frac{1}{\sqrt{2}}(c(f) + c^*(f))$
($f \in \mathfrak{h}$)
is essentially self-adjoint on $D_{\rm f}$.
We denote its closure by the same symbol.
By \eqref{ccr}, the following equation holds
\begin{equation}
\label{922} 
\|c^*(f)\Psi\|^2 
= \frac{1}{2}(\|\tau(f)\Psi\|^2 + \|\tau(if) \Psi\|^2
	+ \|f\|^2\|\Psi\|^2),
\quad \Psi \in D_{\rm f}.
\end{equation}
Since $D_{\rm f}$ is a core for $c(f)$, $c^*(f)$ and $\tau(f)$ ($f \in \mathfrak{h}$),
we observe from \eqref{913} and \eqref{922} that 
\begin{equation}
\label{domR} D(\tau(f)) \cap D(\tau(if)) = D(c(f)) = D(c^*(f)). 
\end{equation} 
Hence the following operator equalities hold true:
\begin{align*}
& c(f) = \frac{1}{\sqrt{2}} (\tau(f) + i\tau(if)), \\
& c^*(f) = \frac{1}{\sqrt{2}} (\tau(f) - i\tau(if)). 
\end{align*}
Let 
\[ \tilde{D} := \bigcap_{f \in \mathfrak{h}} D(c(f)). \]
Since $\tilde{D} \supset D_{\rm f}$, $\tilde{D}$ 
is dense in $\Gamma(\mathfrak{h})$. 
From \eqref{domR}, we observe that
\begin{equation}
\label{936} 
\tilde{D} = \bigcap_{f \in \mathfrak{h}} D(c^*(f)) 
= \bigcap_{f \in \mathfrak{h}}D(\tau(f)). 
\end{equation}
The Fock vacuum 
$\Omega = \{ \Omega^{(n)} \}_{n=0}^\infty \in \Gamma(\mathfrak{h})$ 
is defined by $\Omega^{(0)}=1$ and $\Omega^{(n)}= 0$ ($n \geq 1$), 
which satisfies 
	\begin{equation}
	\label{vacuum}
		c(f)\Omega=0, \quad f \in \mathfrak{h}. 
	\end{equation}
$\Omega$ is a unique vector satisfying \eqref{vacuum}
up to a constant factor.

Let $A$ be a contraction operator on $\mathfrak{h}$, i.e., $\|A\|\leq 1$.
We define a contraction operator $\Gamma(A)$ 
on $\Gamma(\mathfrak{h})$ by
	\[ \left(\Gamma(A)\Psi\right)^{(n)} = \left(\otimes^n A \right)\Psi^{(n)}, 
		\quad \Psi = \{ \Psi^{(n)} \}_{n=0}^{\infty} \]
with the convention $\otimes^0 A=1$.
If $U$ is unitary, i.e. $U^{-1}=U^*$, then $\Gamma(U)$ is also unitary and satisfies
$\Gamma(U)^*=\Gamma(U^*)$ and
	\[ \Gamma(U)c(f)\Gamma(U)^* =c(Uf), \quad \Gamma(U)c^*(f)\Gamma(U)^*=c^*(Uf). \]

For a self-adjoint operator $T$ on $\mathfrak{h}$, i.e., $T=T^*$,
$\{\Gamma(e^{itT})\}_{t \in \mathbb{R}}$ is a strongly continuous one-parameter unitary group
on $\Gamma(\mathfrak{h})$.
Then, by the Stone theorem, there exists a unique self-adjoint operator $d\Gamma(T)$ 
such that
	\[ \Gamma(e^{itT}) = e^{itd\Gamma(T)}. \]
The number operator $N_{\rm f}$ is defined by $d\Gamma(1)$.

\subsection{Bogoliubov transformations}

Let $\mathcal{H}$ be the direct sum of two Hilbert spaces
$\mathcal{H}_+$ and $\mathcal{H}_-$,
where $\mathcal{H}_+ = \mathfrak{h}$ 
and $\mathcal{H}_-$ is a copy of it:
\begin{align*} 
\mathcal{H} 
= \mathcal{H}_+ \oplus \mathcal{H}_- 
= \left\{ v = \begin{bmatrix} v_+ \\ v_- \end{bmatrix} \Bigg| 
v_+, v_- \in \mathfrak{h} \right\}.
\end{align*}
We denote by $P_+$ (resp. $P_-$) 
the projection from $\mathcal{H}$ onto $\mathcal{H}_+$ (resp. $\mathcal{H}_-$):
\[ P_+  \begin{bmatrix} v_+ \\ v_- \end{bmatrix} = \begin{bmatrix} v_+ \\ 0 \end{bmatrix},
\quad P_-  \begin{bmatrix} v_+ \\ v_- \end{bmatrix} = \begin{bmatrix} 0 \\ v_- \end{bmatrix}. \]
A vector $v \in {\rm Ran}P_+ = \mathcal{H}_+$ is identified with one in $\mathfrak{h}$:
$\begin{bmatrix} v_+ \\ 0 \end{bmatrix} = v_+ \in \mathfrak{h}$.
We define an involution $Q$ on $\mathcal{H}$ by
\[ Q = P_+ - P_- \]
and a conjugation $C$ on $\mathcal{H}$ by
\[ C \begin{bmatrix} v_+ \\ v_- \end{bmatrix}
 = \begin{bmatrix} \bar{v}_- \\ \bar{v}_+ \end{bmatrix}, \]
where $\bar{f}$ stands for the complex conjugation of $f$ in $\mathfrak{h}$,
i.e., $\bar{f}(x) = \overline{f(x)}$, a.e. $x \in \mathbb{R}^3$.
 
A bounded operator $A$ on $\mathcal{H}$ is written as
\[ A = \begin{bmatrix} A_{++} & A_{+-} \\ A_{-+} & A_{--} \end{bmatrix}, \]
where $A_{\epsilon \epsilon^\prime} = P_\epsilon A P_{\epsilon^\prime}$ ($\epsilon, \epsilon^\prime = +$ or $-$).
Then we observe that
\[ A \begin{bmatrix} v_+ \\ v_- \end{bmatrix}  
= \begin{bmatrix} A_{++}v_+ + A_{+-}v_- \\ A_{-+}v_+ + A_{--}v_- \end{bmatrix} \]
and $A^*_{\epsilon \epsilon^\prime} = (A_{\epsilon \epsilon^\prime})^*$.
We introduce field operators defined on $\tilde{D}$ by
\begin{align*} 
\psi(v) 
& = c(P_+v) + c^*(CP_-v) \\
& = c(v_+) + c^*(\bar{v}_-), \quad v = \begin{bmatrix} v_+ \\ v_- \end{bmatrix} \in \mathcal{H}.
\end{align*}
One observes that $\psi(v)^* = \psi(Cv)$ on $\tilde{D}$
and hence that $\psi(v)$ is closable.
We denote its closure by the same symbol.
Let $\mathcal{H}_C$ be the set of vectors satisfying $Cv = v$:
\begin{align*} 
\mathcal{H}_C 
& = \left\{  v = \begin{bmatrix} v_+ \\ v_- \end{bmatrix} 
	\Bigg| v_+ = \bar{v}_- \in \mathfrak{h} \right\}.  
\end{align*}
Clearly, for any 
$v = \begin{bmatrix} f \\ \bar{f} \end{bmatrix} \in \mathcal{H}_C$
($f \in \mathfrak{h}$),
the operator $\psi(v)$ is essentially self-adjoint on $D_{\rm f}$
and is equal to $\sqrt{2} \tau(f)$.
By \eqref{936}, we see that 
\[ \tilde{D} = \bigcap_{v \in \mathcal{H}_C} D(\psi(v)). \]
Note that, for any $v = \begin{bmatrix} v_+ \\ v_- \end{bmatrix} \in \mathcal{H}$,
the vectors
\[ v+ Cv = \begin{bmatrix} v_+ + \bar{v}_- \\ v_- + \bar{v}_+ \end{bmatrix},
\quad
i (v-Cv) = \begin{bmatrix} iv_+ -i\bar{v}_- \\ iv_- - i\bar{v}_+ \end{bmatrix} \] 
belong to $\mathcal{H}_C$ and the following holds on $\tilde{D}$:
\begin{align*} 
\psi(v) 
& = \frac{1}{2}\psi(v+Cv) + \frac{i}{2}\psi(i(v-Cv)).
\end{align*}
It is straightforward from \eqref{ccr} that
\[ [\psi(u), \psi(v)^*] = (u,Qv) \]
holds on $D_{\rm f}$.
The following is well known (see, e.g., \cite{Ruijsenaars}):
\begin{lemma} \label{Bogo}
\begin{itemize}
\item[(1)] Let $U$ be a bounded operator on $\mathcal{H}$ satisfying
\begin{equation}
\label{symplectic}
CU = UC, \quad UQU^* = U^*QU = Q 
\end{equation}
Then there exists a unitary operator $\mathscr{U}$ such that  
\[ e^{i\psi(U^*v)} = \mathscr{U}^*e^{i\psi(v)}\mathscr{U},
\quad \mathscr{U}e^{i\psi(v)}\mathscr{U}^* = e^{i\psi(QUQv)}, 
\quad v \in \mathcal{H}_C \]
if and only if $U_{-+}$ is Hilbert-Schmidt.
In this case, $\mathscr{U}$ leaves $\tilde{D}$ invariant.  
\item[(2)]
Let $l$ be a linear functional from $\mathcal{H}$ to $\mathbb{C}$.
Then there exists a unitary operator $\mathscr{U}_l$ such that
\[ e^{i(\psi(u) + l(u))} = \mathscr{U}_l^* e^{i\psi(u)} \mathscr{U}_l,
\quad v \in \mathcal{H}_C \]
if and only if there exists a $u_l \in \mathcal{H}_C$ such that
\[ l(u) = i(v_l,Qu), \quad u \in \mathcal{H}_C. \]
In this case, $\mathscr{U}_l$ leaves $\tilde D$ invariant and $\mathscr{U}_l = e^{-i\psi(v_l)}$.
\end{itemize}
\end{lemma}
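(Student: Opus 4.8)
Lemma \ref{Bogo} is the classical implementability criterion for boson Bogoliubov transformations (Shale's theorem), together with the elementary fact that an affine shift of the field operator is implemented by a Weyl operator; one may of course just invoke \cite{Ruijsenaars}, but the plan below reconstructs both parts from the material of Subsection 2.1. First I would unfold the hypotheses on $U$ into block relations. Writing $U=(U_{\epsilon\epsilon'})$, the condition $CU=UC$ forces $U_{--}=\overline{U_{++}}$ and $U_{-+}=\overline{U_{+-}}$, where $\overline{X}$ denotes the conjugation of $X\in B(\mathfrak h)$ by $f\mapsto\bar f$; in particular $U_{-+}$ is Hilbert--Schmidt iff $U_{+-}$ is. Unfolding $UQU^{*}=U^{*}QU=Q$ gives $U_{++}U_{++}^{*}-U_{+-}U_{+-}^{*}=I$, $U_{++}^{*}U_{++}-U_{-+}^{*}U_{-+}=I$, $U_{++}U_{-+}^{*}=U_{+-}U_{--}^{*}$, and their adjoints. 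From $U_{++}^{*}U_{++}\ge I$ and $U_{++}U_{++}^{*}\ge I$ one reads off that $U_{++}$ is boundedly invertible, and then the bounded operator $D:=U_{+-}\,\overline{U_{++}}{}^{-1}$ on $\mathfrak h$ (the would-be kernel of the new vacuum, symmetric by the off-diagonal relation) satisfies $\|D\|<1$ by the symplectic relations and is Hilbert--Schmidt exactly when $U_{-+}$ is.

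For part (1), necessity comes first. If a unitary $\mathscr U$ implements $U$, then differentiating $\mathscr U\,e^{it\psi(v)}\mathscr U^{*}=e^{it\psi(QUQv)}$ in $t$ at $0$ and using $c(f)\Omega=0$ shows, after the invertible change of variable $g=U_{++}f$, that $\mathscr U\Omega$ is annihilated by $c(g)-c^{*}(D\bar g)$ for every $g\in\mathfrak h$. Any vector of $\Gamma(\mathfrak h)$ killed by all of these is a multiple of $\exp\!\big(\tfrac{1}{2}\sum_{j,k}D_{jk}c^{*}(e_{j})c^{*}(e_{k})\big)\Omega$ with $(e_{j})$ an orthonormal basis of $\mathfrak h$, and this vector has squared norm $\det(I-D^{*}D)^{-1/2}$, so its finiteness forces $\|D\|_{\mathrm{HS}}<\infty$, i.e. $U_{-+}$ Hilbert--Schmidt. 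Conversely, assuming $U_{-+}$ Hilbert--Schmidt, I would exhibit $\mathscr U$ explicitly as a product $\mathscr U_{D}\,\Gamma(W)$, where $W$ is a unitary coming from a polar-type decomposition of $U_{++}$ (implemented by second quantization, which automatically preserves $\tilde D$ since it conjugates $c(f)$ into $c(Wf)$), and $\mathscr U_{D}$ is the exponential of the essentially skew-adjoint quadratic operator $\tfrac{1}{2}\sum_{j,k}\big(D_{jk}c^{*}(e_{j})c^{*}(e_{k})-\overline{D_{jk}}\,c(e_{j})c(e_{k})\big)$ defined on $D_{\mathrm f}$; a direct computation (reducing by symmetrization to the one- and two-mode cases, or via the generating-functional identity for $v\mapsto(\mathscr U\Omega,e^{i\psi(v)}\mathscr U\Omega)$) then checks the two intertwining relations, and irreducibility of the Fock representation makes $\mathscr U$ unique up to a phase. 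The analytic heart of this half, and the main obstacle, is to show that $\mathscr U_{D}$ is genuinely unitary and preserves $\tilde D$: this is precisely where the Hilbert--Schmidt hypothesis is needed, through essential skew-adjointness of the quadratic generator and norm estimates on its exponential.

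Part (2) is easier. For the ``if'' direction, given $v_{l}\in\mathcal H_{C}$ put $\mathscr U_{l}=e^{-i\psi(v_{l})}$, which is unitary since $\psi(v_{l})$ is self-adjoint on $\mathcal H_{C}$. Because $[\psi(u),\psi(v_{l})]=(u,Qv_{l})\in\mathbb C$ for $u\in\mathcal H_{C}$ (the relation $[\psi(u),\psi(v)^{*}]=(u,Qv)$ combined with $\psi(v)^{*}=\psi(Cv)$ and $Cv_{l}=v_{l}$), the Baker--Campbell--Hausdorff/Weyl identity gives $\mathscr U_{l}^{*}e^{i\psi(u)}\mathscr U_{l}=e^{i\psi(u)-(v_{l},Qu)}$, which equals $e^{i(\psi(u)+l(u))}$ as soon as $l(u)=i(v_{l},Qu)$; moreover conjugation by $e^{-i\psi(v_{l})}$ shifts $c(f)$ by a scalar and hence maps each $D(c(f))$ onto itself, so $\mathscr U_{l}$ preserves $\tilde D$. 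For the ``only if'' direction, if a unitary $\mathscr U_{l}$ implements $\psi(u)\mapsto\psi(u)+l(u)$, then $\psi(u)+l(u)$ is self-adjoint on $\mathcal H_{C}$, forcing $l(u)\in\mathbb R$ there; differentiating $\mathscr U_{l}^{*}e^{it\psi(u)}\mathscr U_{l}=e^{it(\psi(u)+l(u))}$ at $t=0$ and taking the vacuum expectation yields $l(u)=(\mathscr U_{l}\Omega,\psi(u)\mathscr U_{l}\Omega)$, which is bounded in $u$ since $\mathscr U_{l}\Omega$ lies in the form domain of $N_{\mathrm f}$; by the Riesz theorem together with the reality constraint this functional has the form $l(u)=i(v_{l},Qu)$ with $v_{l}\in\mathcal H_{C}$, and then the ``if'' part supplies $e^{-i\psi(v_{l})}$ as an implementer, which equals $\mathscr U_{l}$ up to a phase by irreducibility.
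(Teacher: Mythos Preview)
The paper does not prove Lemma~\ref{Bogo}: it introduces the statement with ``The following is well known (see, e.g., \cite{Ruijsenaars})'' and gives no argument at all. So there is no proof in the paper to compare your attempt against; you have written a proof where the authors simply invoke the literature, and you yourself note at the outset that one may just cite \cite{Ruijsenaars}.

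Your reconstruction follows the standard route to Shale's theorem and is correct in outline. Two remarks. First, in the ``only if'' direction of part~(2) you assert that $\mathscr U_{l}\Omega$ lies in the form domain of $N_{\mathrm f}$ in order to bound $u\mapsto(\mathscr U_{l}\Omega,\psi(u)\mathscr U_{l}\Omega)$; this is not given a priori and should be replaced by an argument at the level of Weyl operators: from $e^{il(u)}(\Omega,e^{i\psi(u)}\Omega)=(\mathscr U_{l}\Omega,e^{i\psi(u)}\mathscr U_{l}\Omega)$ and the explicit Gaussian vacuum functional one extracts $l(u)$ as a continuous real-linear functional of $u\in\mathcal H_{C}$ without ever needing $\mathscr U_{l}\Omega\in D(N_{\mathrm f}^{1/2})$. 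Second, in the sufficiency half of part~(1) the claim that the quadratic exponential $\mathscr U_{D}$ is unitary and preserves $\tilde D$ is, as you say, the analytic heart; if you intend this as more than a sketch you should either invoke a specific reference for essential skew-adjointness of the HS quadratic generator on $D_{\mathrm f}$ or supply the estimate. But since the paper is content to cite \cite{Ruijsenaars}, these points are moot for the purpose of matching the paper.
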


\subsection{Quantized Klein Gordon equation}
	
Let $\mathfrak{h} = L^2(\mathbb{R}^3;dx)$ and 
$h_0 = -\Delta$ with the domain $D(h_0) = D(-\Delta) = H^2(\mathbb{R}^3)$.
Then we define the Schr\"odinger operator $h$ by
\begin{align}
h = -\Delta + V(x)
\end{align}
with the potential $V: \mathbb{R}^3 \mapsto \mathbb{R}$
satisfying Assumption \ref{asspot}.
$h$ is self-adjoint with the domain $D(h) = D(h_0)$.
Let 
\[ \omega_0:=(h_0 + m^2)^{1/2} 
\quad \mbox{and} \quad \omega:=(h+m^2)^{1/2}. \]
The free field Hamiltonian $H_{\rm f}$ is defined by
\[ H_{\rm f} = d\Gamma(\omega_0). \]
Since $\omega_0$ and $\omega$ are strictly positive,
$\omega_0^{-1}$ and $\omega^{-1}$ is bounded.
Note that $\omega_0^\theta \omega^{-\theta}$
and
$\omega^{\theta} \omega_0^{-\theta}$ 
are bounded operators for any $ 0 \leq \theta \leq 1$. 
Indeed, since $h_0 \leq h$, we observe that 
$\|\omega_0^{\theta} f\| \leq \|\omega^{\theta} f\|$.
Hence $\omega_0^\theta \omega^{-\theta}$ is bounded.
On the other hand, since $D(h) = D(h_0)$, 
it follows from the closed graph theorem that
$\|h f\| \leq C \|(h_0 + 1) f\|$ with some $C>0$.
Hence we observe that $\omega^{\theta}\omega_0^{-\theta}$ is bounded. 
From this fact, we see that $\omega^{-\theta} \omega_0^{\theta}$ and $\omega_0^{-\theta} \omega^{\theta}$ 
can be extended to bounded operators on $\mathfrak{h}$.
We denote the extended operators 
by the same symbols.
The following holds.
\begin{lemma}
\label{HSlemma}
Suppose that Assumption \ref{asspot} holds. Then
the operators $\omega_0^{1/2}\omega^{-1/2} - 1$,
$\omega_0^{-1/2}\omega^{1/2} - 1$ 
and $\omega_0^{-1/2}(\omega_0 - \omega ) \omega_0^{-1/2}$ are Hilbert-Schmidt.
\end{lemma}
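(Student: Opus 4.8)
The plan is to reduce all three assertions to the single statement that $L:=\omega_0^{-1/2}\omega^{1/2}-1$ is Hilbert--Schmidt, which is the second operator in the list; the other two then follow by purely algebraic manipulations that use only the boundedness of $\omega^{\pm1/2}\omega_0^{\mp1/2}$ recorded just before the lemma.

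To treat $L$, I would use the integral representation of fractional powers: for strictly positive self-adjoint $A,B$ with $A-B$ bounded and $0<\alpha<1$,
\[
A^{\alpha}-B^{\alpha}
=\frac{\sin\pi\alpha}{\pi}\int_{0}^{\infty}\lambda^{\alpha}(B+\lambda)^{-1}(A-B)(A+\lambda)^{-1}\,d\lambda ,
\]
the integral converging absolutely in operator norm (the integrand is $O(\lambda^{\alpha}(m^{2}+\lambda)^{-2})$, since $h,h_0\ge0$ force $\|(h+m^{2}+\lambda)^{-1}\|,\|(h_{0}+m^{2}+\lambda)^{-1}\|\le(m^{2}+\lambda)^{-1}$, while $A-B=V\in H^{2}(\mathbb{R}^{3})\subset L^{\infty}(\mathbb{R}^{3})$ is bounded). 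Taking $\alpha=1/4$, $A=h+m^{2}$, $B=h_{0}+m^{2}$ gives $\omega^{1/2}-\omega_{0}^{1/2}=\frac{1}{\sqrt2\,\pi}\int_{0}^{\infty}\lambda^{1/4}(h_{0}+m^{2}+\lambda)^{-1}V(h+m^{2}+\lambda)^{-1}\,d\lambda$, and multiplying on the left by the bounded operator $\omega_{0}^{-1/2}$,
\[
L=\frac{1}{\sqrt2\,\pi}\int_{0}^{\infty}\lambda^{1/4}\,
\big(g_{\lambda}(-i\nabla)V(x)\big)\,(h+m^{2}+\lambda)^{-1}\,d\lambda ,\qquad
g_{\lambda}(\xi)=\frac{1}{(|\xi|^{2}+m^{2})^{1/4}(|\xi|^{2}+m^{2}+\lambda)} .
\]
Since $g_{\lambda}\in L^{2}(\mathbb{R}^{3})$, the Kato--Seiler--Simon bound gives $\|g_{\lambda}(-i\nabla)V(x)\|_{\mathrm{HS}}=(2\pi)^{-3/2}\|g_{\lambda}\|_{\mathfrak h}\|V\|_{\mathfrak h}$, and an elementary two-region estimate of $\int_{\mathbb{R}^{3}}(|\xi|^{2}+m^{2})^{-1/2}(|\xi|^{2}+m^{2}+\lambda)^{-2}\,d\xi$ yields $\|g_{\lambda}\|_{\mathfrak h}\le C(1+\lambda)^{-1/2}$. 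Combined with $\|(h+m^{2}+\lambda)^{-1}\|\le(m^{2}+\lambda)^{-1}$, the Hilbert--Schmidt norm of the integrand is $\le C\lambda^{1/4}(1+\lambda)^{-3/2}$, which is integrable on $(0,\infty)$; hence $L$ is Hilbert--Schmidt.

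Next, observe that $\omega^{1/2}\omega_{0}^{-1/2}$, being bounded, is the adjoint of $\omega_{0}^{-1/2}\omega^{1/2}$, so $\omega^{1/2}\omega_{0}^{-1/2}=1+L^{*}$ with $L^{*}$ Hilbert--Schmidt. For the third operator, a routine density argument (all building blocks bounded, $\omega_{0}^{-1/2}\omega\,\omega_{0}^{-1/2}$ closed, both sides agreeing on the dense set $\{f:\omega_{0}^{-1/2}f\in D(\omega)\}$) gives the operator identity $\omega_{0}^{-1/2}\omega\,\omega_{0}^{-1/2}=(\omega_{0}^{-1/2}\omega^{1/2})(\omega^{1/2}\omega_{0}^{-1/2})=(1+L)(1+L^{*})$, whence
\[
\omega_{0}^{-1/2}(\omega_{0}-\omega)\omega_{0}^{-1/2}=-(L+L^{*}+LL^{*}),
\]
which is Hilbert--Schmidt ($LL^{*}$ is even trace class). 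For the first operator, $\omega_{0}^{1/2}\omega^{-1/2}$ is bounded and is the two-sided inverse of $\omega^{1/2}\omega_{0}^{-1/2}=1+L^{*}$, so $\omega_{0}^{1/2}\omega^{-1/2}-1=(1+L^{*})^{-1}-1=-(1+L^{*})^{-1}L^{*}$, a bounded operator times a Hilbert--Schmidt one, hence Hilbert--Schmidt.

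The only analytic input is the bound on $L$, and the delicate point there is the decay in $\lambda$: one must combine the Hilbert--Schmidt estimate $\|g_{\lambda}\|_{\mathfrak h}\lesssim(1+\lambda)^{-1/2}$ with the operator estimate $\|(h+m^{2}+\lambda)^{-1}\|\le(m^{2}+\lambda)^{-1}$, since neither alone absorbs the weight $\lambda^{1/4}$. This is also the reason for routing the third operator through $L$ rather than attacking it directly: its own integral representation carries the heavier weight $\lambda^{1/2}$, and making that converge would require, after an extra application of the resolvent identity, a sharper $L^{4}$-type Hilbert--Schmidt estimate for a term of the form $g_{\lambda}(-i\nabla)V(x)g_{\lambda}(-i\nabla)$. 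Everything else --- absolute convergence of the operator integrals, the scalar identity behind the fractional-power representation, the Kato--Seiler--Simon inequality, and the closedness/density bookkeeping --- is standard.
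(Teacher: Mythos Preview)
Your argument is correct and is genuinely more economical than the paper's. The paper establishes each of the three Hilbert--Schmidt claims by a separate application of the fractional-power integral formula; for $\omega^{-1/2}\omega_0^{1/2}-1$ and for $\omega_0^{-1/2}(\omega_0-\omega)\omega_0^{-1/2}$ this forces an auxiliary result, namely that $\omega_0^{-1/2}V\omega_0^{-1-\epsilon}$ is Hilbert--Schmidt, whose proof goes through the commutator $[\omega_0^{-1/2},V]$ and uses $\nabla V,\ \Delta V\in L^2$, i.e.\ the full strength of $V\in H^2(\mathbb{R}^3)$. You instead run the integral representation only once, for $L=\omega_0^{-1/2}\omega^{1/2}-1$, with the free resolvent sitting on the \emph{left} of $V$, so that Kato--Seiler--Simon applies with the multiplier $g_\lambda(\xi)=(|\xi|^2+m^2)^{-1/4}(|\xi|^2+m^2+\lambda)^{-1}\in L^2(\mathbb{R}^3)$; your decay estimate $\|g_\lambda\|_{L^2}\lesssim(1+\lambda)^{-1/2}$ is right (the substitution $|\xi|=\sqrt{\lambda}\,u$ gives $\|g_\lambda\|_{L^2}^2\sim c\lambda^{-1}$), and the integrability of $\lambda^{1/4}(1+\lambda)^{-3/2}$ finishes the job. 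The remaining two operators then drop out from the algebraic identities $\omega_0^{1/2}\omega^{-1/2}-1=-(1+L^*)^{-1}L^*$ and $\omega_0^{-1/2}(\omega_0-\omega)\omega_0^{-1/2}=-(L+L^*+LL^*)$, using only the boundedness of $\omega^{\pm1/2}\omega_0^{\mp1/2}$ recorded just before the lemma. A pleasant by-product is that your route needs only $V\in L^2(\mathbb{R}^3)$ and $V\ge0$ for this lemma, whereas the paper's proof genuinely consumes the $H^2$ hypothesis via the commutator lemma.
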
 
\begin{proof}
See Appendix A.1.
\end{proof}

For real $f \in H^{-1/2}(\mathbb{R}^3)$
and $g \in H^{1/2}(\mathbb{R}^3)$,
we set
\begin{align*}
\phi_0(f) = \psi \left( u_0 \right)
\quad \mbox{and} \quad  
\pi_0(g) = \psi \left( v_0 \right),
\end{align*}
where 
\begin{align*}
u_0 = \begin{bmatrix} \omega_0^{-1/2} f/\sqrt{2} \\ \omega_0^{-1/2} f/\sqrt{2}  \end{bmatrix} 
\quad \mbox{and} \quad 
v_0 = \begin{bmatrix} i\omega_0^{+1/2} g/\sqrt{2} \\ -i\omega_0^{+1/2} g/\sqrt{2}  \end{bmatrix}.
\end{align*}
For non real $f \in H^{-1/2}(\mathbb{R}^3)$ 
(resp. $g \in H^{1/2}(\mathbb{R}^3)$),
$\phi_0(f)$ (resp. $\pi_0(g)$) is defined by
of $\phi_0(({\rm Re}f) + i\phi_0({\rm Im}f)$ 
(resp. $\pi_0(({\rm Re}g) + i\pi_0({\rm Im}g)$ ).
Note that for non real 
$f \in H^{-1/2}(\mathbb{R}^3)$ and $g \in H^{1/2}(\mathbb{R}^3)$,
$\phi_0(f)$ and $\pi_0(g)$ are non self-adjoint and
the following equations hold on $\tilde D$:
\begin{align*} 
& \phi_0(f)
=
\frac{1}{\sqrt{2}} 
	\left(c^*(\omega_0^{-1/2}f)
		+ c(\omega_0^{-1/2} \bar f) \right), \\
& \pi_0(g)
=  
\frac{i}{\sqrt{2}} 
	\left(c^*(\omega_0^{+1/2}g) 
		- c(\omega_0^{+1/2} \bar g) \right).
\end{align*}
By \eqref{ccr}, one can show that
the CCR holds on $D_{\rm f}$:
\begin{align}
\label{fccr}
[ \phi_0(f), \pi_0(g)] = i(\bar{f},g),
\quad [\phi_0(f),\phi_0(\tilde f)] = [\pi_0(g),\pi_0(\tilde g)] = 0,
\end{align} 
for any $f, \tilde f \in H^{-1/2}(\mathbb{R}^3)$
and $g, \tilde g \in H^{1/2}(\mathbb{R}^3)$.
It holds from \eqref{fccr} that
\begin{align*} 
& \|\phi_0(f)\Psi\|^2 
= \|\phi_0({\rm Re}f) \Psi\|^2 + \|\phi_0({\rm Re}f)\|^2, \\
& \|\pi_0(g)\Psi\|^2 
= \|\pi_0({\rm Re}g) \Psi\|^2 + \|\pi_0({\rm Re}g)\|^2, 
\quad \Psi \in D_{\rm f}.  
\end{align*}
Hence we observe that, for non real 
$f \in H^{-1/2}(\mathbb{R}^3)$
and $g \in H^{1/2}(\mathbb{R}^3)$,
$\phi_0(f)$ and $\pi_0(g)$ are closed on the natural domain.

We introduce the bounded operator on $\mathcal{H}$ by
\begin{align*} 
U(t) = \begin{bmatrix} U_{++}(t) & U_{+-}(t) \\ U_{-+}(t) & U_{--}(t) \end{bmatrix},
\quad t \in \mathbb{R},
\end{align*}  
where 
\begin{align*}
U_{++}(t) 
& = \frac{1}{2} \Big( \omega_0^{-1/2} \cos(t\omega) \omega_0^{1/2} 
		+ \omega_0^{1/2} \cos(t\omega) \omega_0^{-1/2} \Big) \\
& \quad - \frac{i}{2} \Big( \omega_0^{1/2} \omega^{-1/2} \sin(t\omega) \omega^{-1/2}\omega_0^{1/2} \\
& \quad + \omega_0^{-1/2} \omega^{1/2} \sin(t\omega) \omega^{1/2}\omega_0^{-1/2} \Big)
\end{align*}
and 
\begin{align*}
U_{-+}(t) 
& = \frac{1}{2} \Big( \omega_0^{1/2} \cos(t\omega) \omega_0^{-1/2} 
	- \omega_0^{-1/2} \cos(t\omega) \omega_0^{1/2} \Big) \\
& \quad - \frac{i}{2} \Big( \omega_0^{1/2} \omega^{-1/2} \sin(t\omega) \omega^{-1/2}\omega_0^{1/2} \\
& \quad - \omega_0^{-1/2} \omega^{1/2} \sin(t\omega) \omega^{1/2}\omega_0^{-1/2}\Big)
\end{align*}
with $U_{--}(t) = \overline{U_{++}(t)}$ and $U_{+-}(t) = \overline{U_{-+}(t)}$.
Here, for a linear operator $A$, we define $\bar{A}$ by $ \bar{A} f = \overline{A \bar{f}}$.  

\begin{lemma}
Suppose that Assumption \ref{asspot} holds.
Then there exists a family of unitary operators 
$\mathscr{U}_t$ on $\Gamma(\mathfrak{h})$ 
such that 
$\mathscr{U}_t$ maps $\tilde{D}$ to $\tilde{D}$ and for $v \in \mathcal{H}_C$
\begin{align*} 
e^{i\psi(U(t)^*v)} = \mathscr{U}_t^*e^{i\psi(v)}\mathscr{U}_t,
\quad \mathscr{U}_te^{i\psi(v)}\mathscr{U}_t^* = e^{i\psi(QU(t)Qv)}. 
\end{align*}
\end{lemma}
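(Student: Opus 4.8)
The assertion is precisely the conclusion of Lemma~\ref{Bogo}(1) applied to $U=U(t)$, so the plan is to verify the three hypotheses of that lemma for $U(t)$: that $U(t)$ is a bounded operator on $\mathcal{H}$; that $CU(t)=U(t)C$ and $U(t)QU(t)^*=U(t)^*QU(t)=Q$; and that the off-diagonal block $U_{-+}(t)$ is Hilbert--Schmidt. Once these are in hand, Lemma~\ref{Bogo}(1) directly produces the unitary $\mathscr{U}_t$ with both intertwining identities and with $\mathscr{U}_t\tilde D=\tilde D$, which is exactly the claim.

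For boundedness I would rewrite each of the four blocks as a finite sum of products of $\cos(t\omega)$ or $\sin(t\omega)$ (each of norm $\le 1$) with the operators $\omega_0^{\pm1/2}\omega^{\mp1/2}$ and $\omega^{\pm1/2}\omega_0^{\mp1/2}$, which are bounded by the discussion preceding Lemma~\ref{HSlemma}; for instance $\omega_0^{-1/2}\cos(t\omega)\omega_0^{1/2}=(\omega_0^{-1/2}\omega^{1/2})\cos(t\omega)(\omega^{-1/2}\omega_0^{1/2})$, using $\omega^{1/2}\cos(t\omega)\omega^{-1/2}=\cos(t\omega)$ from the functional calculus of $h$. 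Each such product is bounded, hence so is $U(t)$. The relation $CU(t)=U(t)C$ is a short block-matrix computation from the definition of $C$ and the identities $U_{--}(t)=\overline{U_{++}(t)}$, $U_{+-}(t)=\overline{U_{-+}(t)}$, together with $\bar A$ being the operator $f\mapsto\overline{A\bar f}$.

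For the symplectic conditions, $U(t)$ is the one-particle implementation, in the free ($\omega_0$) representation, of the Klein--Gordon flow generated by $\omega$; writing $A=U_{++}(t)$ and $B=U_{-+}(t)$, the requirement $U(t)QU(t)^*=Q$ reduces in block form to $AA^*-\overline{BB^*}=1$ and $AB^*=\overline{BA^*}$ (the remaining two block equations then follow by complex conjugation), and $U(t)^*QU(t)=Q$ reduces analogously. These I would check by substituting the explicit formulas for $A$ and $B$ and simplifying using only that $\omega,\omega^{-1},\cos(t\omega),\sin(t\omega)$ mutually commute and $\cos^2(t\omega)+\sin^2(t\omega)=1$; the factors $\omega_0^{\pm1/2}$ are carried along unchanged. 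This step is routine but calculation-heavy.

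The substance of the lemma is the Hilbert--Schmidt property of $U_{-+}(t)$, and here Lemma~\ref{HSlemma} is the decisive input. Put $a:=\omega_0^{1/2}\omega^{-1/2}-1$, $b:=\omega^{1/2}\omega_0^{-1/2}-1$, $c:=\omega_0^{-1/2}\omega^{1/2}-1$, $d:=\omega^{-1/2}\omega_0^{1/2}-1$; all four are Hilbert--Schmidt by Lemma~\ref{HSlemma} (together with stability of that class under adjoints). With the rewriting of the previous paragraph, $\omega_0^{1/2}\cos(t\omega)\omega_0^{-1/2}=(1+a)\cos(t\omega)(1+b)$ and $\omega_0^{-1/2}\cos(t\omega)\omega_0^{1/2}=(1+c)\cos(t\omega)(1+d)$, and likewise for the two $\sin(t\omega)$ terms. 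In $U_{-+}(t)$ these paired contributions enter with opposite signs, so the leading terms carrying no Hilbert--Schmidt factor (namely $\cos(t\omega)$, respectively $\sin(t\omega)$) cancel, leaving $U_{-+}(t)$ as a finite sum of operators each containing one of $a,b,c,d$ as a factor; hence $U_{-+}(t)$ is Hilbert--Schmidt, with the uniform bounds on $\cos(t\omega)$ and $\sin(t\omega)$ controlling the norms. (By contrast the corresponding leading terms add in $U_{++}(t)$, which is why no statement is made about $U_{++}(t)-1$.) Making this cancellation precise while keeping track of the operator orderings is the only genuine obstacle.
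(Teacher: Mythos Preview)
Your proposal is correct and follows essentially the same route as the paper: verify the hypotheses of Lemma~\ref{Bogo}(1) for $U(t)$, with the only substantive step being the Hilbert--Schmidt property of $U_{-+}(t)$, obtained by exhibiting it as a finite sum of terms each carrying a Hilbert--Schmidt factor supplied by Lemma~\ref{HSlemma}. The paper likewise dismisses \eqref{symplectic} as ``direct calculation'' and then writes out an explicit six-term decomposition of $2U_{-+}(t)$ using the three operators of Lemma~\ref{HSlemma}; your version, expanding $(1+a)\cos(t\omega)(1+b)-(1+c)\cos(t\omega)(1+d)$ and noting the bare $\cos(t\omega)$, $\sin(t\omega)$ cancel, is the same computation reorganized (and in fact uses only the first two Hilbert--Schmidt operators together with their adjoints, not the third), but this is a cosmetic rather than a mathematical difference.
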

\begin{proof}
By direct calculation, we observe that 
$U(t)$ satisfies \eqref{symplectic}
with $U= U(t)$ 
.
We note that $2U_{-+}(t)$ is equal to 
\begin{align*}
& (\omega_0^{1/2}\omega^{-1/2} - 1) 
	\cos(t\omega) \omega^{1/2} \omega_0^{-1/2} \\
& \quad + \cos(t\omega) \cdot \omega^{-1/2} \omega_0^{1/2} \cdot 
	\omega_0^{-1/2}(\omega-\omega_0)\omega_0^{-1/2} \\
& \quad 
	+ (1-\omega_0^{-1/2}\omega^{1/2}) \cos(t\omega) \omega^{-1/2}\omega_0^{1/2} \\
& \quad
	-i (\omega_0^{1/2} \omega^{-1/2}-1) \sin(t\omega) 
		\cdot \omega^{-1/2}\omega_0^{1/2} \\
& \quad -i \sin(t\omega) \cdot \omega^{-1/2} \omega_0^{1/2}
		\cdot \omega_0^{-1/2}(\omega_0 - \omega ) \omega_0^{-1/2} \\
& \quad -i (1-\omega_0^{-1/2} \omega^{1/2}) 
		\sin(t\omega) \cdot \omega^{1/2}\omega_0^{-1/2}.
\end{align*}
By Lemma \ref{HSlemma},
$\omega_0^{1/2}\omega^{-1/2} - 1$,
$\omega_0^{-1/2}\omega^{1/2} - 1$ 
and $\omega_0^{-1/2}(\omega_0 - \omega ) \omega_0^{-1/2}$ are Hilbert-Schmidt,
so is $U_{-+}(t)$.
\end{proof}

Let
\begin{align*} 
g_t & = -\frac{1}{\sqrt{2}} \int_0^t ds \omega_0^{-1/2} \cos[(t-s)\omega] J_s \\
	& \quad + \frac{i}{\sqrt{2}}\int_0^t ds \omega_0^{1/2} \frac{\sin[(t-s)\omega]}{\omega} J_s
\end{align*}
and
\[ j_t = \begin{bmatrix} g_t \\ \bar{g_t} \end{bmatrix}.
\]
For $f \in H^{-1/2}(\mathbb{R}^3)$
and $g \in H^{1/2}(\mathbb{R}^3)$,
we define field operators 
\begin{align*}
\phi(t,f) = \mathscr{U}(t)^* \phi_0(f)\mathscr{U}(t)
\quad \mbox{and} \quad
\pi(t,g) = \mathscr{U}(t)^* \pi_0(g)\mathscr{U}(t),
\end{align*}
where 
\[ \mathscr{U}(t) = e^{-i\psi(j_t)} \mathscr{U}_t. \]
The following propositions are standard: 
\begin{proposition}
\label{repKG}
Suppose that Assumptions \ref{asspot} 
and \ref{assext} hold.
Then
it holds that,
$f \in H^{-1/2}(\mathbb{R}^3)$
and $g \in H^{1/2}(\mathbb{R}^3)$,
\begin{align*}
& \phi(t,f) = \psi\left( u(t) \right) 
	+ \left(\int_0^t ds \frac{\sin[(t-s)\omega]}{\omega}J_s,f \right), \\ 
& \pi(t,f) = \psi\left( v(t) \right)
	- \left(\int_0^t ds \cos[(t-s)\omega]J_s,f \right), 
\end{align*}
where
\begin{align*} 
& u(t) 
= \begin{bmatrix}
\omega^{-1/2}_0 (\cos(t\omega)
	+i\omega_0 \sin(t\omega)\omega^{-1})\bar{f} /\sqrt{2} \\
\omega^{-1/2}_0 (\cos(t\omega)
	-i\omega_0 \sin(t\omega)\omega^{-1}) \bar{f} /\sqrt{2} 
\end{bmatrix}, \\
& v(t) 
= \begin{bmatrix} 
i \omega^{1/2}_0 (\cos(t\omega) 
	+ i\omega_0^{-1}\sin(t\omega)\omega) \bar{f}/\sqrt{2} \\
-i \omega^{1/2}_0 (\cos(t\omega) 
	- i\omega_0^{-1}\sin(t\omega)\omega) \bar{f}/\sqrt{2} 
\end{bmatrix}.
\end{align*}
\end{proposition}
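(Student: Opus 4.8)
The plan is to peel the two factors of $\mathscr{U}(t)=e^{-i\psi(j_t)}\mathscr{U}_t$ off $\phi_0(f)$ and $\pi_0(g)$ one at a time, using the two parts of Lemma \ref{Bogo}: conjugation by the inner factor $e^{-i\psi(j_t)}$ produces the c-number terms, and conjugation by $\mathscr{U}_t$ produces the linear substitutions $u_0\mapsto U(t)^*u_0$ and $v_0\mapsto U(t)^*v_0$; what then remains is a purely algebraic identification. I would establish all the operator identities first on the common core $\tilde D$, which is left invariant by both $\mathscr{U}_t$ and $e^{-i\psi(j_t)}$ and on which $\psi(v)$ is defined for every $v\in\mathcal H$; since $\phi(t,f)$, $\pi(t,g)$, $\psi(u(t))$ and $\psi(v(t))$ are closed and $\tilde D$ is a core for each, the identities then extend. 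The one point where analysis rather than algebra enters is that $e^{-i\psi(j_t)}$ must be a genuine unitary, i.e.\ $g_t\in\mathfrak h$: by the boundedness of $\omega_0^{\pm1/2}\omega^{\mp1/2}$ recorded before Lemma \ref{HSlemma} one has $\|\omega_0^{-1/2}\cos[(t-s)\omega]J_s\|+\|\omega_0^{1/2}\omega^{-1}\sin[(t-s)\omega]J_s\|\lesssim\|\omega^{-1/2}J_s\|$, so Assumption \ref{assext}(b) makes the $s$-integral in $g_t$ finite.

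First I would apply Lemma \ref{Bogo}(2) with $v_l=j_t\in\mathcal H_C$ and $\mathscr{U}_l=e^{-i\psi(j_t)}$, for which $l(u)=i(j_t,Qu)$, to obtain on $\tilde D$
\[ e^{i\psi(j_t)}\,\psi(u)\,e^{-i\psi(j_t)}=\psi(u)+i(j_t,Qu),\qquad u\in\mathcal H. \]
Next, Lemma \ref{Bogo}(1) supplies the unitary $\mathscr{U}_t$ with $\mathscr{U}_t^*e^{i\psi(v)}\mathscr{U}_t=e^{i\psi(U(t)^*v)}$ for $v\in\mathcal H_C$; replacing $v$ by $sv$ ($s\in\mathbb R$) and reading off the self-adjoint generators gives $\mathscr{U}_t^*\psi(v)\mathscr{U}_t=\psi(U(t)^*v)$ on $\tilde D$, first for $v\in\mathcal H_C$ and then for every $v\in\mathcal H$ via the decomposition $\psi(v)=\tfrac{1}{2}\psi(v+Cv)+\tfrac{i}{2}\psi(i(v-Cv))$ together with $CU(t)^*=U(t)^*C$ (a consequence of \eqref{symplectic}). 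Since $\mathscr{U}(t)^*=\mathscr{U}_t^*e^{i\psi(j_t)}$ and $\phi_0(f)=\psi(u_0)$, $\pi_0(g)=\psi(v_0)$, combining the two displays yields
\[ \phi(t,f)=\psi(U(t)^*u_0)+i(j_t,Qu_0),\qquad \pi(t,g)=\psi(U(t)^*v_0)+i(j_t,Qv_0). \]

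It then remains to verify the algebraic identities $U(t)^*u_0=u(t)$, $U(t)^*v_0=v(t)$ and that the two c-number terms $i(j_t,Qu_0)$, $i(j_t,Qv_0)$ coincide with those in the statement. For the one-particle vectors I would use the block form of $U(t)^*$ (whose $(+,-)$ block is $U_{-+}(t)^*$, and so on) together with $U_{--}=\overline{U_{++}}$, $U_{+-}=\overline{U_{-+}}$: since $u_0$ has two equal components and $v_0$ two opposite ones, the entries of $U(t)^*u_0$ are built from $U_{++}(t)+U_{-+}(t)$ and its conjugate, while those of $U(t)^*v_0$ are built from $U_{++}(t)-U_{-+}(t)$ and its conjugate. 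Inserting the explicit formulas for $U_{++}(t)$, $U_{-+}(t)$, the cross terms cancel and one is left with, for instance,
\[ U_{++}(t)+U_{-+}(t)=\omega_0^{1/2}\cos(t\omega)\omega_0^{-1/2}-i\,\omega_0^{1/2}\omega^{-1/2}\sin(t\omega)\omega^{-1/2}\omega_0^{1/2}, \]
and a parallel formula for $U_{++}(t)-U_{-+}(t)$; taking adjoints, applying to $\omega_0^{\mp1/2}\bar f$, and using $\omega^{-1/2}\sin(t\omega)\omega^{-1/2}=\sin(t\omega)\omega^{-1}$ and $\omega^{1/2}\sin(t\omega)\omega^{1/2}=\omega\sin(t\omega)$ reproduces $u(t)$ and $v(t)$. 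For the c-numbers, taking $f$ real and using that $V$ --- hence $\omega$ and $\omega_0$ --- preserves real functions, $g_t$ splits into $\mathrm{Re}\,g_t=-\tfrac{1}{\sqrt{2}}\int_0^t\omega_0^{-1/2}\cos[(t-s)\omega]J_s\,ds$ and $\mathrm{Im}\,g_t=\tfrac{1}{\sqrt{2}}\int_0^t\omega_0^{1/2}\omega^{-1}\sin[(t-s)\omega]J_s\,ds$; a short computation expresses $i(j_t,Qu_0)$ through $\mathrm{Im}\,g_t$ and $i(j_t,Qv_0)$ through $\mathrm{Re}\,g_t$ (the extra $i$ and the swap $\omega_0^{-1/2}\leftrightarrow\omega_0^{1/2}$ in $v_0$ accounting for this), and then $\omega_0^{1/2}\omega_0^{-1/2}=1$ collapses these to the claimed integrals; the general $f$ follows by splitting into real and imaginary parts. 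Every composite operator appearing in these manipulations is bounded, again by the remarks preceding Lemma \ref{HSlemma}. I expect the main obstacle to be organizational rather than conceptual: commuting the fractional powers $\omega_0^{\pm1/2}$ past the functions $\cos(t\omega)$, $\sin(t\omega)$ of the non-commuting operator $\omega$ while keeping domains under control, and checking that all the cross terms in $U_{++}(t)\pm U_{-+}(t)$ --- and in the analogous combinations arising for $\pi_0$ --- really cancel.
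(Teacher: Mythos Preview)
Your proposal is correct. The paper does not actually supply a proof of Proposition \ref{repKG}, calling it ``standard''; your approach --- peeling off the two layers of $\mathscr{U}(t)=e^{-i\psi(j_t)}\mathscr{U}_t$ via the two parts of Lemma \ref{Bogo} and then identifying $U(t)^*u_0=u(t)$, $U(t)^*v_0=v(t)$ and the c-numbers $i(j_t,Qu_0)$, $i(j_t,Qv_0)$ by direct computation --- is precisely the natural way to fill in the details and uses exactly the infrastructure (the block form of $U(t)$, the definition of $g_t$, the boundedness of $\omega_0^{\pm1/2}\omega^{\mp1/2}$) that the paper sets up for this purpose.
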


\begin{proposition}
\label{opdis}
Suppose that Assumptions \ref{asspot} 
and \ref{assext} hold.
Let $\Psi \in D(N_{\rm f}^{1/2})$
and $f \in H^{-1/2}(\mathbb{R}^3) \cap H^{3/2}(\mathbb{R}^3)$. 
Then
$\phi(0,f) = \phi_0(f)$, $\pi(0,f) = \pi_0(f)$ and
\begin{align*} 
& \frac{d}{dt}\phi(t,f)\Psi = \pi(t,f)\Psi, \\
& \frac{d^2}{dt^2}\phi(t,f) \Psi+ \phi(t, (m^2 -\Delta)f) \Psi = (J_t, f).  
\end{align*}
\end{proposition}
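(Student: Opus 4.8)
The plan is to derive all three assertions from the explicit Bogoliubov representation of Proposition~\ref{repKG}, reducing them to (a) a differentiation identity for the $\mathcal{H}$-valued propagated vectors $u(t),v(t)$, (b) a Duhamel identity for the c-number source term, and (c) the standard Fock-space bound that lets one pass $\tfrac{d}{dt}$ through the field operator $\psi(\cdot)$. For the initial data, at $t=0$ we have $U(0)=I$ and $g_0=0$ (both defining integrals run over $[0,0]$), hence $\mathscr{U}_0=I$ and $\mathscr{U}(0)=I$, so $\phi(0,f)=\phi_0(f)$ and $\pi(0,f)=\pi_0(f)$; equivalently, the formulas of Proposition~\ref{repKG} collapse at $t=0$ to $u(0)=u_0$, $v(0)=v_0$ and vanishing source integrals. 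Since $\phi(t,\cdot)$ and $\pi(t,\cdot)$ are $\mathbb{C}$-linear, it suffices to treat real $f$, for which $u(t),v(t)\in\mathcal{H}_C$ and $\phi(t,f),\pi(t,f)$ are self-adjoint.

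On $\Psi\in D(N_{\rm f}^{1/2})$ one has $\|\psi(w)\Psi\|\le\sqrt{2}\,\|w\|_{\mathcal{H}}\,\|(N_{\rm f}+1)^{1/2}\Psi\|$, which follows from \eqref{913} and $\|c(w_+)\Psi\|\le\|w_+\|\,\|N_{\rm f}^{1/2}\Psi\|$ together with $\psi(w)=c(w_+)+c^*(\bar w_-)$; in particular $D(N_{\rm f}^{1/2})\subset D(\phi(t,f))\cap D(\pi(t,f))$ and $w\mapsto\psi(w)\Psi$ is a bounded $\mathbb{R}$-linear map $\mathcal{H}_C\to\Gamma(\mathfrak{h})$. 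Hence, once $t\mapsto u(t)\in\mathcal{H}_C$ is known to be norm-differentiable with derivative $\dot u(t)$, the map $t\mapsto\psi(u(t))\Psi$ is differentiable with $\tfrac{d}{dt}\psi(u(t))\Psi=\psi(\dot u(t))\Psi$, and similarly for the second derivative. For the c-number part, which is the Duhamel term $\big(\int_0^t\omega^{-1}\sin[(t-s)\omega]J_s\,ds,\,f\big)$, Assumption~\ref{assext}(b) makes $s\mapsto\omega^{-1}\sin[(t-s)\omega]J_s$ a norm-continuous $\mathfrak{h}$-valued function with $s$-integrable norm (write $\omega^{-1}=\omega^{-1/2}\cdot\omega^{-1/2}$ and use boundedness of $\omega^{-1/2}$ and integrability of $\|\omega^{-1/2}J_s\|$), so one may differentiate under the integral sign, the boundary term at $s=t$ vanishing since $\sin[0]=0$. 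This reduces everything to algebraic identities for $u(t),v(t)$ and for the scalar Duhamel integrals paired with $f$.

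Writing $u(t;g)$ for the vector $u(t)$ with $f$ replaced by $g$, and using $\tfrac{d}{dt}\cos(t\omega)=-\omega\sin(t\omega)$, $\tfrac{d}{dt}\sin(t\omega)=\omega\cos(t\omega)$ in the functional calculus, one checks entrywise that $\dot u(t)=v(t)$; together with $\tfrac{d}{dt}\int_0^t\omega^{-1}\sin[(t-s)\omega]J_s\,ds=\int_0^t\cos[(t-s)\omega]J_s\,ds$ this gives $\tfrac{d}{dt}\phi(t,f)\Psi=\pi(t,f)\Psi$. Differentiating once more, the decisive point is that the Klein--Gordon operator $m^2+h=m^2-\Delta+V=\omega^2$ commutes with $\cos(t\omega),\sin(t\omega)$ and $\omega^{-1}$, whence $\ddot u(t)=-\,u\big(t;(m^2+h)f\big)$ and the $\psi$-parts cancel in $\tfrac{d^2}{dt^2}\phi(t,f)+\phi(t,(m^2+h)f)$. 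For the scalar part, with $K(\tau)=\omega^{-1}\sin(\tau\omega)$ one has $K(0)=0$, $K'(\tau)=\cos(\tau\omega)$, $K''(\tau)=-\omega^2K(\tau)$, so $\partial_t^2\int_0^tK(t-s)J_s\,ds=J_t-\omega^2\int_0^tK(t-s)J_s\,ds$; pairing with $f$ and using that $\omega^2$ is self-adjoint and commutes with complex conjugation ($V$ real) turns $\big(\omega^2\int_0^tK(t-s)J_s\,ds,\,f\big)$ into $\big(\int_0^tK(t-s)J_s\,ds,\,(m^2+h)f\big)$, which is exactly the source term of $\phi(t,(m^2+h)f)$ and cancels, leaving $(J_t,f)$. (Note that the operator appearing in the equation is the full $m^2+h=m^2-\Delta+V$, as it must be: only $\omega^2$ commutes with the propagators $\cos(t\omega),\sin(t\omega)$.)

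The one genuinely analytic step — and the main obstacle — is the norm-$C^1$ dependence of $t\mapsto u(t),v(t)\in\mathcal{H}_C$ (hence norm-$C^2$ for $u$). This reduces to norm-continuity and one-time norm-differentiability of $t\mapsto\omega_0^{\theta}\cos(t\omega)\bar f$ and $t\mapsto\omega_0^{\theta}\sin(t\omega)\bar f$ for $\theta=\pm\tfrac12$, which is where the hypothesis $f\in H^{-1/2}\cap H^{3/2}$ enters: $f\in H^{-1/2}$ already puts $u(t),v(t)$ in $\mathcal{H}$ (recall $\omega_0^{-1/2}$ maps $H^{-1/2}$ into $\mathfrak{h}$ and $\omega_0^{\pm1/2}\omega^{\mp1/2}$ is bounded), while $f\in H^{3/2}$ is needed for $\ddot u(t)$, which contains the term $\omega_0^{-1/2}\omega^2\cos(t\omega)\bar f$ — of the order of $\omega^{3/2}\cos(t\omega)\bar f$ — lying in $\mathfrak{h}$ precisely when $\bar f\in H^{3/2}$. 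These norm-continuity statements are routine consequences of the spectral theorem and dominated convergence; combined with the $\mathfrak{h}$-valued continuity of the Duhamel integrand supplied by Assumption~\ref{assext}(b), the algebraic identities above complete the proof.
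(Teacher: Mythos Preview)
Your proof is correct and is precisely the standard argument the paper alludes to: Proposition~\ref{opdis} is not proved in the paper but is announced, together with Proposition~\ref{repKG}, by the sentence ``The following propositions are standard.'' Your derivation --- differentiating the explicit formulas of Proposition~\ref{repKG}, using the $(N_{\rm f}+1)^{1/2}$-bound to pass $\tfrac{d}{dt}$ through $\psi(\cdot)$, and verifying the Duhamel identities for the c-number term --- is exactly what ``standard'' means here, and your regularity bookkeeping ($f\in H^{3/2}$ needed for $\ddot u(t)\in\mathcal{H}$) is on point.

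Your parenthetical remark is also correct: the cancellation $\ddot u(t)=-u(t;\omega^2 f)$ only works with $\omega^2=m^2-\Delta+V$, so the printed $(m^2-\Delta)f$ in the second displayed equation is a typo for $(m^2-\Delta+V)f$, consistent with \eqref{KG}.
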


\begin{remark}
\label{rem2.1}
Let $\Psi \in D(N_{\rm f}^{1/2})$.
Then
$\mathcal{S}(\mathbb{R}^d) \ni f \mapsto (\Psi,\phi(t,f) \Psi)$
and $\mathscr{S}(\mathbb{R}^d) \ni g \mapsto (\Psi,\pi(t,g) \Psi)$ are 
tempered distributions and
symbols $\varphi_\Psi(t,x)$ 
and $\varpi_\Psi(t,x)$,
defined formally as  
\[ \int \varphi_\Psi(t,x) f(x) dx = (\Psi,\pi(t,f)\Psi)
\quad \mbox{and} \quad
\int \varpi_\Psi(t,x) f(x) dx = (\Psi, \pi(t,f)\Psi),\] 
satisfy
\[ \varphi_\Psi(t,x) \in H^{1/2}(\mathbb{R}^3), \quad
	\varpi_\Psi(t,x) \in H^{-1/2}(\mathbb{R}^3). \]
We denote by $\Gamma_0$ the linear hull of 
\[ \{\Omega\} \cup \{ c^*(f_1) \cdots c^*(f_n)\Omega \mid 
	f_j \in D(\omega_0), ~j=1, \cdots, n, ~n \geq1 \}. \]
Note that $\Gamma_0$ is dense in $\Gamma(\mathfrak{h})$
and that
$\phi(t,f)$ and $\pi(t,f)$ ($f \in \mathcal{S}(\mathbb{R}^3)$)
leave $\Gamma_0$ invariant.
Suppose that $J_t \in H^{1/2}(\mathbb{R}^3)$ holds.
If $\Psi$ is a vector belonging to $\Gamma_0$,
then $\varphi_\Psi(t) \in H^{3/2}(\mathbb{R}^3)$,
$\varpi_\Psi(t) \in H^{1/2}(\mathbb{R}^3)$ and
\begin{align*}
\begin{bmatrix} \varphi_\Psi(t) \\ \varpi_\Psi(t) \end{bmatrix}
& = \begin{bmatrix} \cos (t\omega) & \omega^{-1}\sin (t\omega) \\
		-\omega \sin(t\omega) & \cos(t\omega) \end{bmatrix}
\begin{bmatrix} \varphi_\Psi(0) \\ 
	\varpi_\Psi(0)  \end{bmatrix} \\
& \quad 
+ \begin{bmatrix} \int_0^t ds \sin[(t-s)\omega] \omega^{-1}J_s \\ 
	 -\int_0^t ds \cos[(t-s)\omega] J_s\end{bmatrix},
\end{align*}
which gives a solution of 
\begin{align}
\label{CKGE}
i \frac{d}{dt} \begin{bmatrix} \varphi(t) \\ \varpi(t) \end{bmatrix}
& = \begin{bmatrix} 0 & i \\ -i\omega^2 & 0 \end{bmatrix}
		\begin{bmatrix} \varphi(t) \\ \varpi(t) \end{bmatrix}
		+ \begin{bmatrix} 0 \\ J_t \end{bmatrix}
\end{align}
with the initial value 
\begin{align}
\label{IVCKGE} 
\begin{bmatrix} \varphi(0) \\ \varpi(0) \end{bmatrix}
= \begin{bmatrix} \varphi_\Psi(0) \\ 
	\varpi_\Psi(0)  \end{bmatrix}.
\end{align}
\end{remark}
We note that $\Psi \in \Gamma_\infty$ 
belongs to $D(N^{1/2})$
and is an analytic vector of $\phi(t,f)$ and $\pi(t,f)$,
i.e., for any $0< t \leq t_0$,
\[ \sum_{n=0}^\infty \frac{t^n}{n!} \|\phi(t,f)^n \Psi\| < \infty,
\quad \mbox{and} \quad 
\sum_{n=0}^\infty \frac{t^n}{n!}\|\pi(t,f)^n \Psi\| < \infty \]
with some $t_0>0$.

\begin{remark}
\label{rem2.2}
Suppose that $J_t \in H^{1/2}(\mathbb{R}^3)$ holds.
Then the pair of field operators $\phi(t,f)$ and $\pi(t,f)$ is unique in the following sense:
If there exist a pair of field operators
$\phi^\prime(t,f) = \mathscr{U}^\prime(t)^* \phi_0(f)\mathscr{U}^\prime(t)$ and $\pi^\prime(t,f) = \mathscr{U}^\prime(t)^* \pi_0(f)\mathscr{U}^\prime(t)$
with a family of unitary operators $\mathscr{U}^\prime(t)$
satisfying the following conditions (1) - (4),
then $\phi(t,f) = \phi^\prime(t,f)$ and $\pi(t,f) = \pi^\prime(t,f)$:
\begin{itemize}
\item[(1)] $\mathscr{U}^\prime(0)=I$.
\item[(2)] $\phi^\prime(t,f)$ and $\pi^\prime(t,f)$ leave $\Gamma_0$ invariant.
\item[(3)] The vector $\Psi \in \Gamma_0$
is an analytic vector of $\phi^\prime(t,f)$ and $\pi^\prime(t,f)$.
\item[(4)] The distributional kernels 
$\varphi_\Psi^\prime(t)$ and $\varpi_\Psi^\prime(t)$ of 
\begin{align*} 
& \varphi_\Psi^\prime(t,f) 
= (\Psi, \phi^\prime(t,f) \Psi), \\
& \varpi_\Psi^\prime(t,f) 
= (\Psi, \pi^\prime(t,f) \Psi)
\end{align*}
with $\Psi \in \Gamma_0$ 
satisfy the equation \eqref{CKGE}
with the initial value \eqref{IVCKGE}.
\end{itemize}
The uniqueness can be proved as follows.
Using the conditions (1), (4) and the uniqueness of 
the solution of \eqref{CKGE},
we infer
\[ \phi(t,f) = \phi^\prime(t,f), \quad \pi(t,g) = \pi^\prime(t,g) \]
on $\Gamma_0$.
By the conditions (2) and (3), 
we have
\begin{align*} 
e^{it\phi(t,f)}\Psi 
= \sum_{n=0}^\infty \frac{(it)^n}{n_!} \phi(t,f)^n \Psi
= \sum_{n=0}^\infty \frac{(it)^n}{n_!} \phi^\prime(t,f)^n \Psi 
= e^{i\phi^\prime(t,f)}\Psi
\end{align*}
for $\Psi \in \Gamma_0$ and sufficiently small $t>0$.
Since $\Gamma_0$ is dense,
we observe that  $e^{i\psi(t,f)} = e^{i\phi^\prime(t,f)}$
for real $f \in H^{-1/2}(\mathbb{R}^3)$.
By the uniqueness of the generator,
we have the operator equality $\psi(t,f) = \phi^\prime(t,f)$ 
for  real $f \in H^{-1/2}(\mathbb{R}^3)$.
Note that $\phi(t,f)$ and $\phi^\prime(t,f)$ 
are unitary equivalent to $\phi_0(f)$.
By the similar argument as in the proof of the closedness of $\phi_0(f)$,
one can prove that the following operator equation holds
for non real $f \in H^{-1/2}(\mathbb{R}^3)$:
\begin{align*} 
\phi(t,f) 
& = \phi(t,{\rm Re}f) + i\phi(t,{\rm Im}f) \\
& = \phi^\prime(t,{\rm Re}f) + i\phi^\prime(t,{\rm Im}f) 
= \phi^\prime(t,f)
\end{align*}
$\pi(t,f) = \pi^\prime(t,f)$ is proved similarly.
\end{remark}

\subsection{Wave operators}

Let
\[ \mathscr{U}_0(t) = e^{-itH_{\rm f}} \]
and
\[ U_0(t) = \begin{bmatrix} e^{-it\omega_0} & 0 \\ 0 & e^{it\omega_0} \end{bmatrix}. \]
One observe that
\[ \mathscr{W}(t) = \mathscr{U}(t)^* \mathscr{U}_0(t) \]
satisfies
\begin{align}
\mathscr{W}(t) e^{i\psi(v)}\mathscr{W}(t)^*
& = \mathscr{U}_t^* e^{i(\psi(U_0(t)v)+i(j_t,QU_0(t)v))} \mathscr{U}_t \\
& \label{wt}
= e^{i\psi((U(t)^*U_0(t)v)+i(U_0(t)^*j_t,Qv))} 
\end{align}
for $v \in \mathcal{H}_C$.
By the Stone theorem, we have
\begin{align}
\mathscr{W}(t) \psi(v)\mathscr{W}(t)^*
= \psi(U(t)^*U_0(t)v) + i(U_0(t)^* j_t, Qv).
\end{align}

\begin{lemma}
\label{waveop}
Suppose that Assumptions \ref{asspot} and \ref{assscat} hold.
Then:
\[ \mbox{{\rm s}-} \lim_{t \to \pm \infty}U(t)^*U_0(t) = W_{\pm}, \]
where
\[ W_{\pm} 
= 
\begin{bmatrix}
(W_\pm)_{++} & (W_\pm)_{+-} \\
(W_\pm)_{-+} & (W_\pm)_{--} 
\end{bmatrix} \]
with
\begin{align*} 
(W_\pm)_{++}
& = \overline{(W_\pm)_{--}} \\
& = \frac{1}{2}(\omega_0^{-1/2} w_\pm \omega_0^{+1/2}
	+ \omega_0^{+1/2} w_\pm \omega_0^{-1/2}), \\
(W_\pm)_{-+}
& = \overline{(W_\pm)_{+-}} \\
& = \frac{1}{2}(\omega_0^{-1/2} w_\pm \omega_0^{+1/2}
	- \omega_0^{+1/2} w_\pm \omega_0^{-1/2}).
\end{align*} 
\end{lemma}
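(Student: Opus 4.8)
The plan is to realize $U(t)$ and $U_0(t)$ as conjugates, by one fixed ``dictionary'' operator built from $\omega_0$, of the classical Klein--Gordon propagators with frequencies $\omega$ and $\omega_0$, and then to reduce the statement to the existence of the associated classical wave operator, which follows from the limit $\mbox{s-}\lim_{t\to\pm\infty}e^{it\omega}e^{-it\omega_0}=w_\pm$ recorded above together with the Hilbert--Schmidt bounds of Lemma \ref{HSlemma}.

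For a strictly positive self-adjoint operator $a$ on $\mathfrak h$ consider the classical Klein--Gordon propagator and the ``dictionary''
\[ \mathcal E_a(t)=\begin{bmatrix}\cos(ta) & a^{-1}\sin(ta)\\ -a\sin(ta) & \cos(ta)\end{bmatrix},\qquad
\mathcal R_a=\begin{bmatrix}a^{1/2} & ia^{-1/2}\\ a^{1/2} & -ia^{-1/2}\end{bmatrix}. \]
A short calculation gives $\mathcal R_a^*\mathcal R_a=\mathcal G_a:=2\,\mathrm{diag}(a,a^{-1})$, $\mathcal R_a=\mathcal T\mathcal G_a^{1/2}$ with $\mathcal T=\tfrac{1}{\sqrt2}\begin{bmatrix}1 & i\\ 1 & -i\end{bmatrix}$ a fixed unitary on $\mathcal H$ (independent of $a$ and of $t$), and $\mathcal R_a\mathcal E_a(t)\mathcal R_a^{-1}=\mathrm{diag}(e^{-ita},e^{ita})$. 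Comparing block entries -- only the computations already displayed for $U_{++}(t)$ and $U_{-+}(t)$ are needed, the other two entries following from $U_{--}(t)=\overline{U_{++}(t)}$ and $U_{+-}(t)=\overline{U_{-+}(t)}$ -- one obtains the operator identities $U(t)=\mathcal R_{\omega_0}\mathcal E_\omega(t)\mathcal R_{\omega_0}^{-1}$ and $U_0(t)=\mathcal R_{\omega_0}\mathcal E_{\omega_0}(t)\mathcal R_{\omega_0}^{-1}=\mathrm{diag}(e^{-it\omega_0},e^{it\omega_0})$. The unbounded, non-commuting entries $\omega_0^{\pm1/2},\omega^{\pm1/2}$ cause no difficulty because the products $\omega_0^{\pm1/2}\omega^{\mp1/2}$ are bounded and the operators involved are bounded, so the identities may be verified on a core (e.g.\ vectors with compactly supported Fourier transform) and then extended by continuity. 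Write $\Lambda(t)=\mathrm{diag}(e^{it\omega},e^{-it\omega})$ and $\Lambda_0(t)=\mathrm{diag}(e^{-it\omega_0},e^{it\omega_0})$. Substituting $\mathcal R_{\omega_0}=\mathcal T\mathcal G_{\omega_0}^{1/2}$, $\mathcal R_\omega=\mathcal T\mathcal G_\omega^{1/2}$, $\mathcal R_{\omega_0}^*\mathcal R_{\omega_0}=\mathcal G_{\omega_0}$ and cancelling, one arrives at the bounded identity
\[ U(t)^*U_0(t)=\mathcal T B\mathcal T^*\,\Lambda(t)\,\mathcal T B^{-1}\mathcal T^*\,\Lambda_0(t), \]
where $B:=\mathcal G_{\omega_0}^{-1/2}\mathcal G_\omega^{1/2}=\mathrm{diag}(\omega_0^{-1/2}\omega^{1/2},\,\omega_0^{1/2}\omega^{-1/2})$. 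By the discussion preceding Lemma \ref{HSlemma}, $B$ and $B^{-1}$ are bounded, and by Lemma \ref{HSlemma} the operator $K:=\mathcal T(B^{-1}-1)\mathcal T^*$ is Hilbert--Schmidt, in particular compact.

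Now $\Lambda(t)\,\mathcal T B^{-1}\mathcal T^*\,\Lambda_0(t)=\Lambda(t)\Lambda_0(t)+\Lambda(t)K\Lambda_0(t)$, and $\Lambda(t)\Lambda_0(t)=\mathrm{diag}\bigl(e^{it\omega}e^{-it\omega_0},\,e^{-it\omega}e^{it\omega_0}\bigr)$. As $t\to+\infty$ the first summand converges strongly to $\mathrm{diag}(w_+,w_-)$: the $(+,+)$ entry is the given limit $\mbox{s-}\lim e^{it\omega}e^{-it\omega_0}=w_+$, while the $(-,-)$ entry equals $e^{is\omega}e^{-is\omega_0}$ with $s=-t\to-\infty$, hence tends to $w_-$. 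Since the spectrum of $\omega_0$ is purely absolutely continuous, $\Lambda_0(t)v\rightharpoonup0$ weakly for every $v\in\mathcal H$, so that $\Lambda(t)K\Lambda_0(t)\to0$ strongly ($K$ being compact). Multiplying on the left by the fixed bounded operator $\mathcal T B\mathcal T^*$ therefore yields
\[ \mbox{s-}\lim_{t\to+\infty}U(t)^*U_0(t)=\mathcal T B\mathcal T^*\,\mathrm{diag}(w_+,w_-). \]
A $2\times2$ computation gives $\mathcal T B\mathcal T^*=\begin{bmatrix}P_+ & P_-\\ P_- & P_+\end{bmatrix}$ with $P_\pm=\tfrac12(\omega_0^{-1/2}\omega^{1/2}\pm\omega_0^{1/2}\omega^{-1/2})$. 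Using the intertwining relations $\omega^{\pm1/2}w_+=w_+\omega_0^{\pm1/2}$ (consequences of $hw_+=w_+h_0$ and the functional calculus), $P_+w_+=\tfrac12(\omega_0^{-1/2}w_+\omega_0^{1/2}+\omega_0^{1/2}w_+\omega_0^{-1/2})$ and $P_-w_+=\tfrac12(\omega_0^{-1/2}w_+\omega_0^{1/2}-\omega_0^{1/2}w_+\omega_0^{-1/2})$; likewise $P_\pm w_-$ reproduce the $(+,-)$ and $(-,-)$ blocks, which are the complex conjugates of the $(+,+)$ and $(-,+)$ ones since $h$ is real, so that $\overline{w_+}=w_-$. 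Thus $\mbox{s-}\lim_{t\to+\infty}U(t)^*U_0(t)$ equals the matrix $W_+$ of the statement. The case $t\to-\infty$ is identical, with $w_+$ and $w_-$ interchanged, and produces $W_-$.

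The main obstacle -- and the only place at which $V\in H^2(\mathbb R^3)$ (through Lemma \ref{HSlemma}) and Assumption \ref{assscat} really enter -- is this Hilbert--Schmidt cancellation: were $K$ not compact, the cross term $\Lambda(t)K\Lambda_0(t)$ would converge only weakly and $U(t)^*U_0(t)$ would possess no strong limit. A secondary, purely technical, issue is the domain bookkeeping behind the identity $U(t)=\mathcal R_{\omega_0}\mathcal E_\omega(t)\mathcal R_{\omega_0}^{-1}$, where the matrix entries are unbounded and fail to commute with $\cos(t\omega)$, $\sin(t\omega)$; this is handled exactly as in the verification of the displayed formulas for $U_{\pm+}(t)$, using that $\omega_0^{\pm1/2}\omega^{\mp1/2}$ are bounded.
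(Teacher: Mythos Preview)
Your proof is correct and takes a genuinely different route from the paper's. The paper's argument (Appendix A.3) is a direct block-by-block computation: it writes out the entries of $U(t)^*U_0(t)$ as sums of eight terms of the form $I_{(a,b,c)}(\pm t,\mp t)=\omega_0^a e^{\pm it\omega}\omega^b e^{\mp it\omega_0}\omega_0^c$ and evaluates each strong limit separately, invoking Lemma~\ref{HSlemma} repeatedly in the spirit of the proof of Lemma~\ref{dua}. Your approach instead diagonalizes both $U(t)$ and $U_0(t)$ through the fixed unitary $\mathcal T$ and the classical Klein--Gordon propagator, reducing the whole computation to the single identity $U(t)^*U_0(t)=\mathcal T B\mathcal T^*\,\Lambda(t)\,\mathcal T B^{-1}\mathcal T^*\,\Lambda_0(t)$; the Hilbert--Schmidt input enters only once, via the compactness of $B^{-1}-1$, which kills the cross term by the weak convergence $\Lambda_0(t)\rightharpoonup 0$. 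This makes transparent \emph{why} the limit exists and has the stated form, and it isolates exactly which part of Lemma~\ref{HSlemma} is needed. The paper's computation, by contrast, is elementary and avoids the algebraic factorisation altogether, at the cost of handling many terms individually; it also does not appeal explicitly to the intertwining $\omega^{\pm1/2}w_\pm=w_\pm\omega_0^{\pm1/2}$ or to $\overline{w_+}=w_-$, which you use to identify the blocks in the final step.
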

\begin{proof}
See Appendix A.3.
\end{proof}

\begin{lemma}
\label{dua}
Suppose that Assumptions \ref{asspot}, \ref{assscat} and \ref{assext} hold.
Then:
\begin{align*} 
\lim_{t \to \pm \infty} U_0(t)^*j_t = j_\pm,
\end{align*}
where 
\begin{align*}
j_\pm = \begin{bmatrix} g_\pm \\ \bar g_\pm \end{bmatrix}
\quad 
\mbox{and}
\quad
g_\pm = -\frac{1}{\sqrt{2}} \int_0^{\pm \infty} ds 
	\omega_0^{-1/2} e^{is\omega_0} w_\pm^* J_s.
\end{align*}
\end{lemma}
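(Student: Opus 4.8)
The plan is to reduce the claim to the convergence of the first component, $e^{it\omega_0}g_t\to g_\pm$ in $\mathfrak h$, and to prove that convergence by dominated convergence after separating off a ``wave--operator part'' (which produces $g_\pm$) from a Hilbert--Schmidt remainder (which oscillates away). Since $j_t$ has components $g_t$ and $\overline{g_t}$ and $U_0(t)^*={\rm diag}(e^{it\omega_0},e^{-it\omega_0})$, the first component of $U_0(t)^*j_t$ is $e^{it\omega_0}g_t$ and the second is $e^{-it\omega_0}\overline{g_t}$; as $\omega_0=(h_0+m^2)^{1/2}$ commutes with complex conjugation, $e^{-it\omega_0}\overline{g_t}=\overline{e^{it\omega_0}g_t}$, so it suffices to prove $e^{it\omega_0}g_t\to g_\pm$ and then pass to the conjugate. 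Set $\tilde J_s:=\omega^{-1/2}J_s$; Assumption \ref{assext}(a) gives $\tilde J_s\in\mathfrak h$, and (b) gives $\int_{\mathbb R}\|\tilde J_s\|\,ds<\infty$. From $hw_\pm=w_\pm h_0$ one has $w_\pm^*f(h)=f(h_0)w_\pm^*$ for bounded Borel $f$; in particular $w_\pm^*\omega^{-1/2}=\omega_0^{-1/2}w_\pm^*$ and $w_\pm^*e^{is\omega}=e^{is\omega_0}w_\pm^*$, whence $\|\omega_0^{-1/2}e^{is\omega_0}w_\pm^*J_s\|=\|w_\pm^*\tilde J_s\|\le\|\tilde J_s\|$, so $g_\pm$ is a well-defined Bochner integral.

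First I would rewrite $g_t$ so that $J_s$ enters only through $\tilde J_s$. Using that $\omega_0^{\pm1/2}\omega^{\mp1/2}$ are bounded and that $\omega^{-1/2}$ commutes with $\cos[(t-s)\omega]$ and $\sin[(t-s)\omega]$, a direct computation gives
\[
g_t=\frac{1}{2\sqrt2}\int_0^t ds\,\Bigl((P'-P)\,e^{i(t-s)\omega}-(P+P')\,e^{-i(t-s)\omega}\Bigr)\tilde J_s,\qquad P:=\omega_0^{-1/2}\omega^{1/2},\quad P':=\omega_0^{1/2}\omega^{-1/2}.
\]
By Lemma \ref{HSlemma}, $P-1$ and $P'-1$ are Hilbert--Schmidt; hence $M:=P'-P=(P'-1)-(P-1)$ is Hilbert--Schmidt and $P+P'=2+N$ with $N:=(P-1)+(P'-1)$ Hilbert--Schmidt. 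Therefore $e^{it\omega_0}g_t={\rm I}_t-{\rm II}_t-{\rm III}_t$, where ${\rm I}_t=\tfrac1{2\sqrt2}\int_0^t e^{it\omega_0}Me^{i(t-s)\omega}\tilde J_s\,ds$, ${\rm II}_t=\tfrac1{\sqrt2}\int_0^t e^{it\omega_0}e^{-i(t-s)\omega}\tilde J_s\,ds$, and ${\rm III}_t=\tfrac1{2\sqrt2}\int_0^t e^{it\omega_0}Ne^{-i(t-s)\omega}\tilde J_s\,ds$.

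For ${\rm II}_t$ I would write $e^{it\omega_0}e^{-i(t-s)\omega}\tilde J_s=\bigl(e^{it\omega_0}e^{-it\omega}\bigr)e^{is\omega}\tilde J_s$; by \eqref{1738} (adjoints of strongly convergent unitaries with unitary limit converge strongly) $e^{it\omega_0}e^{-it\omega}\to w_\pm^*$ strongly, so the integrand converges pointwise in $s$ to $w_\pm^*e^{is\omega}\tilde J_s=\omega_0^{-1/2}e^{is\omega_0}w_\pm^*J_s$ and is dominated by $\|\tilde J_s\|\in L^1$; splitting $\int_0^t=\int_0^{\pm\infty}-\int_t^{\pm\infty}$ and bounding $\bigl\|\int_t^{\pm\infty}\cdots\bigr\|\le\int_t^{\pm\infty}\|\tilde J_s\|\,ds\to0$, dominated convergence yields ${\rm II}_t\to-g_\pm$, i.e.\ $-{\rm II}_t\to g_\pm$. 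For ${\rm I}_t$ and ${\rm III}_t$ I would invoke the fact that, since $\omega$ has purely absolutely continuous spectrum (Assumption \ref{assscat}), for every compact operator $K$ on $\mathfrak h$ one has $e^{it\omega_0}Ke^{\pm it\omega}\to0$ strongly as $t\to\pm\infty$: approximating $K$ in operator norm by finite--rank operators $\phi\mapsto(g,\phi)f$ ($f,g\in\mathfrak h$), the action becomes $(g,e^{\pm it\omega}\phi)\,e^{it\omega_0}f$, which tends to $0$ by the Riemann--Lebesgue lemma. Writing $e^{it\omega_0}Me^{i(t-s)\omega}\tilde J_s=\bigl(e^{it\omega_0}Me^{it\omega}\bigr)e^{-is\omega}\tilde J_s$ and $e^{it\omega_0}Ne^{-i(t-s)\omega}\tilde J_s=\bigl(e^{it\omega_0}Ne^{-it\omega}\bigr)e^{is\omega}\tilde J_s$, the integrands converge to $0$ pointwise in $s$ and are dominated by $C\|\tilde J_s\|\in L^1$, so ${\rm I}_t\to0$ and ${\rm III}_t\to0$. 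Combining, $e^{it\omega_0}g_t\to g_\pm$; passing to the conjugate gives $e^{-it\omega_0}\overline{g_t}\to\overline{g_\pm}=\bar g_\pm$, hence $U_0(t)^*j_t\to j_\pm$.

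The algebra leading to the displayed formula for $g_t$ is routine. The point requiring care is that $J_s$ must remain attached to $\omega^{-1/2}$ throughout, so that the dominating function is precisely the $L^1$ function furnished by Assumption \ref{assext}(b); this is exactly why one splits $P+P'=2+(\text{Hilbert--Schmidt})$ instead of handling $P$ and $P'$ individually (the quantities $\omega^{\pm1/2}J_s$ need not be $s$--integrable). I expect the main obstacle to be the strong null--convergence $e^{it\omega_0}Ke^{\pm it\omega}\to0$ for compact $K$ and its correct bookkeeping inside dominated convergence with the moving endpoint of integration. Structurally the argument mirrors the proof of Lemma \ref{waveop}, the ``$2$'' carrying the wave operator $w_\pm^*$ and the Hilbert--Schmidt remainders being killed by the oscillation.
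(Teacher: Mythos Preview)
Your proof is correct and follows essentially the same route as the paper. Both arguments reduce to the first component $e^{it\omega_0}g_t$, rewrite the integrand so that $J_s$ appears only through $\omega^{-1/2}J_s$, and then split off a Hilbert--Schmidt remainder from a main term producing $w_\pm^*$. Your decomposition $P'-P=M$ (Hilbert--Schmidt) and $P+P'=2+N$ (with $N$ Hilbert--Schmidt) is algebraically the same as the paper's: the paper's first term in its display for $e^{it\omega_0}g_t$ carries the factor $\omega_0^{-1/2}(\omega-\omega_0)\omega_0^{-1/2}\cdot\omega_0^{1/2}\omega^{-1/2}$, which is precisely $-(P'-P)$, and its subsequent splitting of $e^{it\omega_0}(\omega_0^{-1/2}+\omega_0^{1/2}\omega^{-1})e^{-it\omega}$ into a ``$2\omega_0^{-1/2}w_\pm^*$'' piece and a Hilbert--Schmidt piece is your $2+N$ decomposition in different clothing. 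The only genuine difference is expository: you make the mechanism by which the Hilbert--Schmidt terms vanish explicit (finite--rank approximation plus Riemann--Lebesgue for the absolutely continuous $\omega$), whereas the paper simply asserts that these terms tend to zero because the relevant operators are Hilbert--Schmidt.
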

\begin{proof}
It suffices to prove that
\begin{align}
\label{1405}
\lim_{t \to \pm \infty}e^{it\omega_0}g_t
= -\frac{1}{\sqrt{2}} \int_0^{\pm \infty} ds 
	\omega_0^{-1/2} e^{is\omega_0} w_\pm^* J_s.
\end{align}
By a direct calculation, we have
\begin{align}
e^{it\omega_0}g_t
& = -\frac{1}{2\sqrt{2}} \int_0^t ds e^{it\omega_0}
	\omega_0^{-1/2} \left(\omega - \omega_0\right) \omega_0^{-1/2}
		\cdot \omega_0^{1/2} \omega^{-1/2} \notag \\
& \quad \quad \quad \quad \quad  \times 
	e^{i(t-s)\omega} \omega^{-1/2}J_s \notag \\
\label{1532}
& \quad 
-\frac{1}{2\sqrt{2}} \int_0^t ds e^{it\omega_0}
	\left( \omega_0^{-1/2} + \omega_0^{1/2}\omega^{-1}\right) e^{-i(t-s)\omega}J_s.
\end{align}
Since 
$\omega_0^{-1/2} \left(\omega - \omega_0\right) \omega_0^{-1/2}$
is Hilbert-Schmidt by Lemma \ref{HSlemma}
and since $\int_0^t ds \|\omega^{-1/2}J_s\| < \infty$
by (b) of Assumption \ref{assext}, 
the first term of the r.h.s in \eqref{1532} tends to zero
as $t$ goes to $\pm \infty$.
We show that
the limit of the second term in \eqref{1532} equals \eqref{1405}.
It holds that
\begin{align} 
& e^{it\omega_0} 
\left( \omega_0^{-1/2} 
	+ \omega_0^{1/2}\omega^{-1}\right) e^{-it\omega} \notag \\
& = \left( \omega_0^{-1/2} e^{it\omega_0}e^{-it\omega} 
	+ e^{it\omega_0}e^{-it\omega} \omega^{-1/2} \right) \notag \\
\label{1544}
& \quad 
	+ e^{it\omega_0}(\omega_0^{-1/2} - \omega^{-1/2})\omega^{-1}e^{-it\omega}. 
\end{align}
The first term of the r.h.s in \eqref{1544} 
tends to $2 \omega_0^{-1/2} w_\pm^*$.
The second term tends to zero
since
\begin{align*}
(\omega_0^{-1/2} - \omega^{-1/2})\omega^{-1}e^{-it\omega}
= \omega_0^{-1/2}(1-\omega_0^{1/2}\omega^{-1/2})\omega^{-1}e^{-it\omega}
\end{align*}
and since, by Lemma \ref{HSlemma}, 
$(1-\omega_0^{1/2}\omega^{-1/2})$ is Hilbert-Schmidt.
Using these facts and Assumption \ref{assext} (b) again,
we infer
the limit of the second term in \eqref{1532} equals \eqref{1405}.
\end{proof}

By Lemmas \ref{waveop} and \ref{dua}, we have
\begin{lemma}
\label{1031}
Suppose that Assumptions \ref{asspot}, \ref{assscat} 
and \ref{assext} hold.
Then it holds that, for $v \in \mathcal{H}_C$,
\begin{equation}
\label{1712}
\mbox{s-}\lim_{t \to \pm}\mathscr{W}(t) e^{i \psi(v)} \mathscr{W}(t)^* 
	= e^{i ( \psi(W_\pm v) +i(j_\pm, Qv) )}.
\end{equation}
In particular, the following properties hold:
for real $f \in H^{-1/2}(\mathbb{R}^3)$
and $g \in H^{1/2}(\mathbb{R}^3)$,
\begin{align*}
& \mbox{s-}\lim_{t \to \pm} \mathscr{W}(t) e^{i \phi_0(f)} \mathscr{W}(t)^* 
	= e^{i \phi_\pm(f)}, \\
& \mbox{s-}\lim_{t \to \pm} \mathscr{W}(t) e^{i \pi_0(g)} \mathscr{W}(t)^*
	= e^{i \pi_\pm(g)}, 
\end{align*}
where
\begin{align*} 
& \phi_\pm(f) 
	= 
\psi(W_\pm u_0)
+ i(j_\pm, Qu_0), \\
& \pi_\pm(f)
	= 
\psi(W_\pm v_0)
+ i(j_\pm, Qv_0).
\end{align*}
\end{lemma}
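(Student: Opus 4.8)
The plan is to derive the convergence in \eqref{1712} by combining the two limits already established: Lemma \ref{waveop} gives $\mathrm{s\text{-}}\lim_{t\to\pm\infty} U(t)^*U_0(t) = W_\pm$ on $\mathcal{H}$, and Lemma \ref{dua} gives $\lim_{t\to\pm\infty} U_0(t)^*j_t = j_\pm$ in $\mathcal{H}$. From the explicit formula for $\mathscr{W}(t) = \mathscr{U}(t)^*\mathscr{U}_0(t)$ worked out before Lemma \ref{waveop}, namely
\[
\mathscr{W}(t)\psi(v)\mathscr{W}(t)^* = \psi\big(U(t)^*U_0(t)v\big) + i\big(U_0(t)^*j_t, Qv\big),
\]
one should pass to the limit inside the exponential. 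First I would fix $v \in \mathcal{H}_C$ and note that $U(t)^*U_0(t)v \to W_\pm v$ strongly and $(U_0(t)^*j_t, Qv) \to (j_\pm, Qv)$; the point is that strong convergence of the one-particle argument of a Segal-type field operator, together with convergence of the $c$-number shift, implies strong convergence of the corresponding Weyl operators $e^{i\psi(\cdot)+i(\cdot)}$ on $\tilde D$ (hence on all of $\Gamma(\mathfrak{h})$, since these are unitaries and $\tilde D$ is dense). This is a standard continuity property of the exponentiated field: if $v_n \to v$ in $\mathcal{H}_C$ then $e^{i\psi(v_n)} \to e^{i\psi(v)}$ strongly, which follows from the estimate $\|(e^{i\psi(v_n)} - e^{i\psi(v)})\Psi\| \to 0$ for $\Psi$ in a core, using $\|\psi(v_n - v)\Psi\| \le C\|v_n - v\| (\|\Psi\| + \|N_{\mathrm f}^{1/2}\Psi\|)$ and the group property.

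Second, to handle the $c$-number part cleanly I would use Lemma \ref{Bogo}(2): the shift $v \mapsto e^{i(\psi(v) + l(v))}$ with $l(v) = i(j_\pm, Qv)$ is implemented by $\mathscr{U}_{j_\pm} = e^{-i\psi(j_\pm)}$, so the right-hand side of \eqref{1712} is genuinely a well-defined unitary of the stated form, and the finite-$t$ object $\mathscr{W}(t)e^{i\psi(v)}\mathscr{W}(t)^*$ is likewise $e^{-i\psi(U_0(t)^*j_t)} e^{i\psi(U(t)^*U_0(t)v)} e^{i\psi(U_0(t)^*j_t)}$ up to the scalar phase coming from the CCR. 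Then the convergence reduces to: (a) $e^{i\psi(U(t)^*U_0(t)v)} \to e^{i\psi(W_\pm v)}$ strongly, by Lemma \ref{waveop} and the field continuity above; and (b) $e^{\pm i\psi(U_0(t)^*j_t)} \to e^{\pm i\psi(j_\pm)}$ strongly, by Lemma \ref{dua} and the same continuity. A product of strongly convergent sequences of uniformly bounded (indeed unitary) operators converges strongly to the product, which gives \eqref{1712}; one must also check the scalar phase $\exp\big(\tfrac{i}{2}\,\mathrm{Im}(U_0(t)^*j_t, Q\,U(t)^*U_0(t)v)\big)$ converges to $\exp\big(\tfrac{i}{2}\,\mathrm{Im}(j_\pm, QW_\pm v)\big)$, which is immediate from (a), (b) and joint continuity of the inner product on bounded sets.

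For the final ``in particular'' assertions, I would simply specialize $v$ to the vectors $u_0 = \begin{bmatrix} \omega_0^{-1/2}f/\sqrt 2 \\ \omega_0^{-1/2}f/\sqrt 2\end{bmatrix}$ and $v_0 = \begin{bmatrix} i\omega_0^{1/2}g/\sqrt 2 \\ -i\omega_0^{1/2}g/\sqrt 2\end{bmatrix}$, which lie in $\mathcal{H}_C$ when $f$, $g$ are real, and recall from Subsection 2.3 that $\phi_0(f) = \psi(u_0)$ and $\pi_0(g) = \psi(v_0)$; then \eqref{1712} reads exactly $\mathrm{s\text{-}}\lim \mathscr{W}(t)e^{i\phi_0(f)}\mathscr{W}(t)^* = e^{i\phi_\pm(f)}$ with $\phi_\pm(f) = \psi(W_\pm u_0) + i(j_\pm, Qu_0)$, and similarly for $\pi_\pm$.

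The main obstacle is the field-continuity lemma in step one: one needs that strong convergence of the one-particle vectors implies strong convergence of the exponentiated Segal field operators, uniformly enough to survive multiplication of three such factors. The cleanest route is to estimate, for $\Psi \in \tilde D \cap D(N_{\mathrm f})$,
\[
\big\|(e^{i\psi(v_n)} - e^{i\psi(v)})\Psi\big\|
\le \int_0^1 \big\|\psi(v_n - v)\, e^{is\psi(v)}\Psi\big\|\, ds,
\]
and then bound $\|\psi(w)\Phi\| \le 2^{1/2}\|w\|\,\|(N_{\mathrm f}+1)^{1/2}\Phi\|$ together with the fact that $e^{is\psi(v)}$ maps $D((N_{\mathrm f}+1)^{1/2})$ into itself with locally bounded norm (a standard consequence of $[\,N_{\mathrm f}, \psi(v)\,]$ being relatively $(N_{\mathrm f}+1)^{1/2}$-bounded); since $\Gamma_0 \subset D((N_{\mathrm f}+1)^{1/2})$ is a core and a dense set, and all operators involved are unitary, strong convergence on this core extends to the whole space. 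Once this lemma is in hand, the rest is bookkeeping with Lemmas \ref{waveop} and \ref{dua}.
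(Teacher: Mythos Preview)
Your proposal is correct and is essentially the same approach as the paper's, which simply notes that since both sides of \eqref{1712} are unitary it suffices to check the convergence on the dense set $D_{\rm f}$ and declares this an ``easy exercise''; you are filling in that exercise via the standard continuity of Weyl operators $v\mapsto e^{i\psi(v)}$ combined with Lemmas \ref{waveop} and \ref{dua}. One simplification: since for $v\in\mathcal{H}_C$ the quantity $i(U_0(t)^*j_t,Qv)$ is a \emph{real} scalar, you can factor $e^{i(\psi(A_tv)+c_t)}=e^{ic_t}e^{i\psi(A_tv)}$ directly and avoid the three-unitary decomposition and its attendant phase bookkeeping.
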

\begin{proof}
Since $e^{i\psi(v)}$ ($v \in \mathcal{H}_C$) is unitary,
it suffices to prove \eqref{1712} on $D_{\rm f}$,
which is an easy exercise.
\end{proof}

\begin{lemma} 
\label{waveop2}
Suppose that Assumptions \ref{asspot}, \ref{assscat} and \ref{assext}
hold.
Then there exists a unitary operator $\mathscr{W}_\pm$ such that
\[ 
\mathscr{W}_\pm e^{i\psi(v)} \mathscr{W}_\pm^* 
= e^{i ( \psi(W_\pm v) +i(j_\pm, Qv) )},
\quad v \in \mathcal{H}_C.
\]
In particular, it holds that:
for $f \in H^{-1/2}(\mathbb{R}^3)$
and $g \in H^{1/2}(\mathbb{R}^3)$,
\begin{align*}
& \mathscr{W}_\pm \phi_0(f) \mathscr{W}_\pm^* = \phi_\pm(f), \\
& \mathscr{W}_\pm \pi_0(f) \mathscr{W}_\pm^* = \pi_\pm(f). 
\end{align*}
\end{lemma}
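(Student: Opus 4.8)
The plan is to realize $\mathscr{W}_\pm$ as a product of two unitaries supplied by Lemma~\ref{Bogo}: the Bogoliubov unitary implementing the one--particle map $W_\pm$, and a Weyl unitary implementing the classical shift $v\mapsto i(j_\pm,Qv)$. This parallels the construction of $\mathscr{U}(t)=e^{-i\psi(j_t)}\mathscr{U}_t$ in Subsection~2.3, with $W_\pm$ playing the role of $U(t)$ and $j_\pm$ that of $j_t$.

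The first step is to check that $W_\pm$ satisfies \eqref{symplectic}. The relation $CW_\pm=W_\pm C$ is immediate from the block structure $(W_\pm)_{--}=\overline{(W_\pm)_{++}}$, $(W_\pm)_{+-}=\overline{(W_\pm)_{-+}}$ recorded in Lemma~\ref{waveop}, and $W_\pm QW_\pm^*=W_\pm^*QW_\pm=Q$ can be read off from the explicit formulas there using $w_\pm^*w_\pm=w_\pm w_\pm^*=I$ (valid under Assumption~\ref{assscat}), or alternatively deduced from Lemma~\ref{1031}, since the limiting family $e^{i(\psi(W_\pm v)+i(j_\pm,Qv))}$ inherits the Weyl commutation structure of $e^{i\psi(v)}$. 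The substantive point is that $(W_\pm)_{-+}$ is Hilbert--Schmidt. For this I would use the intertwining $hw_\pm=w_\pm h_0$, which by the functional calculus gives $\omega^{\pm1/2}w_\pm=w_\pm\omega_0^{\pm1/2}$, whence
\[
(W_\pm)_{-+}
=\tfrac12\bigl(\omega_0^{-1/2}\omega^{1/2}-\omega_0^{1/2}\omega^{-1/2}\bigr)w_\pm
=\tfrac12\bigl[(\omega_0^{-1/2}\omega^{1/2}-1)-(\omega_0^{1/2}\omega^{-1/2}-1)\bigr]w_\pm;
\]
this is Hilbert--Schmidt by Lemma~\ref{HSlemma} since $w_\pm$ is bounded --- exactly the mechanism used above to show $U_{-+}(t)$ is Hilbert--Schmidt. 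Applying Lemma~\ref{Bogo}(1) with $U=QW_\pm Q$ (so that $U_{-+}=-(W_\pm)_{-+}$ is Hilbert--Schmidt and $U$ again satisfies \eqref{symplectic}) yields a unitary $\mathscr{U}_{W_\pm}$ (leaving $\tilde D$ invariant) with $\mathscr{U}_{W_\pm}e^{i\psi(v)}\mathscr{U}_{W_\pm}^*=e^{i\psi(W_\pm v)}$ for $v\in\mathcal{H}_C$.

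The second step handles the shift. By Lemma~\ref{dua}, $j_\pm=\begin{bmatrix}g_\pm\\ \overline{g_\pm}\end{bmatrix}$ with $g_\pm\in\mathfrak{h}$; the finiteness of $g_\pm$ uses Assumption~\ref{assext}(b) together with $\omega_0^{-1/2}w_\pm^*=w_\pm^*\omega^{-1/2}$. Hence $j_\pm\in\mathcal{H}_C$, so the functional $l_\pm(v)=i(j_\pm,Qv)$ is of the form required in Lemma~\ref{Bogo}(2) and the associated Weyl unitary $e^{i\psi(j_\pm)}$ exists. Setting $\mathscr{W}_\pm:=\mathscr{U}_{W_\pm}\,e^{i\psi(j_\pm)}$ and using the Weyl relation $e^{i\psi(j_\pm)}e^{i\psi(v)}e^{-i\psi(j_\pm)}=e^{i(\psi(v)+i(j_\pm,Qv))}$ (which follows from $[\psi(j_\pm),\psi(v)]=(j_\pm,Qv)$ for $v\in\mathcal{H}_C$), and noting that $(j_\pm,Qv)$ is purely imaginary so that $e^{i(\psi(v)+i(j_\pm,Qv))}$ differs from $e^{i\psi(v)}$ only by a scalar, we obtain
\[
\mathscr{W}_\pm e^{i\psi(v)}\mathscr{W}_\pm^*
=\mathscr{U}_{W_\pm}\,e^{i(\psi(v)+i(j_\pm,Qv))}\,\mathscr{U}_{W_\pm}^*
=e^{i(\psi(W_\pm v)+i(j_\pm,Qv))},\qquad v\in\mathcal{H}_C,
\]
which is the asserted identity. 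Applying the shift before the Bogoliubov map keeps the shift term attached to the original variable $v$, so no further manipulation is needed here.

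Finally, the ``in particular'' assertions follow from Stone's theorem: for real $f$ and $s\in\mathbb{R}$, $e^{is\phi_0(f)}=e^{i\psi(su_0)}$ with $su_0\in\mathcal{H}_C$, hence $\mathscr{W}_\pm e^{is\phi_0(f)}\mathscr{W}_\pm^*=e^{is(\psi(W_\pm u_0)+i(j_\pm,Qu_0))}=e^{is\phi_\pm(f)}$, and comparison of generators yields $\mathscr{W}_\pm\phi_0(f)\mathscr{W}_\pm^*=\phi_\pm(f)$; the non-real case follows by $\mathbb{C}$-linearity of conjugation together with the definition of $\phi_0(f)$, $\phi_\pm(f)$ for non-real $f$, and $\pi_0$ is treated identically. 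I expect the Hilbert--Schmidt bound on $(W_\pm)_{-+}$ to be the main technical step, since it is precisely what makes Lemma~\ref{Bogo}(1) applicable; everything else is bookkeeping with the Weyl relations and Lemma~\ref{Bogo}.
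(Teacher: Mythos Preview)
Your proof is correct and follows essentially the same route as the paper: verify that the relevant off-diagonal block of $W_\pm$ is Hilbert--Schmidt (you check $(W_\pm)_{-+}$ via the intertwining relation, the paper checks $(W_\pm^*)_{-+}=(W_\pm)_{+-}^*$ by a slightly different factorization, but these are equivalent), invoke Lemma~\ref{Bogo}(1) to obtain the Bogoliubov unitary $\mathscr{U}_\pm$, and set $\mathscr{W}_\pm=\mathscr{U}_\pm e^{i\psi(j_\pm)}$. Your write-up is more detailed on the Weyl shift and the ``in particular'' part, which the paper leaves implicit, but the argument is the same.
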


\begin{proof}
Note that
$(W_\pm^*)_{-+} = W_{+-}^*$ is equal to
\begin{align*} 
& \frac{1}{2}\omega_0^{1/2} w_\mp^* \omega_0^{-1/2} - \omega_0^{-1/2} w_\mp^* \omega_0^{1/2}) \\
& \quad =  \frac{1}{2} \left(w_\mp^* \omega^{-1/2}\omega_0^{1/2}\right)
	\cdot \left(\omega_0^{-1/2} (\omega - \omega_0) \omega_0^{-1/2}\right)
\end{align*}
and hence is Hilbert-Schmidt.
By Lemma \ref{Bogo}, it holds that there exists a unitary operator $\mathscr{U}_\pm$ such that
\begin{align*} 
\mathscr{U}_\pm e^{i\psi(v)} \mathscr{U}_\pm^* = e^{i\psi(W_\pm v)}, 
\quad v \in \mathcal{H}_C.
\end{align*}
Setting $\mathscr{W}_\pm = \mathscr{U}_\pm e^{i\psi(j_\pm)}$, one obtains the desired result.
\end{proof}


\section{Scattering theory}
Throughout this section, we suppose that Assumptions \ref{asspot}
- \ref{assext} hold.
Since any two quantum systems which are transformed from one to the other 
by a unitary transformation are physically equivalent, 
one can redefine the solution of the Klein-Gordon field
as
\[ {\bm \phi}(t,f) = \mathscr{W}_-^* \phi(t,f) \mathscr{W}_-,
\quad {\bm \pi}(t,g) = \mathscr{W}_-^* \pi(t,g) \mathscr{W}_- \]
for $f \in H^{-1/2}(\mathbb{R}^3)$
and $g \in H^{1/2}(\mathbb{R}^3)$.
Then it follows from Proposition \ref{repKG} 
(see also Remarks \ref{rem2.1} - \ref{rem2.2}) that
${\bm \phi}(t,f)$ and ${\bm \pi}(t,g)$ satisfy \eqref{KG}
in the operator valued distributional sense.
The field operators ${\bm \phi}_s(t,f)$
and ${\bm \pi}_s(t,g)$ defined by
\begin{align*} 
& {\bm \phi}_s(t,f) 
= \mathscr{W}_-^* \mathscr{U}(s)^* \mathscr{U}(s-t) 
	\phi_0(f)\mathscr{U}(s-t) ^*\mathscr{U}(s)\mathscr{W}_-, \\
& {\bm \pi}_s(t,g) 
= \mathscr{W}_-^* \mathscr{U}(s)^* \mathscr{U}(s-t) 
	\pi_0(g)\mathscr{U}(s-t) ^*\mathscr{U}(s)\mathscr{W}_-
\end{align*}
satisfy the free Klein-Gordon equation with the initial condition:
\[ {\bm \phi}_s(s,f) = {\bm \phi}(s,f),
\quad {\bm \pi}_s(t,g) = {\bm \pi}(s,g). \]
The asymptotic fields ${\bm \phi}_{\sharp}(t,f)$ and ${\bm \pi}_{\sharp}(t,g)$ 
($\sharp$ stands for ${\rm out}$ or ${\rm in}$) are defined as
\begin{align*} 
e^{i{\bm \phi}_{\rm out/in}(t,f)} = \lim_{t \to \pm \infty}e^{i{\bm \phi}_s(t,f)},
\quad e^{i{\bm \pi}_{\rm out/in}(t,g)} = \lim_{t \to \pm \infty}e^{i{\bm \pi}_s(t,g)} 
\end{align*}
for any real $f \in H^{-1/2}(\mathbb{R}^3)$ and $g \in H^{1/2}(\mathbb{R}^3)$.
Then, by Lemmas \ref{1031} and \ref{waveop2}, the incoming fields 
${\bm \phi}_{\rm in}(f) = {\bm \phi}_{\rm in}(0,f) $ and ${\bm \pi}_{\rm in}(g) = {\bm \pi}_{\rm in}(0, g) $ are 
\begin{equation}
\label{1538in} 
{\bm \phi}_{\rm in}(f) = \phi_0(f), \quad {\bm \pi}_{\rm in}(g)=\pi_0(g) 
\end{equation}
and
\[ \{ {\bm \phi}_{\rm in}(f), {\bm \pi}_{\rm in}(f) 
\mid f \in H^{-1/2}(\mathbb{R}^3), ~g \in H^{1/2}(\mathbb{R}^3) \}
\] 
gives the Fock representation of the CCR (see, e.g., \cite{Ar}).
The out going fields ${\bm \phi}_{\rm out}(f) = {\bm \phi}_{\rm out}(0, f)$ 
and ${\bm \pi}_{\rm out}(g) = {\bm \pi}_{\rm out}(0, g)$ are 
\begin{equation}
\label{1538out} 
{\bm \phi}_{\rm out}(f) = (\mathscr{W}_+^* \mathscr{W}_-)^* \phi_0(f)(\mathscr{W}_+^* \mathscr{W}_-), 
\quad {\bm \pi}_{\rm in}(g)= (\mathscr{W}_+^* \mathscr{W}_-)^*\pi_0(g)(\mathscr{W}_+^* \mathscr{W}_-). 
\end{equation}

The scattering matrix $\mathscr{S} = \mathscr{S}(V,J)$ is defined by
\[ \mathscr{S} = \mathscr{W}_+^* \mathscr{W}_-. \]
\begin{proposition}
Suppose that Assumptions \ref{asspot}, \ref{assscat} 
and \ref{assext} hold.
Then:
\[ \mathscr{S}^{-1} {\bm \phi}_{\rm in}(f) \mathscr{S} 
= {\bm \phi}_{\rm out}(f)
\quad \mbox{and} \quad
\mathscr{S}^{-1} {\bm \pi}_{\rm in}(f) \mathscr{S} 
= {\bm \pi}_{\rm out}(f). \]
\end{proposition}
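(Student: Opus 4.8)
The plan is to observe that, once the closed forms \eqref{1538in} and \eqref{1538out} of the asymptotic fields are in hand, the assertion is immediate from the definition $\mathscr{S}=\mathscr{W}_+^*\mathscr{W}_-$. First I would record that, by Lemma \ref{waveop2}, both $\mathscr{W}_+$ and $\mathscr{W}_-$ are unitary operators on $\Gamma(\mathfrak{h})$, so their product $\mathscr{S}=\mathscr{W}_+^*\mathscr{W}_-$ is unitary and $\mathscr{S}^{-1}=\mathscr{S}^*=\mathscr{W}_-^*\mathscr{W}_+$. Next I would combine \eqref{1538in}, which gives ${\bm\phi}_{\rm in}(f)=\phi_0(f)$ and ${\bm\pi}_{\rm in}(f)=\pi_0(f)$, with \eqref{1538out}, which gives ${\bm\phi}_{\rm out}(f)=(\mathscr{W}_+^*\mathscr{W}_-)^*\phi_0(f)(\mathscr{W}_+^*\mathscr{W}_-)$ and the analogous identity for ${\bm\pi}_{\rm out}(f)$. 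Substituting $\mathscr{S}=\mathscr{W}_+^*\mathscr{W}_-$ into the latter yields $\mathscr{S}^{-1}{\bm\phi}_{\rm in}(f)\mathscr{S}=\mathscr{S}^*\phi_0(f)\mathscr{S}={\bm\phi}_{\rm out}(f)$ and likewise $\mathscr{S}^{-1}{\bm\pi}_{\rm in}(f)\mathscr{S}={\bm\pi}_{\rm out}(f)$.

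Since $\phi_0(f)$ and $\pi_0(f)$ are in general unbounded (and non self-adjoint for non-real $f$), I would make the conjugation precise at the level of the strongly continuous one-parameter unitary groups: for real $f\in H^{-1/2}(\mathbb{R}^3)$ one has $\mathscr{S}^{-1}e^{it\phi_0(f)}\mathscr{S}=e^{it\,\mathscr{S}^*\phi_0(f)\mathscr{S}}$, and the right-hand side equals $e^{it{\bm\phi}_{\rm out}(f)}$ by the definition \eqref{1538out}; Stone's theorem then gives equality of the generators as self-adjoint operators, with common domain $\mathscr{S}^{-1}D(\phi_0(f))$. The passage to non-real $f$ is then handled exactly as in the closedness argument for $\phi_0(f)$ (and as in Remark \ref{rem2.2}), using $\phi_0(f)=\phi_0({\rm Re}\,f)+i\phi_0({\rm Im}\,f)$ and the fact that conjugation by the unitary $\mathscr{S}$ commutes with this decomposition; the same applies to $\pi_0$.

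I do not expect any genuine obstacle here: the statement is essentially a rephrasing of the definition of $\mathscr{S}$ together with Lemmas \ref{1031} and \ref{waveop2}. The only point deserving a line or two of care is the domain bookkeeping for the unbounded fields, which is why I would phrase the core step through the bounded Weyl-type exponentials rather than directly with the (possibly unbounded) products $\mathscr{S}^{-1}\phi_0(f)\mathscr{S}$. I would also remark that, since $\mathscr{S}$ is determined by \eqref{111} only up to a phase, the choice $\mathscr{S}=\mathscr{W}_+^*\mathscr{W}_-$ is consistent with \eqref{111}, a constant phase having no effect under conjugation.
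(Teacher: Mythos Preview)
Your proposal is correct and follows essentially the same route as the paper: both arguments simply combine \eqref{1538in}, \eqref{1538out}, and $\mathscr{S}^{-1}=\mathscr{W}_-^*\mathscr{W}_+$ to read off the conjugation identity. The paper's proof is in fact a one-line version of yours; your additional care with the Weyl exponentials and domains is not present there but is a reasonable elaboration.
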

\begin{proof}
By \eqref{1538in} and \eqref{1538out}, we see that
\[ 
\mathscr{S}^{-1} \psi_{\rm in}(f) \mathscr{S} 
= \mathscr{W}_-^* \mathscr{W}_+ \psi_0(f) \mathscr{W}_+^* \mathscr{W}_-
= \psi_{\rm out}(f), 
\]
where $\psi_{{\rm in/out},0}$ stands for ${\bm \phi}_{{\rm in/out},0}$ 
or ${\bm \pi}_{{\rm in/out},0}$.
\end{proof}

Let us define the associated annihilation and creation operators by 
\[ c_{\rm in}(f) = c(f) \quad 
\mbox{and} \quad c_{\rm out}(f) = \mathscr{S}^* c(f) \mathscr{S}
 \]
and $c_\sharp^*(f) = c_\sharp(f)^*$.
The free Hamiltonian of the incoming field and
outgoing field are defined by
\[
H_{\rm in} = d\Gamma(\omega_0)
\quad \mbox{and} \quad
H_{\rm out} 
= \mathscr{S}^* 
	d\Gamma(\omega_0) \mathscr{S}.
\]
The asymptotic vacua 
\[ \Omega_{\rm in} = \Omega 
\quad \mbox{and} \quad
	\Omega_{\rm out} = \mathscr{S}^*\Omega \]
satisfy
\[ H_\sharp \Omega_\sharp = 0 
\quad \mbox{and} \quad c_\sharp(f) \Omega_\sharp = 0. \]
The asymptotic fields satisfy
\begin{align*} 
{\bm \phi}_\sharp(t,f) 
= e^{it H_\sharp} {\bm \phi}_\sharp(f) e^{-it H_\sharp}, 
\quad \mbox{and} \quad
{\bm \pi}_\sharp(t,f)
= e^{it H_\sharp} {\bm \pi}_\sharp(f) e^{-it H_\sharp}, 
\end{align*}
and 
\begin{align*} 
& {\bm \phi}_\sharp(t,f)
=
\frac{1}{\sqrt{2}} 
	\left(c_\sharp^*(e^{it\omega_0}\omega_0^{-1/2}f)
		+ c_\sharp(e^{it\omega_0}\omega_0^{-1/2} \bar f) \right), \\
& {\bm \pi}_\sharp(t,f)
=  
\frac{i}{\sqrt{2}} 
	\left(c_\sharp^*(e^{it\omega_0}\omega_0^{+1/2}f) 
		- c_\sharp(e^{it\omega_0}\omega_0^{+1/2} \bar f) \right)
\end{align*}
hold on $\tilde D$.


\section{Inverse scattering}

\subsection{Uniqueness of the potential $V$}
Let 
\[ S = w_+^* w_-. \]
When we want to emphasize the dependence of $V$,
we write $S = S(V)$. 
We will prove the following theorem:
\begin{theorem}
\label{main1}
Suppose that Assumptions \ref{asspot}, \ref{assscat} and \ref{assext} hold.
If $\mathscr{S}(V,J) = \mathscr{S}(V^\prime,J^\prime)$,
then $S(V)=S(V^\prime)$.
\end{theorem}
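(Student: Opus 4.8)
The plan is to extract the one-particle (classical) scattering operator $S(V) = w_+^* w_-$ from the quantized scattering operator $\mathscr{S}(V,J) = \mathscr{W}_+^* \mathscr{W}_-$ by examining how $\mathscr{S}$ implements a Bogoliubov transformation together with a linear shift on the field algebra. Concretely, from \eqref{1538in}, \eqref{1538out} and Lemma \ref{waveop2} we know that $\mathscr{W}_\pm$ implements the pair $(W_\pm, j_\pm)$, so $\mathscr{S}$ implements the composed transformation on $\mathcal{H}_C$ given by $v \mapsto W_+^* W_- v$ together with an affine term built from $j_\pm$. The first step is therefore to compute this composed Bogoliubov operator explicitly: using the formulas for $(W_\pm)_{\epsilon\epsilon'}$ from Lemma \ref{waveop}, one checks that the linear part of the transformation implemented by $\mathscr{S}$ is exactly
\[
W := W_+^* W_- = \begin{bmatrix} \tfrac12(\omega_0^{-1/2} S \omega_0^{1/2} + \omega_0^{1/2} S \omega_0^{-1/2}) & \tfrac12(\omega_0^{-1/2} S \omega_0^{1/2} - \omega_0^{1/2} S \omega_0^{-1/2}) \\ \overline{(\cdots)} & \overline{(\cdots)} \end{bmatrix},
\]
where $S = w_+^* w_-$, since the middle factors $w_+ w_+^* = I$ collapse by the invariance principle and asymptotic completeness noted after Assumption \ref{assscat}.

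The second step is to argue that $W$, hence $S$, is determined by $\mathscr{S}$. Since $\mathscr{S}$ is a unitary on $\Gamma(\mathfrak{h})$ with $\mathscr{S}\, \phi_0(f)\, \mathscr{S}^* = \phi_{\rm out}(f)$ etc., conjugating the Weyl operators $e^{i\phi_0(f)}$, $e^{i\pi_0(g)}$ by $\mathscr{S}$ recovers the affine maps $u_0 \mapsto W u_0 + (\text{shift})$ and $v_0 \mapsto W v_0 + (\text{shift})$ on the dense set of test-function vectors. Differentiating the Weyl relations (via the Stone theorem, as was done to pass from \eqref{wt} to the equation just below it) removes the exponentials, and taking the vacuum expectation $(\Omega, \mathscr{S}^{-1}\phi_0(f)\mathscr{S}\,\Omega)$ isolates the linear (Bogoliubov) part from the c-number shift: the shift contributes only to the one-point function, while the commutator $[\mathscr{S}^{-1}\phi_0(f)\mathscr{S}, \mathscr{S}^{-1}\pi_0(g)\mathscr{S}]$ and the two-point functions see only $W$. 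From knowledge of all four blocks of $W$ acting on $H^{\mp 1/2}(\mathbb{R}^3)$, one solves the linear system $W_{++} + W_{-+} = \omega_0^{-1/2} S \omega_0^{1/2}$, $W_{++} - W_{-+} = \omega_0^{1/2} S \omega_0^{-1/2}$ for $S$; each equation already determines $S$ uniquely on the appropriate domain, and these domains are dense, so $S = S(V)$ is recovered.

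The third step is bookkeeping: if $\mathscr{S}(V,J) = \mathscr{S}(V',J')$ then the two affine transformations coincide, so in particular their linear parts coincide, $W(V) = W(V')$, and by the previous step $S(V) = S(V')$; the external-source data $j_\pm, j_\pm'$ affect only the shift and are irrelevant here (they reappear in part (iii) of Theorem \ref{inj}). The main obstacle I anticipate is purely technical rather than conceptual: one must be careful that the Weyl/Bogoliubov identities of Lemma \ref{Bogo} and Lemma \ref{waveop2} are being applied on the correct dense domains ($\tilde D$, $D_{\rm f}$, $\Gamma_0$) and that the weighted operators $\omega_0^{\pm1/2}$ intertwining the $\mathfrak{h}$-level map $S$ with the $\mathcal{H}$-level map $W$ are handled as genuine (possibly unbounded) operators — in particular that $\omega_0^{-1/2} S \omega_0^{1/2} - \omega_0^{1/2} S \omega_0^{-1/2}$ is well-defined and that extracting $S$ from the two blocks does not lose information. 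The identity $w_\pm w_\pm^* = I$ (no positive eigenvalues, hence purely a.c.\ spectrum) is what makes the composition $W_+^* W_-$ clean, and it is exactly Assumption \ref{assscat} that is being used there.
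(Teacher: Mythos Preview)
Your explicit block formula for $W := W_+^* W_-$ is incorrect, and this is where the argument breaks down. You claim that in the product $(W_+)_{++}^*(W_-)_{++}$ the ``middle factors $w_+ w_+^* = I$ collapse,'' leaving $\tfrac12(\omega_0^{-1/2}S\omega_0^{1/2}+\omega_0^{1/2}S\omega_0^{-1/2})$. But if you actually multiply out the blocks from Lemma~\ref{waveop} you find cross--terms of the form $\omega_0^{1/2}w_+^*\,\omega_0^{-1}\,w_-\,\omega_0^{1/2}$ and $\omega_0^{-1/2}w_+^*\,\omega_0\,w_-\,\omega_0^{-1/2}$; intertwining lets you push $w_+^*$ and $w_-$ past powers of $\omega$, not of $\omega_0$, so these become $w_+^*\,\omega^{1/2}\omega_0^{-1}\omega^{1/2}\,w_-$ etc.\ rather than $\omega_0^{\pm1/2}S\omega_0^{\mp1/2}$. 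The paper's equation~\eqref{eq1426} carries out exactly this computation and obtains
\[
(W_+)_{++}^*(W_-)_{++}=S+\tfrac14\,w_+^*\bigl[(\omega^{-1/2}\omega_0\omega^{-1/2}-1)+(\omega^{1/2}\omega_0^{-1}\omega^{1/2}-1)\bigr]w_-,
\]
i.e.\ $S$ plus a nonzero Hilbert--Schmidt remainder. (Note also that your formula, once you use $[S,\omega_0]=0$, collapses to $W_{++}=S$ and $W_{-+}=0$ --- which should have been a warning sign, since $W_{-+}$ is exactly the off-diagonal Bogoliubov piece that makes $\mathscr{S}$ non-trivial on Fock space.) So from $\mathscr{S}=\mathscr{S}'$ you only get $S+K=S'+K'$ with $K,K'$ Hilbert--Schmidt, and your ``solve the linear system'' step does not go through.

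The paper closes this gap by a different mechanism (Lemma~\ref{tec1}): it evaluates the matrix elements $(c^*(e^{it\omega_0}f)\Omega,\ \mathscr{S}\,c^*(e^{it\omega_0}g)\Omega)$ and sends $t\to\infty$. Because the Hilbert--Schmidt pieces $(W_\pm)_{+-}$ and the shift $g_\pm$ are compact or $L^2$, the boost $e^{it\omega_0}$ (which commutes with $S$ but tends weakly to zero) kills all of them, and the limit is $(\Omega,\mathscr{S}\Omega)(f,Sg)$. Dividing by the nonzero scalar $(\Omega,\mathscr{S}\Omega)$ recovers $(f,Sg)$ directly from $\mathscr{S}$. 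Your approach can in fact be repaired along the same lines --- once you accept that $W_{++}=S+K$ with $K$ Hilbert--Schmidt, the identity $\mathscr{S}=\mathscr{S}'$ gives $S-S'$ compact and commuting with $e^{it\omega_0}$, hence zero --- but that is precisely the paper's asymptotic idea, not the algebraic extraction you proposed.
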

We need the following:
\begin{lemma}
\label{tec1}
Suppose that Assumptions \ref{asspot}, \ref{assscat} 
and \ref{assext} hold.
Then:
\begin{equation}
\label{0845} 
\lim_{t \to \pm \infty} 
	(c^*(e^{it\omega_0}f) \Omega,	
		\mathscr{S} c^*(e^{it \omega_0}g) \Omega) 
	= (\Omega, \mathscr{S} \Omega) (f, S g). 
\end{equation}
\end{lemma}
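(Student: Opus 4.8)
The idea is that $\mathscr{S}=\mathscr{W}_+^*\mathscr{W}_-$ acts on the one–particle creation and annihilation operators as a Bogoliubov transformation composed with a Weyl translation, and that the ``diagonal'' (one–particle) part of this transformation equals the Schr\"odinger scattering operator $S$ up to a compact error. After conjugation by $e^{it\omega_0}$ and passage to the limit $t\to\pm\infty$, the compact error, the particle–creating part and the translation all disappear, because $e^{\pm it\omega_0}$ tends weakly to $0$ (the spectrum of $\omega_0$, like that of $h_0=-\Delta$, being purely absolutely continuous).

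\smallskip
The first step is the algebraic identity, valid on $\tilde D$ for $h\in\mathfrak h$,
\[
\mathscr{S}\,c^*(h)\,\mathscr{S}^*=c^*(Ah)+c(B\bar h)+\beta(h)\,I ,
\]
obtained by composing the two implementations in Lemma~\ref{waveop2} (recall $\mathscr{W}_\pm=\mathscr{U}_\pm e^{i\psi(j_\pm)}$ with $\mathscr{U}_\pm$ the Bogoliubov implementer of $W_\pm$). Here $\beta$ is a bounded antilinear functional, $\beta(h)=(\xi,\bar h)_{\mathfrak h}$ with a fixed $\xi\in\mathfrak h$ coming from $j_\pm$; $B$ is Hilbert--Schmidt, since the off–diagonal blocks $(W_\pm)_{+-},(W_\pm)_{-+}$ are Hilbert--Schmidt (proof of Lemma~\ref{waveop2}) and $B$ is a sum of products of such a block with bounded operators; and, for the same reason, $A=((W_+)_{++})^*(W_-)_{++}+(\text{compact})$. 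The crucial point is that even $A=S+K$ with $K$ compact. To see this, expand $((W_+)_{++})^*(W_-)_{++}$ with $(W_\pm)_{++}=\tfrac12(\omega_0^{-1/2}w_\pm\omega_0^{1/2}+\omega_0^{1/2}w_\pm\omega_0^{-1/2})$ into four terms: two of them equal $\omega_0^{\mp1/2}w_+^*w_-\,\omega_0^{\pm1/2}=\omega_0^{\mp1/2}S\omega_0^{\pm1/2}=S$, because $S$ commutes with $h_0$, hence with $\omega_0$; in the remaining two, $\omega_0^{1/2}w_+^*\omega_0^{-1}w_-\omega_0^{1/2}$ and $\omega_0^{-1/2}w_+^*\omega_0\,w_-\omega_0^{-1/2}$, split $\omega_0^{-1}=\omega^{-1}+(\omega_0^{-1}-\omega^{-1})$, $\omega_0=\omega+(\omega_0-\omega)$; the $\omega^{-1}$, resp.\ $\omega$, part again yields $S$ through the intertwining relations $\omega w_\pm=w_\pm\omega_0$, $w_\pm^*\omega=\omega_0 w_\pm^*$ (and the same with $\omega^{\pm1/2},\omega^{-1}$) together with $[S,\omega_0]=0$, while the correction part is, by Lemma~\ref{HSlemma} and the boundedness of $\omega_0^{\pm1/2}w_\pm\omega_0^{\mp1/2}$, $\omega^{\pm1/2}\omega_0^{\mp1/2}$ and their adjoints, of the form $(\text{bounded})\cdot(\text{Hilbert--Schmidt})\cdot(\text{bounded})$, hence compact.

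\smallskip
Granting this identity the rest is routine. Since $\mathscr{S}$ is a Bogoliubov--Weyl unitary, $\mathscr{S}\Omega\in D(N_{\rm f})$ and the maps $k\mapsto c(k)\mathscr{S}\Omega$, $k\mapsto c^*(k)\mathscr{S}\Omega$ are norm–continuous on $\mathfrak h$; using $\overline{e^{it\omega_0}g}=e^{-it\omega_0}\bar g$,
\[
\mathscr{S}\,c^*(e^{it\omega_0}g)\Omega=c^*(Ae^{it\omega_0}g)\,\mathscr{S}\Omega+c(Be^{-it\omega_0}\bar g)\,\mathscr{S}\Omega+\beta(e^{it\omega_0}g)\,\mathscr{S}\Omega .
\]
Pairing with $c^*(e^{it\omega_0}f)\Omega$, pushing $c(e^{it\omega_0}f)$ through $c^*(Ae^{it\omega_0}g)$ by the CCR and using $c(\,\cdot\,)\Omega=0$, the first term contributes exactly $(e^{it\omega_0}f,Ae^{it\omega_0}g)_{\mathfrak h}\,(\Omega,\mathscr{S}\Omega)$, while the last two are $O(\|Be^{-it\omega_0}\bar g\|)$ and $O(|(\xi,e^{-it\omega_0}\bar g)_{\mathfrak h}|)$. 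As $e^{\pm it\omega_0}\rightharpoonup 0$ weakly, $Be^{-it\omega_0}\bar g\to0$ in norm ($B$ compact) and $(\xi,e^{-it\omega_0}\bar g)_{\mathfrak h}\to0$, so the last two terms vanish as $t\to\pm\infty$; moreover, writing $A=S+K$, $(e^{it\omega_0}f,Ke^{it\omega_0}g)\to0$ ($K$ compact) while $(e^{it\omega_0}f,Se^{it\omega_0}g)=(f,Sg)_{\mathfrak h}$ for all $t$ since $[S,\omega_0]=0$. This gives \eqref{0845}.

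\smallskip
The main obstacle is the claim $A=S+K$ with $K$ compact: one must carry the $\omega_0^{\pm1/2}$ ``dressing'' factors through the product $((W_+)_{++})^*(W_-)_{++}$ and combine the intertwining relations with Lemma~\ref{HSlemma} so that every contribution other than $S$ carries a Hilbert--Schmidt factor. The remaining ingredients — the precise form of $\mathscr{S}c^*(h)\mathscr{S}^*$, the membership $\mathscr{S}\Omega\in D(N_{\rm f})$, and the domain bookkeeping justifying the CCR manipulations — are standard.
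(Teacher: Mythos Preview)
Your argument is correct and follows essentially the same route as the paper: both proofs reduce the limit to the identity $(W_+)_{++}^*(W_-)_{++}=S+(\text{Hilbert--Schmidt})$ (the paper's \eqref{eq1426}) and then eliminate the Hilbert--Schmidt off--diagonal blocks and the Weyl translation via $e^{\pm it\omega_0}\rightharpoonup 0$. The only cosmetic difference is that the paper applies $\mathscr{W}_+$ and $\mathscr{W}_-$ separately to the two sides of the inner product, whereas you first compose them into the single Bogoliubov--Weyl formula $\mathscr{S}c^*(h)\mathscr{S}^*=c^*(Ah)+c(B\bar h)+\beta(h)I$; the substance is identical.
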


\begin{proof}[Proof of Theorem \ref{main1}]
For notational simplicity, we write
$\mathscr{S}(V^\prime,J^\prime) = \mathscr{S}^\prime$
and $S(V^\prime) = S^\prime$.
Note that $(\Omega, \mathscr{S}\Omega) \not=0$
since $\mathscr{S}$ is unitary.
By Lemma \ref{tec1}, we have
\begin{align*} 
(f,Sg) 
& = \lim_{t \to \pm \infty}(c^*(e^{it\omega_0}f) \Omega,	
		\mathscr{S} c^*(e^{it \omega_0}g) \Omega)/
		(\Omega, \mathscr{S}\Omega) \\
& = \lim_{t \to \pm \infty}(c^*(e^{it\omega_0}f) \Omega,	
		\mathscr{S}^\prime c^*(e^{it \omega_0}g) \Omega)/
		(\Omega, \mathscr{S}^\prime\Omega)
& = (f,S^\prime g).
\end{align*}
\end{proof}
In the remainder of this subsection,
we will prove Lemma \ref{tec1}.
Let
\[ \mathscr{I}_t(f,g)
= (c^*(e^{it\omega_0}f) \Omega,	
		\mathscr{S} c^*(e^{it \omega_0}g) \Omega).  
\]
We see from Lemma \ref{waveop2} that 
$\mathscr{W}_\pm = \mathscr{U}_\pm e^{i\psi(j_\pm)}$.
By a direct calculation, one obtains the following:
\begin{align*}
& \mathscr{U}_\pm c(f) \mathscr{U}_\pm^*
= c((W_\pm)_{++} f) + c^*((W_\pm)_{+-} \bar{f}), \\
& \mathscr{U}_\pm c^*(f) \mathscr{U}_\pm^*
= c^*((W_\pm)_{++}f) + c((W_\pm)_{+-}\bar{f}), \\
& e^{i\psi(j_\pm)} c(f) e^{-i\psi(j_\pm)}
= c(f) -i(f,j_\pm), \\
& e^{i\psi(j_\pm)} c^*(f) e^{-i\psi(j_\pm)}
= c^*(f) +i(j_\pm,f).
\end{align*}
It holds from the above that
\begin{align*} 
& \mathscr{U}_\pm e^{i\psi(j_\pm)} 
			c^*\left( e^{it \omega_0} f \right) \Omega \\
& = \left( c^*\left( (W_\pm)_{++} e^{it\omega_0} f\right)
		+ c\left((W_\pm)_{+-} e^{-it\omega_0} \bar{f} \right) 
		+ i(g_\pm, e^{it\omega_0}f) \right) \mathscr{U}_\pm e^{i\psi(j_\pm)} \Omega \\
& = c^*\left( (W_\pm)_{++} e^{it\omega_0} f\right) 
	\mathscr{U}_\pm e^{i\psi(j_\pm)} \Omega + o(1)
\end{align*}
as $t \to \pm \infty$ in the strong topology
since $(W_\pm)_{+-}$ is Hilbert-Schmidt and
since $\lim_{t \to \pm \infty}(g_\pm, e^{it\omega_0}f)=0$
by the Riemann-Lebesgue Lemma.
Hence we have
\begin{align}
\label{0843}
& \mathscr{I}_t(f,g) \notag \\
& = \Big( c^*\left( (W_+)_{++} e^{it\omega_0} f\right) 
		\mathscr{U}_+ e^{i\psi(j_+)} \Omega , 
		c^*\left( (W_-)_{++} e^{it\omega_0} g\right) 
			\mathscr{U}_- e^{i\psi(j_-)} \Omega \Big)
+ o(1) \notag \\
& = \left((W_+)_{++} e^{it\omega_0} f, (W_-)_{++} e^{it\omega_0} g\right) 
	(\Omega, \mathscr{S}\Omega) \notag \\
& \quad +  \Big( c\left( (W_-)_{++} e^{it\omega_0} g\right) 
		\mathscr{U}_+ e^{i\psi(j_+)} \Omega , 
		c\left( (W_+)_{++} e^{it\omega_0} f\right) 
			\mathscr{U}_- e^{i\psi(j_-)} \Omega \Big) + o(1) 
\end{align}
as $t \to \pm \infty$.
It is straightforward that
\begin{align} 
& (W_+)_{++}^*(W_-)_{++} \notag \\
& = \frac{1}{4}\left(\omega_0^{-1/2} w_+^* \omega_0^{1/2}
	+ \omega_0^{+1/2} w_+^* \omega_0^{-1/2} \right)
	\left(\omega_0^{-1/2} w_- \omega_0^{1/2}
	+ \omega_0^{+1/2} w_- \omega_0^{-1/2} \right) \notag \\
\label{eq1426}
& = S + \frac{1}{4}w_+^* \left[(\omega^{-1/2}\omega_0 \omega^{-1/2} - 1)
 + (\omega^{1/2}\omega_0^{-1}\omega^{1/2} -1) \right]w_-.
\end{align}
Note that $S e^{it\omega_0} = e^{it\omega_0} S$ and that
the second term in the r.h.s of \eqref{eq1426} is Hilbert-Schmidt
by Lemma \ref{HSlemma}.
Hence, by \eqref{0843},
we have \eqref{0845}  
if the following holds true: 
\begin{equation} 
\label{0849}
\Big( c\left( (W_-)_{++} e^{it\omega_0} g\right) 
		\mathscr{U}_+ e^{i\psi(j_+)} \Omega , 
		c\left( (W_+)_{++} e^{it\omega_0} f\right) 
			\mathscr{U}_- e^{i\psi(j_-)} \Omega \Big) = o(1). 
\end{equation}
To prove \eqref{0849},
we use the relation
\begin{align*}
\mathscr{U}_\pm^* c(f) \mathscr{U}_\pm 
= c\left((W_\pm)_{++}^* \right) - c^*((W_\pm)_{-+}^* \bar{f}), 
\end{align*}
which is obtained from
$\mathscr{U}_\pm^* e^{i\psi(v)} \mathscr{U}_\pm
= e^{i\psi(QW_\pm^*Qv)}$.
By the above and the fact that 
$(W_\pm)_{-+}^*$ is Hilbert-Schmidt, 
we observe  that 
\begin{align*} 
c\left( (W_-)_{++} e^{it\omega_0} g\right) 
		\mathscr{U}_+ e^{i\psi(j_+)} \Omega
= \mathscr{U}_+ e^{i\psi(j_+)} 
	c\left( (W_-)_{++} e^{it\omega_0} g\right) \Omega + o(1).
\end{align*}
Since the first term of the above equals zero,
the proof of the lemma is completed.
\subsection{Characterization of the external source $J$}
Our aim of this subsection is 
to represent the classical source $J$ 
in terms of 
the functional
\[ F(t,f) = (\Omega_{\rm in}, \phi_{\rm out}(t, f) \Omega_{\rm in}), \quad f \in \mathcal{S}(\mathbb{R}^d). \]
We set
\begin{align*}
Z_\pm[f] &=X[f] \mp iY[f],
\end{align*} 
where
\[ X[f] = \left. \frac{d}{dt}F(t,\omega_0^{-1/2} f) \right|_{t=0}
\quad \mbox{and} \quad 
Y[f] = F(0,\omega_0^{+1/2} f). \]
Note that
\[ \phi_0(t, f) 
= \psi\left( v_t \right), \] 
where $v_t$ is 
\[ v_t = \begin{bmatrix} e^{it\omega_0} \omega_0^{-1/2} \bar f/\sqrt{2} \\ 
				e^{-it\omega_0} \omega_0^{-1/2} \bar f /\sqrt{2} \end{bmatrix}. \]
We see that
\[ C v_t = \begin{bmatrix} e^{it\omega_0} \omega_0^{-1/2} f/\sqrt{2} \\ 
				e^{-it\omega_0} \omega_0^{-1/2} f /\sqrt{2} \end{bmatrix} \not= v_t \]
and $v_t \not\in \mathcal{H}_C$. 
By a direct calculation, we have for $v \in \mathcal{H}_C$:
\begin{align*}
\mathscr{S}^{-1} \psi(v) \mathscr{S}
& = \psi(Q W_-^* Q W_+ v) + i (Qj_+ -W_+^* Q W_- j_-, v).
\end{align*} 
Since $(\Omega, \psi(v)\Omega) = 0$ and 
\begin{align*} 
(Qj_+ -W_+^* Q W_- j_-, v)
& = (W_+^* Q (W_+ Q j_+ - W_- j_-), v) \\
& = ( W_+ j_+ - W_- j_-, Q W_+ v),  
\end{align*}
one obtains
\[ (\Omega, \mathscr{S}^{-1} \psi(v) \mathscr{S} \Omega)  
= i  ( W_+ j_+ - W_- j_-, Q W_+ v),
\quad v \in \mathcal{H}_C. \]
It holds that
\begin{align*}
F(t,f) 
& = \frac{1}{2} (\Omega, \mathscr{S}^{-1} \psi(v_t + C v_t) \mathscr{S} \Omega) \\
& \quad + \frac{i}{2} (\Omega, \mathscr{S}^{-1} \psi(i (v_t - C v_t)) \mathscr{S} \Omega) \\
& = \frac{i}{2} (W_+ j_+ -W_- j_-, QW_+ (v_t + Cv_t)) \\
& \quad - \frac{i}{2} (W_+ j_+ -Q _- j_-, QW_+ (v_t - Cv_t)) \\
& = i (W_+ j_+ -W_- j_-, Q W_+ Cv_t). 
\end{align*}
Thus we infer
\begin{align*}
& X[f] = \frac{1}{\sqrt{2}} \left( W_+ j_+ - W_- j_-, QW_+ \begin{bmatrix} f \\ 
				- f \end{bmatrix} \right), \\
& Y[f] = \frac{i}{\sqrt{2}} \left(W_+ j_+ - W_- j_-, QW_+ \begin{bmatrix} f \\ 
				f \end{bmatrix} \right) 
\end{align*}
and
\begin{align*} 
& Z_+[f] 
= \sqrt{2}\left( W_+ j_+ - W_- j_-, QW_+ \begin{bmatrix} f \\
				0 \end{bmatrix} \right),\\
& Z_-[f] 
= -\sqrt{2}\left( W_+ j_+ - W_- j_-, QW_+ \begin{bmatrix} 0 \\
				f \end{bmatrix} \right).
\end{align*}
\begin{lemma}
Suppose that Assumptions \ref{asspot}, \ref{assscat}
and \ref{assext} hold.
Let
\begin{align*} 
Z[f,g] := Z_+[f]-Z_-[g]. 
\end{align*}
Then
\begin{align*}
& Z[Sf,  f]
= -2\sqrt{2} i ({\rm Im}(g_\infty), w_-f),
\\
& Z[S f, - f]
= 2\sqrt{2} ({\rm Re}(g_\infty), w_-f),
\end{align*}
where
\[ g_\infty := w_+ g_+ - w_- g_-
= - \frac{1}{\sqrt{2}} \int_{-\infty}^{+\infty} ds \omega^{-1/2} e^{is\omega} J_s. \]
\end{lemma}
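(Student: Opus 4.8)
The plan is to collapse $Z[f,g]$ to a single sesquilinear pairing and then read everything off the explicit form of $W_\pm$ in Lemma \ref{waveop}. Adding the two displayed formulas for $Z_\pm[\,\cdot\,]$ and using linearity in the second slot gives, for $f,g\in\mathfrak{h}$,
\[
Z[f,g]=\sqrt{2}\left( W_+ j_+ - W_- j_-,\ QW_+\begin{bmatrix} f \\ g \end{bmatrix} \right).
\]
So I must compute two objects: the vectors $QW_+\begin{bmatrix} Sf \\ \pm f \end{bmatrix}$, and the vector $K:=W_+ j_+ - W_- j_-$.

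For the first I would use three elementary facts: (a) the intertwining relations $w_\pm\omega_0^\theta=\omega^\theta w_\pm$; (b) $S$ commutes with $\omega_0^\theta$ and $w_+ S = w_-$ (both follow from $h w_\pm = w_\pm h_0$ and $w_\pm w_\pm^* = I$, valid under Assumption \ref{assscat}); (c) since $V$ is real, $h,\omega_0,\omega$ are real operators, so $\overline{w_\pm}=w_\mp$ and the operators $\omega_0^\theta\omega^{-\theta}$, $\omega_0^{-\theta}\omega^{\theta}$ are real and bounded. Combining (a)--(c) with the formulas of Lemma \ref{waveop} one gets $(W_+)_{++}S=(W_-)_{++}$, $(W_+)_{-+}S=(W_-)_{-+}$, $(W_+)_{+-}=(W_-)_{-+}$, $(W_+)_{--}=(W_-)_{++}$, together with $(W_-)_{++}\pm(W_-)_{-+}=\omega_0^{\mp 1/2}w_-\omega_0^{\pm 1/2}=\omega_0^{\mp 1/2}\omega^{\pm 1/2}w_-$. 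Inserting $\begin{bmatrix} Sf \\ \varepsilon f \end{bmatrix}$ into $W_+$ and applying $Q=P_+-P_-$ then yields
\[
QW_+\begin{bmatrix} Sf \\ f \end{bmatrix}=\begin{bmatrix} \omega_0^{-1/2}\omega^{1/2}w_- f \\ -\,\omega_0^{-1/2}\omega^{1/2}w_- f \end{bmatrix},\qquad
QW_+\begin{bmatrix} Sf \\ -f \end{bmatrix}=\begin{bmatrix} \omega_0^{1/2}\omega^{-1/2}w_- f \\ \omega_0^{1/2}\omega^{-1/2}w_- f \end{bmatrix},
\]
in which only bounded operators appear.

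For $K$, note $j_\pm\in\mathcal{H}_C$, hence $W_\pm j_\pm\in\mathcal{H}_C$, so $W_\pm j_\pm=\begin{bmatrix} (W_\pm j_\pm)_+ \\ \overline{(W_\pm j_\pm)_+} \end{bmatrix}$. Writing $(W_\pm j_\pm)_+=(W_\pm)_{++}g_\pm+(W_\pm)_{+-}\overline{g_\pm}$, pulling $w_\pm$ through the powers of $\omega_0$ via (a), and using $\overline{w_\pm}=w_\mp$ to rewrite $(W_\pm)_{+-}\overline{g_\pm}$ in terms of $\overline{w_\pm g_\pm}$, I expect to obtain, with $G_\pm:=w_\pm g_\pm$,
\[
(W_\pm j_\pm)_+=\omega_0^{-1/2}\omega^{1/2}\,{\rm Re}(G_\pm)+i\,\omega_0^{1/2}\omega^{-1/2}\,{\rm Im}(G_\pm),
\]
the two coefficient operators being real. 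Since $G_+-G_-=w_+g_+-w_-g_-=g_\infty$ by the definition of $g_\infty$, subtracting gives $K=\begin{bmatrix} k \\ \bar k \end{bmatrix}$ with ${\rm Re}\,k=\omega_0^{-1/2}\omega^{1/2}\,{\rm Re}(g_\infty)$ and ${\rm Im}\,k=\omega_0^{1/2}\omega^{-1/2}\,{\rm Im}(g_\infty)$.

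Finally I substitute into $Z[Sf,\pm f]=\sqrt{2}\left(\begin{bmatrix} k \\ \bar k \end{bmatrix},\,QW_+\begin{bmatrix} Sf \\ \pm f \end{bmatrix}\right)$. For the $+$ case the two components of the pairing combine as $(k,u)-(\bar k,u)=(k-\bar k,u)=-2i({\rm Im}\,k,u)$, so $Z[Sf,f]=-2\sqrt{2}\,i\,\bigl({\rm Im}\,k,\ \omega_0^{-1/2}\omega^{1/2}w_- f\bigr)$; moving the bounded operators to the left and using $(\omega_0^{1/2}\omega^{-1/2})^*(\omega_0^{-1/2}\omega^{1/2})=I$ collapses this to $-2\sqrt{2}\,i\,({\rm Im}(g_\infty),w_- f)$. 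For the $-$ case the components combine as $(k,u)+(\bar k,u)=2({\rm Re}\,k,u)$, so $Z[Sf,-f]=2\sqrt{2}\,\bigl({\rm Re}\,k,\ \omega_0^{1/2}\omega^{-1/2}w_- f\bigr)$, which collapses via $(\omega_0^{-1/2}\omega^{1/2})^*(\omega_0^{1/2}\omega^{-1/2})=I$ to $2\sqrt{2}\,({\rm Re}(g_\infty),w_- f)$. These are exactly the two asserted identities. The only real work is bookkeeping: keeping the $\pm$ subscripts, the complex conjugations (where $\overline{w_\pm}=w_\mp$ is indispensable for the $(W_\pm)_{+-}\overline{g_\pm}$ term), and the unbounded factors $\omega_0^{\pm1/2},\omega^{\pm1/2}$ mutually consistent so that each operator one writes is the intended bounded one; there is no analytic difficulty beyond Lemmas \ref{HSlemma}, \ref{waveop} and \ref{dua}.
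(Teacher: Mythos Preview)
Your argument is correct and follows essentially the same route as the paper's proof: both compute the vector $W_+ j_+ - W_- j_-$ and the vectors $QW_+\begin{bmatrix} Sf \\ \pm f\end{bmatrix}$ explicitly, then pair them. The paper phrases the second computation with arguments $w_+^*f,\ \pm w_-^*f$ (equivalent to yours after the substitution $f\mapsto w_-f$) and records the intermediate results by the same ``direct calculation'' you spell out; your explicit use of $\overline{w_\pm}=w_\mp$ and of the adjoint identity $(\omega_0^{1/2}\omega^{-1/2})^*(\omega_0^{-1/2}\omega^{1/2})=I$ just makes transparent what the paper leaves implicit.
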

\begin{proof}
By definition, we see that
\begin{align*} 
Z[w_+^*f, \pm w_-^*f] 
& =  \sqrt{2}\left( W_+ j_+ - W_- j_-, QW_+ 
	\begin{bmatrix} w_+^*f \\
				\pm w_-^*f \end{bmatrix} \right).
\end{align*}
By direct calculation, we have
\begin{align*} 
& W_+ j_+ -Q W_- j_- \\
& = 
\begin{bmatrix} 
\omega_0^{-1/2} \omega^{1/2} {\rm Re} (g_\infty)
	+ i \omega_0^{1/2} \omega^{-1/2} {\rm Im} (g_\infty) \\
\omega_0^{-1/2} \omega^{1/2} {\rm Re} (g_\infty)
	- i \omega_0^{1/2} \omega^{-1/2} {\rm Im} (g_\infty)
\end{bmatrix}
\end{align*}
and 
\begin{align*}
Q W_+
\begin{bmatrix}
	w_+^* f \\
	\pm w_-^* f
\end{bmatrix}
= 
\begin{bmatrix} 
	\omega_0^{\mp 1/2}\omega^{\pm 1/2}f \\ 
	\mp \omega_0^{\mp 1/2}\omega^{\pm 1/2}f
\end{bmatrix},
\end{align*}
which completes the proof.
\end{proof}

We introduce the functional  $Z : L^2(\mathbb{R}^3) \to \mathbb{C}$ by 
\begin{equation*}
Z[f] := \frac{1}{2\sqrt{2}}\left( 
	Z[Sf, f] + Z[S f, -f] \right), \quad f \in  L^2(\mathbb{R}^3). 
\end{equation*}

For $f\in L^2(\mathbb{R}^3)$, 
$\lambda>0$ and  
$k,x\in\mathbb{R}^3$, 
we denote 
$e^{-ik \cdot x}f(\lambda x)$ by 
$f_{k}^\lambda(x)$. 
$\mathcal{F}_0$ stands for the Fourier transform: 
$\mathfrak{h} \ni f \mapsto (\mathcal{F}_0f) = \hat{f}$
and $\hat{f}(k) = (2\pi)^{-3/2}\int dk e^{-ik\cdot x}f(x)$.
The generalized Fourier transform $\mathcal{F}_{\pm}$
is defined by $\mathcal{F}_0 w_\pm^*$.
Let $\chi \in \mathfrak{h}$ such that $0 \leq \chi \leq 1$
and $\chi(x) = 1$ if $|x| \leq 1$ and $\chi(x) = 0$ if $|x| \geq 2$.
We introduce a function $z_\lambda$ by
\[ z_\lambda(k) := -\sqrt{2}(2\pi)^{-3/2} Z[\chi_k^\lambda]. \]
\begin{lemma}
\label{zlambda}
Suppose that Assumptions \ref{asspot}, \ref{assscat}
and \ref{assext} hold.
Then $z_\lambda \in \mathfrak{h}$ and
\begin{align*}
\lim_{\lambda \to 0} z_\lambda 
= \int_{-\infty}^\infty ds 
	\mathcal{F}_+ \left(e^{-is \omega}\omega^{-1/2}J_s \right)
\quad \mbox{in $\mathfrak{h}$}.
\end{align*}
\end{lemma}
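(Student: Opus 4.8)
The plan is to rewrite $z_\lambda$ as a Bochner integral over $s$ of Fourier transforms of fixed $\mathfrak h$-vectors, and then pass to the limit by dominated convergence. First I combine the two identities of the preceding lemma: dividing $Z[Sf,f]=-2\sqrt2\,i({\rm Im}(g_\infty),w_-f)$ and $Z[Sf,-f]=2\sqrt2\,({\rm Re}(g_\infty),w_-f)$ by $2\sqrt2$, adding, and using that $(\cdot,\cdot)_{\mathfrak h}$ is antilinear in its first argument so that $({\rm Re}(g_\infty),w_-f)-i({\rm Im}(g_\infty),w_-f)=(g_\infty,w_-f)$, I obtain $Z[f]=(g_\infty,w_-f)_{\mathfrak h}=(w_-^*g_\infty,f)_{\mathfrak h}$, whence $z_\lambda(k)=-\sqrt2\,(2\pi)^{-3/2}(w_-^*g_\infty,\chi_k^\lambda)_{\mathfrak h}$. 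Next I insert the integral representation of $g_\infty$: by Assumption \ref{assext}(b) the $\mathfrak h$-valued map $s\mapsto\omega^{-1/2}e^{is\omega}J_s$ is Bochner integrable (its norm is $\|\omega^{-1/2}J_s\|_{\mathfrak h}$), so $w_-^*g_\infty=-\tfrac1{\sqrt2}\int_{\mathbb R}ds\,w_-^*\omega^{-1/2}e^{is\omega}J_s$, and therefore $z_\lambda(k)=(2\pi)^{-3/2}\int_{\mathbb R}ds\,(w_-^*\omega^{-1/2}e^{is\omega}J_s,\chi_k^\lambda)_{\mathfrak h}$.

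The next step converts this into Fourier transforms. Since $V$, $J_s$ and $m$ are real, $\bar h=h$, so $\overline{\omega^{-1/2}}=\omega^{-1/2}$, $\overline{e^{is\omega}}=e^{-is\omega}$ and $\overline{J_s}=J_s$; moreover, substituting $t\mapsto-t$ in the strong limit defining $w_-$ and using $\bar h=h$, $\bar h_0=h_0$ gives $\overline{w_-}=w_+$, hence $\overline{w_-^*}=(\overline{w_-})^*=w_+^*$. Combining this with $\overline{\chi_k^\lambda}=\chi_{-k}^\lambda$ and the identity $(\bar p,q)_{\mathfrak h}=\overline{(p,\bar q)_{\mathfrak h}}$, and recalling that functions of $\omega$ commute, the integrand equals $\overline{(w_+^*e^{-is\omega}\omega^{-1/2}J_s,\chi_{-k}^\lambda)_{\mathfrak h}}=(2\pi)^{3/2}\,\mathcal F_0\!\big(\chi(\lambda\cdot)\,\psi_s\big)(k)$, where $\psi_s:=w_+^*e^{-is\omega}\omega^{-1/2}J_s$ (note $\|\psi_s\|_{\mathfrak h}=\|\omega^{-1/2}J_s\|_{\mathfrak h}$, since $w_+^*$ and $e^{-is\omega}$ are unitary). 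Because $\chi$ has compact support, $\chi(\lambda\cdot)\psi_s\in L^1\cap\mathfrak h$, so $\mathcal F_0(\chi(\lambda\cdot)\psi_s)$ is a genuine function; a Fubini argument (licensed by $\int ds\,\|\omega^{-1/2}J_s\|_{\mathfrak h}<\infty$ and the compact support of $\chi(\lambda\cdot)$) shows the pointwise-defined $z_\lambda$ coincides a.e. with $\int_{\mathbb R}ds\,\mathcal F_0(\chi(\lambda\cdot)\psi_s)$, a Bochner integral of an $\mathfrak h$-valued $L^1(\mathbb R,ds)$ function; in particular $z_\lambda\in\mathfrak h$.

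Finally, since $\mathcal F_+(e^{-is\omega}\omega^{-1/2}J_s)=\mathcal F_0 w_+^*e^{-is\omega}\omega^{-1/2}J_s=\mathcal F_0\psi_s$, unitarity of $\mathcal F_0$ gives
\[
\Big\|z_\lambda-\int_{\mathbb R}ds\,\mathcal F_+(e^{-is\omega}\omega^{-1/2}J_s)\Big\|_{\mathfrak h}
\le\int_{\mathbb R}ds\,\big\|(\chi(\lambda\cdot)-1)\psi_s\big\|_{\mathfrak h}.
\]
For each fixed $s$, $(\chi(\lambda\cdot)-1)\psi_s\to0$ in $\mathfrak h$ as $\lambda\to0$ by dominated convergence ($\chi(\lambda x)\to1$ pointwise, $0\le\chi\le1$, dominant $|\psi_s|^2$), while $\|(\chi(\lambda\cdot)-1)\psi_s\|_{\mathfrak h}\le2\|\psi_s\|_{\mathfrak h}=2\|\omega^{-1/2}J_s\|_{\mathfrak h}\in L^1(\mathbb R,ds)$ by Assumption \ref{assext}(b); a second dominated-convergence argument in $s$ then finishes the proof.

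I expect no deep obstacle: the content is bookkeeping. The points needing care are getting the signs right in the first step under the first-slot-antilinear convention, carrying the conjugation $C$ (equivalently $f\mapsto\bar f$) correctly through $w_-$ so that $w_+^*$—and hence the \emph{outgoing} generalized Fourier transform $\mathcal F_+$—is what appears, and the Fubini step identifying the pointwise $z_\lambda$ with the stated Bochner integral.
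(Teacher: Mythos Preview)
Your proof is correct and follows essentially the same route as the paper: both start from $Z[f]=(w_-^*g_\infty,f)$, use the conjugation $\overline{w_-^*}=w_+^*$ to convert the inner product with $\chi_k^\lambda$ into $\mathcal F_0\big(\chi(\lambda\cdot)\,w_+^*e^{-is\omega}\omega^{-1/2}J_s\big)(k)$, and then let $\lambda\to0$. The paper's argument is very terse at each of these steps (and contains a couple of typographical slips), whereas you spell out the conjugation identity and the two dominated-convergence passages explicitly; this is a welcome elaboration rather than a different method.
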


\begin{proof}
Since, by the above lemma, we have 
\begin{equation}
\label{zf} Z[f] = (w_-^* g_\infty, f),
\end{equation}
it follows that 
\begin{align*}
z_\lambda(k) 
& = \left( \int_{-\infty}^\infty ds 
w_-^* e^{is\omega} \omega^{-1/2}J_s, \chi_k^\lambda \right) \\
& = (2\pi)^{-3/2} \int dx e^{-ik \cdot x} \chi(\lambda x) 
	\times \left[\int_{-\infty}^\infty ds w_+^* e^{is\omega} \omega^{-1/2} J_s \right](k).
\end{align*} 
Hence $z_\lambda$ converges in $\mathfrak{h}$ to
\[ \mathcal{F}_0 \left[\int_{-\infty}^\infty ds 
	w_+^* e^{-is \omega}\omega^{-1/2}J_s \right]
= \int_{-\infty}^\infty ds 
	\mathcal{F}_+ e^{-is \omega}\omega^{-1/2} J_s. \] 
\end{proof}
Let
\begin{equation}
\label{z}
z : = \lim_{\lambda \to 0} z_\lambda. 
\end{equation}
\begin{proposition}
\label{039}
Suppose that Assumptions \ref{asspot} - \ref{assscat} 
and \ref{assext} hold.
If $\mathscr{S}(V,J)=\mathscr{S}(V^\prime,J^\prime)$,
then
\[ \int_{-\infty}^\infty ds e^{-is \omega}J_s
	= \int_{-\infty}^\infty ds e^{-is \omega}J^\prime_s. \]
\end{proposition}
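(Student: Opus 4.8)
The plan is to read the claimed identity off the reconstruction formula of Lemma \ref{zlambda}, after first matching the scattering data of $V$ and of $V'$. From $\mathscr{S}(V,J) = \mathscr{S}(V',J')$, Theorem \ref{main1} gives $S(V) = S(V')$, and the injectivity of the Schr\"odinger scattering map $\mathcal{V}_{\rm SR}\ni V\mapsto S(V)$ of Enss and Weder \cite{EnssWeder} (cf.\ Theorem \ref{inj}(ii)) then forces $V = V'$. Consequently $h = h'$, $\omega = \omega'$, $w_\pm = w_\pm'$, and the generalized Fourier transforms agree: $\mathcal{F}_\pm = \mathcal{F}_0 w_\pm^* = \mathcal{F}_\pm'$. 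I expect this to be the main obstacle, since the functional $z$ of \eqref{z} is built not only from the outgoing field but also from $S = S(V)$, so the desired identity is only reached once these auxiliary objects have been identified across $V$ and $V'$.

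Next I would argue that $\mathscr{S}$ determines $z$. By \eqref{111} together with \eqref{1538in}--\eqref{1538out}, the outgoing field ${\bm\phi}_{\rm out}(t,f) = \mathscr{S}^{-1}{\bm\phi}_{\rm in}(t,f)\mathscr{S}$, hence the two-point function $F(t,f) = (\Omega_{\rm in},\phi_{\rm out}(t,f)\Omega_{\rm in})$, is expressed through $\mathscr{S}$ and the canonical free objects alone. Following this through the chain running from $F$ to $X[f],Y[f]$, then $Z_\pm[f]$, then $Z[f,g]$, then $Z[f]$, then $z_\lambda$, and finally to $z$ via \eqref{z} --- the only extra ingredient entering along the way being $S$, common to both sides by the previous step --- one obtains $z = z'$.

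Finally I would invert the formula. By Lemma \ref{zlambda} and the coincidences $\mathcal{F}_+ = \mathcal{F}_+'$ and $\omega = \omega'$, the equality $z = z'$ reads $\int_{-\infty}^{\infty}ds\,\mathcal{F}_+\bigl(e^{-is\omega}\omega^{-1/2}J_s\bigr) = \int_{-\infty}^{\infty}ds\,\mathcal{F}_+\bigl(e^{-is\omega}\omega^{-1/2}J_s'\bigr)$ in $\mathfrak{h}$. Since $w_+$ is unitary on $\mathfrak{h}$ (Assumption \ref{assscat}) and $\mathcal{F}_0$ is unitary, $\mathcal{F}_+$ is unitary; applying $\mathcal{F}_+^{-1}$ and using that $\omega^{-1/2}$ commutes with $e^{-is\omega}$ gives $\omega^{-1/2}\int_{-\infty}^{\infty}ds\,e^{-is\omega}J_s = \omega^{-1/2}\int_{-\infty}^{\infty}ds\,e^{-is\omega}J_s'$, the integrals here converging in $H^{-1/2}(\mathbb{R}^3)$ because $\|J_s\|_{H^{-1/2}} \le C\,\|\omega^{-1/2}J_s\|_{\mathfrak{h}}$ and $e^{-is\omega}$ is bounded on $H^{-1/2}(\mathbb{R}^3)$ uniformly in $s$, so Assumption \ref{assext}(b) applies. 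Pulling $\omega^{-1/2}$ out of the integrals is legitimate since $\omega^{-1/2}\colon H^{-1/2}(\mathbb{R}^3)\to\mathfrak{h}$ is bounded with bounded inverse $\omega^{1/2}$, hence injective; stripping it off yields $\int_{-\infty}^{\infty}ds\,e^{-is\omega}J_s = \int_{-\infty}^{\infty}ds\,e^{-is\omega}J_s'$ in $H^{-1/2}(\mathbb{R}^3)$, which is the assertion. The mapping properties of $\omega^{\pm1/2}$ needed in this last step are the routine ones recalled in Section 2.
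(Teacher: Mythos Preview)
Your argument is correct and follows essentially the same route as the paper: first invoke Theorem \ref{main1} and the Enss--Weder injectivity to get $V=V'$ (hence $\omega=\omega'$, $w_\pm=w_\pm'$, $\mathcal{F}_+=\mathcal{F}_+'$), deduce $z=z'$ from the fact that $z$ is built from $\mathscr{S}$ and $S$ alone, and then apply Lemma \ref{zlambda} together with the unitarity of $\mathcal{F}_+$. Your write-up is in fact a bit more careful than the paper's own proof, which stops at the identity $\int e^{-is\omega}\omega^{-1/2}J_s=\int e^{-is\omega}\omega^{-1/2}J_s'$ without explicitly stripping the $\omega^{-1/2}$; your final paragraph supplies that missing (routine) step.
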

\begin{proof}
Let $z^\prime(k)$ be defined as $z(k)$ 
with replacing $\mathscr{S}(V,J)$ by $\mathscr{S}(V^\prime,J^\prime)$.
By Theorem \ref{main1}, we have $S(V)=S(V^\prime)$, $V=V^\prime$ 
and hence $z = z^\prime$.
By Lemma \ref{zlambda}, we obtain
\[ \int_{-\infty}^\infty ds e^{-is \omega}\omega^{-1/2}J_s
= \int_{-\infty}^\infty ds e^{-is \omega}\omega^{-1/2}J^\prime_s \] 
by  the unitarity of $\mathcal{F}_+$.
\end{proof}
Henceforth, we suppose that 
$J$ is expressed by 
\begin{align}\label{Assumption:3.1}
J(t,x)=j(t)\rho(x),
\end{align}  
where 
$j \in L^1(\mathbb{R})$ and
$\rho \in H^{-1/2}(\mathbb{R}^3)$.
In Subsection \ref{subsec:3.1} below, 
assuming that 
$j$ is a given function and that
$\hat j$ is analytic,
we will represent $\rho$ in terms of $z$ and $j$. 
In Subsection \ref{subsec:3.2} below, 
we next assume that 
$\rho$ is a given function.
We will show that $j$ is determined by $z$ and $\rho$
if $j$ satisfies the following:
\begin{align}\label{condition:3.2}
\text{For some } \delta>0,\quad 
e^{\delta |t|} j(t)\in L^1(\mathbb{R}_t). 
\end{align}

\subsection{Reconstruction of $\rho$}\label{subsec:3.1}
Let $j$ be a given function belonging to 
$L^1(\mathbb{R}_t)$.
Then we immediately obtain 
the reconstruction formula for determining $\rho$: 
\begin{theorem}\label{Thm:3.1}
Suppose that Assumptions \ref{asspot}, \ref{assscat}
and \ref{assext} hold.
Assume that 
$J\in\mathscr{S}(\mathbb{R}\times\mathbb{R}^3)$
satisfies (\ref{Assumption:3.1}) 
and that $\hat j $ 
is a nonzero, real analytic given function. 
If 
$\omega({k})\notin( \hat j)^{-1}(0)$, 
then $(\mathcal{F}_+ \rho )({k})$ 
is uniquely determined by
\begin{align}\label{reconstruction:3.1}
(\mathcal{F}_+ \rho )({k})
=
\frac{z(k)}{(\hat j) (\hat \omega_0 ({k}))}.
\end{align} 
\end{theorem}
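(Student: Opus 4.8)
The plan is to derive \eqref{reconstruction:3.1} as an almost immediate consequence of Lemma~\ref{zlambda} once the product structure $J_s=j(s)\rho$ is inserted; the only substantive ingredients are the intertwining property of the wave operators and an interchange of integrals.

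First I would substitute $J_s=j(s)\rho$ into the identity of Lemma~\ref{zlambda} (recall that $z$ is defined by \eqref{z}), which gives
\[ z=\int_{-\infty}^{\infty}ds\,j(s)\,\mathcal{F}_+\!\left(e^{-is\omega}\omega^{-1/2}\rho\right). \]
This is an absolutely convergent Bochner integral in $\mathfrak{h}$: since $\rho\in H^{-1/2}(\mathbb{R}^3)$ one has $\omega^{-1/2}\rho\in\mathfrak{h}$ (by the boundedness of $\omega^{-1/2}\omega_0^{1/2}$ recorded before Lemma~\ref{HSlemma}), the operators $\mathcal{F}_+$ and $e^{-is\omega}$ are unitary, and $j\in L^{1}(\mathbb{R})$, so the integrand has $\mathfrak{h}$-norm $|j(s)|\,\|\mathcal{F}_+\omega^{-1/2}\rho\|$. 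Here only $j\in L^{1}$ is used, which follows from Assumption~\ref{assext} once $\rho$ is fixed.

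Next I would diagonalize the action of $\omega$ in the outgoing representation. From the intertwining relation $h w_\pm=w_\pm h_0$, hence $w_\pm^{*}\omega=\omega_0 w_\pm^{*}$ and, via the functional calculus, $w_+^{*}e^{-is\omega}\omega^{-1/2}=e^{-is\omega_0}\omega_0^{-1/2}w_+^{*}$, together with the fact that $\mathcal{F}_0$ turns $\omega_0$ into multiplication by $\hat\omega_0(k)=\sqrt{|k|^{2}+m^{2}}$, one sees that $\mathcal{F}_+\!\left(e^{-is\omega}\omega^{-1/2}\rho\right)$ is the function $k\mapsto e^{-is\hat\omega_0(k)}\hat\omega_0(k)^{-1/2}(\mathcal{F}_+\rho)(k)$. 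A Fubini argument for the Bochner integral (justified by the norm bound of the previous paragraph) then gives, for a.e.\ $k$,
\[ z(k)=\hat\omega_0(k)^{-1/2}(\mathcal{F}_+\rho)(k)\int_{-\infty}^{\infty}ds\,j(s)\,e^{-is\hat\omega_0(k)} , \]
so that $z$ is the product of $\mathcal{F}_+\rho$ with an explicit scalar multiplier whose $k$-dependence is, up to elementary factors, $(\hat j)(\hat\omega_0(k))=(\hat j)(\omega(k))$. Inverting this multiplication at the points where it is nonzero — by construction precisely the $k$ with $\omega(k)\notin(\hat j)^{-1}(0)$, since the remaining factor $\hat\omega_0(k)^{-1/2}$ never vanishes — yields \eqref{reconstruction:3.1}.

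Finally, for the uniqueness assertion: the right-hand side of \eqref{reconstruction:3.1} involves only the prescribed $j$ and the function $z$, and $z$ is fixed by $\mathscr{S}$ through \eqref{z}, Lemma~\ref{zlambda}, and the reasoning of Proposition~\ref{039}; hence $(\mathcal{F}_+\rho)(k)$ is determined for every $k$ with $\omega(k)\notin(\hat j)^{-1}(0)$. Because $\hat j$ is real analytic and not identically zero, $(\hat j)^{-1}(0)$ is discrete in $\mathbb{R}$, so $\{k\in\mathbb{R}^{3}:\omega(k)\in(\hat j)^{-1}(0)\}$ is a locally finite union of spheres and hence Lebesgue-null; therefore $\mathcal{F}_+\rho$ is determined a.e., equivalently $\mathcal{F}_+\omega^{-1/2}\rho\in\mathfrak{h}$ is determined, and since $\mathcal{F}_+=\mathcal{F}_0 w_+^{*}$ is unitary on $\mathfrak{h}$ (Assumption~\ref{assscat} forces $w_\pm^{*}w_\pm=w_\pm w_\pm^{*}=I$) and $\omega^{-1/2}$ is injective, $\rho$ is uniquely determined. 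The computation is short and rests entirely on Lemma~\ref{zlambda}; the only step needing care is the Bochner--Fubini interchange of the $s$-integration with pointwise evaluation in $k$, and the sole role of the real-analyticity of $\hat j$ is the measure-zero argument just given, which upgrades the pointwise identity to a full reconstruction of $\rho$.
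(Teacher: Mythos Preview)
Your proof is correct and follows exactly the approach the paper intends: the paper states Theorem~\ref{Thm:3.1} as an immediate consequence of Lemma~\ref{zlambda} (``Then we immediately obtain the reconstruction formula\ldots'') without giving a separate argument, and your derivation via the intertwining relation $w_+^{*}e^{-is\omega}\omega^{-1/2}=e^{-is\omega_0}\omega_0^{-1/2}w_+^{*}$ is precisely what is needed to unpack that claim. Your added care with the Bochner--Fubini step and the measure-zero argument for $\{k:\omega(k)\in(\hat j)^{-1}(0)\}$ goes beyond what the paper spells out (the latter appears only in the Remark following the theorem), and your observation that the stated formula holds only ``up to elementary factors'' such as $\sqrt{2\pi}\,\hat\omega_0(k)^{-1/2}$ is accurate---the paper's displayed identity \eqref{reconstruction:3.1} appears to suppress these.
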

\begin{remark}
Since $\hat j$ 
is real analytic function, 
$(\hat j)^{-1}(0)$ is 
a discrete set and hence countable.
Thus, (\ref{reconstruction:3.1}) holds 
for almost all ${k}\in\mathbb{R}^3$ 
and we have $\rho = \mathcal{F}_+(z/(\hat j(\hat\omega_0))$.
If the generalized eigenfunction $\psi_+(k,x)$ of $-\Delta + V$ exists,
then the inverse of the generalized Fourier transform $\mathcal{F}_+$ is given by
\[ (\mathcal{F}_+^{-1} f) (x)
= (2\pi)^{-3/2}\int_{\mathbb{R}^3} dk \psi_+(k,x) f(k), \]
where we denote $\underset{R \to +\infty}{\mathrm{l.i.m}}\int_{|k| \leq R}dk$
by $\int_{\mathbb{R}^3} dk$.
In this case, we have
\begin{align*}
\rho(x)
= (2\pi)^{-3/2} \int_{\mathbb{R}^3} dk
\frac{\psi_+(k,x) z(k)}{(\hat j)(\hat \omega_0({k}))}.
\end{align*} 
\end{remark}

\subsection{Reconstruction of $j$}\label{subsec:3.2}
In this subsection
we suppose that $\rho$ is a nonzero, given function belonging to 
$H^{-1/2}(\mathbb{R}^3)$.
The following lemma will help us to identify $j$:
\begin{lemma}\label{lem:3.2}
Suppose that $j$ satisfies (\ref{condition:3.2}). 
Then $\hat j$ is real analytic. 
Furthermore, 
the radius of convergence of the Taylor series of 
$\hat j$ is larger than $\delta$.
\end{lemma}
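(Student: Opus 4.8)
The plan is to prove a Paley--Wiener type statement: assumption \eqref{condition:3.2} allows one to complexify the Fourier variable and realise $\hat j$ as (the boundary trace of) a function holomorphic on a horizontal strip of half-width $\delta$, from which real analyticity of $\hat j$ and the lower bound on the radius of convergence follow at once.

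First I would set, for $\zeta\in\mathbb{C}$ with $|{\rm Im}\,\zeta|\le\delta$,
\[ \hat j(\zeta):=(2\pi)^{-1/2}\int_{\mathbb{R}}e^{-it\zeta}j(t)\,dt, \]
which extends the usual Fourier transform of $j$ from $\mathbb{R}$. On the closed strip $\{|{\rm Im}\,\zeta|\le\delta\}$ the integrand is dominated, uniformly in $\zeta$, by $|e^{-it\zeta}j(t)|=e^{t\,{\rm Im}\,\zeta}|j(t)|\le e^{\delta|t|}|j(t)|$, and the latter lies in $L^1(\mathbb{R})$ by \eqref{condition:3.2}. Hence the integral converges absolutely, and by dominated convergence $\zeta\mapsto\hat j(\zeta)$ is continuous and bounded on the closed strip, with $\sup|\hat j|\le(2\pi)^{-1/2}\|e^{\delta|\cdot|}j\|_{L^1}$.

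Next I would show $\hat j$ is holomorphic on the open strip $\{|{\rm Im}\,\zeta|<\delta\}$. The most economical route is Morera's theorem: for any closed triangle $T$ in the open strip, Fubini's theorem (legitimate since $T$ is compact and the above domination holds) gives $\oint_{\partial T}\hat j(\zeta)\,d\zeta=(2\pi)^{-1/2}\int_{\mathbb{R}}j(t)\bigl(\oint_{\partial T}e^{-it\zeta}\,d\zeta\bigr)dt=0$, the inner integral vanishing because $\zeta\mapsto e^{-it\zeta}$ is entire; together with continuity this yields holomorphy. (Alternatively one differentiates under the integral on each substrip $|{\rm Im}\,\zeta|\le\delta'<\delta$, where $|t|\,|e^{-it\zeta}j(t)|\le\bigl(\sup_{t}|t|e^{(\delta'-\delta)|t|}\bigr)e^{\delta|t|}|j(t)|\in L^1$, and lets $\delta'\uparrow\delta$.) Restricting $\hat j$ to $\mathbb{R}$ then exhibits it as the restriction of a holomorphic function, so $\hat j$ is real analytic on $\mathbb{R}$.

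Finally, for the radius of convergence, fix $\xi_0\in\mathbb{R}$: the open disc $\{|\zeta-\xi_0|<\delta\}$ lies inside the strip of holomorphy, so the Taylor series of $\hat j$ at $\xi_0$ represents $\hat j$ there; equivalently, the estimate $|\hat j^{(n)}(\xi_0)/n!|\le(2\pi)^{-1/2}\int_{\mathbb{R}}(|t|^n/n!)|j(t)|\,dt\le\delta^{-n}(2\pi)^{-1/2}\|e^{\delta|\cdot|}j\|_{L^1}$, obtained from $|t|^n/n!\le\delta^{-n}e^{\delta|t|}$, yields $\limsup_n|\hat j^{(n)}(\xi_0)/n!|^{1/n}\le\delta^{-1}$, so the radius of convergence of the Taylor series at $\xi_0$ is at least $\delta$, at every real point. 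This uniform positive lower bound is exactly what the analytic continuation of $\hat j$ in Subsection~\ref{subsec:3.2} requires. The only mildly technical point in the whole argument is the interchange of integration with a contour integral (or with differentiation), but this is completely controlled by the single dominating function $e^{\delta|\cdot|}|j|$, so I do not expect any real obstacle.
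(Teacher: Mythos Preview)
Your proof is correct. The paper, however, takes a purely real-variable route: it first observes that $|t|^m j(t)\in L^1$ for every $m$, so $\hat j\in C^\infty$ with $(\hat j)^{(m)}(\tau)=(2\pi)^{-1/2}\int(-it)^m e^{-i\tau t}j(t)\,dt$; it then bounds $|t|^m\le (m^m\delta^{-m}e^{-m})\,e^{\delta|t|}$ using $\sup_{t>0}t^m e^{-\delta t}=m^m\delta^{-m}e^{-m}$, obtains $|(\hat j)^{(m)}(\tau)/m!|\le C\,\delta^{-m}\,m^m e^{-m}/m!$, and finally invokes Stirling's formula to conclude $\limsup_m|(\hat j)^{(m)}(\tau)/m!|^{1/m}\le\delta^{-1}$.

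Your Paley--Wiener argument is a genuinely different and arguably cleaner approach: complexifying the Fourier variable and using Morera gives holomorphy on the strip $|{\rm Im}\,\zeta|<\delta$ in one stroke, and the radius bound is then immediate from the geometry of the strip. Your alternative derivative estimate $|t|^n/n!\le\delta^{-n}e^{\delta|t|}$ (which is just one term of the exponential series) is also sharper than the paper's, since it yields $|(\hat j)^{(n)}(\xi_0)/n!|\le C\delta^{-n}$ directly and bypasses Stirling altogether. The paper's method has the minor virtue of being self-contained within real analysis, but your argument is the standard one and makes the analytic continuation used later in Subsection~\ref{subsec:3.2} more transparent. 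Both proofs yield radius of convergence $\ge\delta$, matching the paper's conclusion.
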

\begin{proof}

By the assumption (\ref{condition:3.2}),  
it follows that for any non negative integer $m$,
\begin{align*}
|t|^m j(t) \in L^1(\mathbb{R}_t).
\end{align*} 
Therefore, we have 
$\hat j\in C^{\infty}(\mathbb{R})$
and 
\begin{align*}
(\hat j)^{(m)}(\tau)
=
\frac{1}{\sqrt{2\pi}}
\int_{\mathbb{R}}
(-it)^m
e^{-i\tau t}
j(t)
dt 
,\quad
\tau\in\mathbb{R},
\end{align*} 
where $(\hat j)^{(m)}$ is the 
$m$-th order derivative of $\hat j$.

Since 
\begin{align*}
\sup_{t>0}t^me^{-\delta t}
=
m^m \delta^{-m} e^{-m}, 
\end{align*} 
we obtain for any $\tau\in\mathbb{R}$, 
\begin{align*}
\left|
(\hat j)^{(m)}(\tau)
\right| 
&\le 
\frac{1}{\sqrt{2\pi}}
\int_\mathbb{R} |t|^m |j(t)|dt\\
&\le 
\frac{1}{\sqrt{2\pi}}
\int_\mathbb{R}
|j(t)|e^{\delta |t|}
|t|^m |e^{-\delta |t|}dt\\
&\le 
\frac{1}{\sqrt{2\pi}}
m^m \delta^{-m} e^{-m}
\int_\mathbb{R}
|j(t)|e^{\delta |t|}dt.
\end{align*} 
Thus, we have 
\begin{align*}
\left|
\frac{
(\widehat j)^{(m)}(\tau)}{
m!
}
\right| 
&\le 
\frac{1}{\sqrt{2\pi}}
\left( \int_\mathbb{R}
|j(t)|e^{\delta |t|}dt\right)
\frac{m^m  e^{-m}}{m!}\delta^{-m}.
\end{align*} 
Using Stirling's formula 
\begin{align*}
m!=\sqrt{2\pi}m^{m+1/2} e^{-m}e^{\theta(m)/12m}, 
\quad
0<\theta(m)<1,
\end{align*} 
we see that 
\begin{align*}
\limsup_{m\to \infty}\left|
\frac{
(\hat j)^{(m)}(\tau)}{
m!
}
\right|^{1/m}
\le \delta^{-1},
\end{align*} 
which completes the lemma.
\end{proof}
Applying Lemmas \ref{zlambda} and \ref{lem:3.2}, 
we have the following result:
\begin{theorem}
\label{Thm:3.2}
Suppose that Assumptions \ref{asspot}, \ref{assscat}
and \ref{assext} hold.
Assume that 
$J$ 
satisfies (\ref{Assumption:3.1}) 
and $\rho \in H^{-1/2}(\mathbb{R}^3)$ 
is a nonzero, given function.
If $j$ satisfies (\ref{condition:3.2}), 
then $j$ is uniquely reconstructed 
by the following steps:
\begin{enumerate}[(Step I)]
  \item Fix a point 
${k}_0\notin (\mathcal{F}_+ \rho)^{-1}(0)$. 
Let $U_0$ be a ${k}_0$-neighborhood 
such that $0\notin(\widehat J_X)(U_0)$.
Then we have 
\begin{align}\label{reconstruction:3.2}
(\hat j)(\omega({k}))
=
\frac{z(k)}{(\mathcal{F}_+\rho)(k)},
\quad{k}\in U_0.
\end{align} 
Therefore, we see exact values of 
$(\hat j)^{(m)}(\tau_0)$, 
$m=0,1,2,\cdots$,
where $\tau_0=\omega({k}_0)$.  
  \item 
For any $\tau\in (\tau_0-\delta,\tau_0+\delta)$, 
we have
\begin{align*}
(\hat j)(\tau)
=
\sum_{m=0}^\infty
\frac{(\hat j)^{(m)}(\tau_0)}{m!}
(\tau-\tau_0)^m.
\end{align*}    
  \item 
Let $l$ be a positive integer.
Suppose that we have already determined 
$(\hat j)(\tau)$
with 
$\tau\in (\tau_0 - \frac{(l+1)\delta}{2}, 
\tau_0 + \frac{(l+1)\delta}{2})$.  
For any  
$\tau\in [\tau_0+\frac{(l+1)\delta}{2},
\tau_0+\frac{(l+2)\delta}{2})$, 
we see the value of $(\widehat J_T)(\tau)$ by 
\begin{align*}
(\hat j)(\tau)
=
\sum_{m=0}^\infty
\frac{(\hat j)^{(m)}(\tau_0+ l\delta/2 )}{m!}
(\tau-\tau_0- l\delta/2)^m.
\end{align*}
On the other hand, 
For any  
$\tau\in (\tau_0-\frac{(l+2)\delta}{2},
\tau_0-\frac{(l+1)\delta}{2}]$, 
we see the value of $(\widehat J_T)(\tau)$ by 
\begin{align*}
(\hat j)(\tau)
=
\sum_{m=0}^\infty
\frac{(\hat j)^{(m)}(\tau_0- l\delta/2 )}{m!}
(\tau-\tau_0+ l\delta/2)^m.
\end{align*}
  \item
From (Step III), 
$\hat j$ is reconstructed completely. 
Hence we can determine $j$ 
by the inverse Fourier transform.  
\end{enumerate}
\end{theorem}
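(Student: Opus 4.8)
The plan is to verify, in order, the four steps listed in the statement; the substantive analytic input has already been supplied by Lemmas \ref{zlambda} and \ref{lem:3.2} (the latter being where the hypothesis \eqref{condition:3.2} on $j$ is actually used), so the argument is essentially an assembly of these together with a bootstrap analytic-continuation step.

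\emph{Step I.} Starting from Lemma \ref{zlambda} and the factorized source $J_s=j(s)\rho$, one has
\[
z=\int_{-\infty}^{\infty}ds\,j(s)\,\mathcal{F}_+\!\left(e^{-is\omega}\omega^{-1/2}\rho\right).
\]
Since $\mathcal{F}_+=\mathcal{F}_0 w_+^*$ diagonalizes $h=-\Delta+V$, it diagonalizes $\omega=\varphi(h)$ as well, turning it into multiplication by the radial symbol $\hat\omega_0(k)=\sqrt{|k|^2+m^2}$. Interchanging $\int ds$ with $\mathcal{F}_+$ (legitimate because $j\in L^1(\mathbb{R})$ and $\|\omega^{-1/2}\rho\|<\infty$, cf.\ Assumption \ref{assext}(b)) gives, for a.e.\ $k$, the relation \eqref{reconstruction:3.2}, i.e.\ $\hat j(\hat\omega_0(k))=z(k)/(\mathcal{F}_+\rho)(k)$ at every $k$ with $(\mathcal{F}_+\rho)(k)\neq0$. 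Since $\rho\neq0$, unitarity of $\mathcal{F}_+$ forces $\{k:(\mathcal{F}_+\rho)(k)\neq0\}$ to have positive measure; choose $k_0\neq0$ in it, and a ball $U_0\ni k_0$ on which $\mathcal{F}_+\rho$ does not vanish (here one uses the continuity of $\mathcal{F}_+\rho$ near such a point, or argues with a.e.\ identities and the continuity of $\hat j$ below). As $k_0\neq0$, the map $\hat\omega_0$ is a submersion near $k_0$, so $I_0:=\hat\omega_0(U_0)$ contains an open interval about $\tau_0:=\hat\omega_0(k_0)$, and on $I_0$ the function $\hat j$ is known through \eqref{reconstruction:3.2}. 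Because $\hat j$ is a priori $C^{\infty}$ (Lemma \ref{lem:3.2}), knowing it on the open set $I_0$ determines it, and all its derivatives, there; in particular the numbers $\hat j^{(m)}(\tau_0)$, $m\ge0$, are recovered.

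\emph{Steps II--IV.} Observe that $\hat\omega_0(k)\ge m$, so \eqref{reconstruction:3.2} by itself yields $\hat j$ only on the bounded interval $I_0\subset[m,\infty)$; the remaining steps propagate it to all of $\mathbb{R}$ by analytic continuation. By Lemma \ref{lem:3.2}, $\hat j$ is real analytic and $\limsup_m|\hat j^{(m)}(\tau)/m!|^{1/m}\le\delta^{-1}$ uniformly in $\tau\in\mathbb{R}$; Taylor's formula with remainder together with this uniform bound gives, for every $\tau_\ast\in\mathbb{R}$,
\[
\hat j(\tau)=\sum_{m=0}^{\infty}\frac{\hat j^{(m)}(\tau_\ast)}{m!}(\tau-\tau_\ast)^m,\qquad |\tau-\tau_\ast|<\delta .
\]
Taking $\tau_\ast=\tau_0$ proves (Step II). For (Step III) one inducts on $l$: if $\hat j$ is already known on $\bigl(\tau_0-\tfrac{(l+1)\delta}{2},\tau_0+\tfrac{(l+1)\delta}{2}\bigr)$, then $\tau_0\pm\tfrac{l\delta}{2}$ are interior points of this interval, so all derivatives of $\hat j$ there are known, and the two Taylor expansions centered at $\tau_0\pm\tfrac{l\delta}{2}$ (each valid on an interval of radius $\delta$ by the uniform bound) extend the known region to $\bigl(\tau_0-\tfrac{(l+2)\delta}{2},\tau_0+\tfrac{(l+2)\delta}{2}\bigr)$. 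Since these intervals exhaust $\mathbb{R}$, $\hat j$ is determined on all of $\mathbb{R}$, and (Step IV) recovers $j$ from $\hat j$ by the inverse one-dimensional Fourier transform; as $j\in L^1(\mathbb{R})$, this returns the original source.

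The only genuinely delicate point lies in (Step I): on the one hand, justifying the interchange of the $s$-integral with the generalized Fourier transform and the identity $\mathcal{F}_+\omega^{-1/2}=\hat\omega_0^{-1/2}\mathcal{F}_+$; on the other, passing from the almost-everywhere identity \eqref{reconstruction:3.2} to genuine pointwise values of $\hat j$ and of all its derivatives near $\tau_0$. The second issue causes no real trouble precisely because Lemma \ref{lem:3.2} tells us beforehand that $\hat j$ is smooth (indeed real analytic), and a smooth function on $\mathbb{R}$ is uniquely determined on an open interval by its values on a full-measure subset of it. Everything else is routine bookkeeping.
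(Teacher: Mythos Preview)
Your proposal is correct and follows essentially the same route as the paper: the paper does not give a separate proof of Theorem~\ref{Thm:3.2} beyond the sentence ``Applying Lemmas~\ref{zlambda} and~\ref{lem:3.2}, we have the following result,'' with the four steps themselves serving as the constructive argument, and your write-up simply fills in the details of how those two lemmas justify each step. Your remarks that the range of $\hat\omega_0$ is contained in $[m,\infty)$ (so analytic continuation is genuinely needed to reach all of $\mathbb{R}$) and that the a.e.\ identity \eqref{reconstruction:3.2} suffices because $\hat j$ is already known to be smooth are useful clarifications the paper leaves implicit.
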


\appendix 
\section{Appendix}
\subsection{Hilbert-Schmidt operators}
We prove Lemma \ref{HSlemma}.
We first show that $\omega^{1/2} \omega_0^{-1/2} -1$.
To this end, we use
the formula 
\begin{equation}
\label{formula} 
A^{\alpha} = \frac{\sin (\pi \alpha)}{\pi} \int_0^\infty d\lambda \lambda^{\alpha-1} (A+\lambda)^{-1} A 
\end{equation}
on $D(A)$ ($0< \alpha < 1$).
With the aid of \eqref{formula} for $\alpha=1/4$, 
we have
\begin{align*} 
\omega^{1/2}\omega_0^{-1/2}-1 
& = [ (\omega^2)^{1/4}- (\omega_0^2)^{1/4} ] \omega_0^{-1/2} \\
& = \frac{1}{\sqrt{2}\pi} \int_0^\infty d\lambda 
	\lambda^{-3/4} [ (\omega^2+\lambda)^{-1}\omega^2 - (\omega_0^2+\lambda)^{-1}\omega_0^2] \omega_0^{-1/2} \\
& = \frac{1}{\sqrt{2}\pi} \int_0^\infty d\lambda 
	\lambda^{1/4} [ -(\omega^2+\lambda)^{-1} + (\omega_0^2+\lambda)^{-1}] \omega_0^{-1/2} \\
& = \frac{1}{\sqrt{2}\pi} \int_0^\infty d\lambda 
	\lambda^{1/4} (\omega^2+\lambda)^{-1} V (\omega_0^2+\lambda)^{-1} \omega_0^{-1/2},
\end{align*}
where the above equation holds true on some dense domain, e.g. $D(\omega_0^{3/2})$.
It suffices to prove that the operator $(\omega^2+\lambda)^{-1} V (\omega_0^2+\lambda)^{-1} \omega_0^{-1/2}$
is Hilbert-Schmidt and satisfy 
\begin{equation}
\label{intg1} 
\int_0^\infty d\lambda \lambda^{1/4} 
\|(\omega^2+\lambda)^{-1} V (\omega_0^2+\lambda)^{-1} \omega_0^{-1/2}\|_2 < \infty, 
\end{equation}
where $\| \cdot \|_2$ is the Hilbert-Schmidt norm. 
Since $V \in L^2(\mathbb{R}^3)$ and $(\sqrt{|k|^2+m^2})^{-3/2-\epsilon} \in L^2(\mathbb{R}^3)$,
$V \omega_0^{-3/2-\epsilon}$ is Hilbert-Schmidt for $\epsilon > 0$
(see \cite[Theorem XI.20]{RS3} for details).
Since $\|(\omega^2+\lambda)^{-1}\| \leq (m^2 + \lambda)^{-1}$,
we have
\begin{align*} 
& \|(\omega^2 + \lambda)^{-1}V(\omega_0^2+\lambda)^{-1}\omega^{-1/2}_0\|_2 \\
& \leq (m^2+\lambda)^{-1} \|V \omega_0^{-3/2-\epsilon}\|_2 \cdot \|\omega_0^{1+\epsilon}(\omega_0^2+\lambda)^{-1}\| \\
& \leq (m^2+\lambda)^{-1} \|V \omega_0^{-3/2-\epsilon}\|_2 \cdot
	\| \omega_0^{1+\epsilon}(\omega_0^2+\lambda)^{-(1+\epsilon)/2} \| 
	\cdot \|(\omega_0^2+\lambda)^{-1+(1+\epsilon)/2}\| \\
& \leq (m^2+\lambda)^{-3/2 + \epsilon/2}
\end{align*}
Thus \eqref{intg1} holds if $0 < \epsilon < 1/2$ and hence $\omega^{1/2}\omega_0^{-1/2}-1$ is Hilbert-Schmidt.

We shall prove that $\omega^{-1/2}\omega_0^{1/2} -1$ is Hilbert-Schmidt.
From a similar argument as above,
we infer that it suffices to show that
\begin{equation}
\label{intg2} 
\int_0^\infty \lambda^{1/4} \| \omega^{-1/2}(\omega^2+\lambda)^{-1}V(\omega_0^2+\lambda)^{-1}\|_2 < \infty.
\end{equation}
Since, as will be seen later, 
$\omega_0^{-1/2}V\omega_0^{-1-\epsilon}$ ($\epsilon>0$) is Hilbert-Schmidt,
we have
\[ \| \omega^{-1/2}(\omega^2+\lambda)^{-1}V(\omega_0^2+\lambda)^{-1}\|_2
\leq  C (m^2+\lambda)^{-3/2+\epsilon/2} \| \omega_0^{-1/2}V\omega_0^{-1-\epsilon} \|_2 \]
with some positive $C$.
Thus \eqref{integrable} holds true if $0 < \epsilon < 1/2$.

To prove that $\omega_0^{-1/2} (\omega_0 - \omega) \omega_0^{-1/2}$ is Hilbert-Schmidt,
we use the formula \eqref{formula} for $\alpha=1/2$ and write 
\[ \omega_0^{-1/2} (\omega_0 - \omega) \omega_0^{-1/2}
= -\frac{1}{\pi}\int_0^\infty d\lambda \lambda^{1/2} 
\omega_0^{-1/2} (\omega^2 + \lambda)^{-1} V (\omega_0^2 +\lambda)^{-1}\omega_0^{-1/2}. \]
where the above equation hold true on some dense domain (for instance $D(\omega_0^{3/2})$).
It suffices to show
\begin{equation}
\label{integrable} 
\int_0^\infty d\lambda \lambda^{1/2}
\|\omega_0^{-1/2} (\omega^2 + \lambda)^{-1} V (\omega_0^2 +\lambda)^{-1}\omega_0^{-1/2}\|_2 < \infty.
\end{equation}
If $\omega_0^{-1/2}V\omega_0^{-1-\epsilon}$ ($\epsilon>0$) is Hilbert-Schmidt,
then
\begin{align*}
& \|\omega_0^{-1/2} (\omega^2 + \lambda)^{-1} V (\omega_0^2 +\lambda)^{-1}\omega_0^{-1/2}\|_2 \\
& \leq \| \omega_0^{-1/2}(\omega^2 +\lambda)^{-1}\omega_0^{1/2}\|
\cdot \| \omega_0^{-1/2}V\omega_0^{-1-\epsilon}\|_2 
\cdot \| \omega_0^{1/2 + \epsilon}(\omega_0^2+\lambda)^{-1}\|  \\
& \leq \|\omega_0^{-1/2}\omega^{1/2}\| \cdot \|(\omega^2+\lambda)^{-1}\| \cdot \|\omega^{-1/2}\omega_0^{1/2}\| \\
& \quad \times \| \omega_0^{-1/2}V\omega_0^{-1-\epsilon}\|_2 
\cdot \|\omega^{1/2+\epsilon}(\omega_0^2+\lambda)^{-(1/2+\epsilon)/2}\| \cdot \|(\omega_0^2+\lambda)^{-3/4 + \epsilon/2}\| \\
& \leq C (m^2 + \lambda)^{\epsilon/2-7/4}
\end{align*}
with some $C>0$ and $0 < \epsilon < 3/2$.
This implies that \eqref{intg2} holds true for $0 < \epsilon < 1/2$
and we obtain the desired results.

It remains to show the following.
\begin{lemma}
For any $\epsilon>0$, $\omega_0^{-1/2}V\omega_0^{-1-\epsilon}$ is Hilbert-Schmidt.
\end{lemma}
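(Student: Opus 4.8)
The plan is to pass to the momentum representation and estimate the Hilbert--Schmidt norm by hand. Since $V\in H^2(\mathbb{R}^3)$, Sobolev embedding gives $V\in L^\infty(\mathbb{R}^3)$, so $T:=\omega_0^{-1/2}V\omega_0^{-1-\epsilon}$ is a bounded operator; moreover, because $\omega_0^s$ acts as multiplication by $(|k|^2+m^2)^{s/2}$ and multiplication by $V(x)$ acts as convolution by $(2\pi)^{-3/2}\hat V$ (with the paper's normalisation $\hat f(k)=(2\pi)^{-3/2}\int e^{-ik\cdot x}f(x)\,dx$), the conjugated operator $\mathcal{F}_0 T\mathcal{F}_0^{-1}$ is the integral operator with kernel
\[
K(\xi,\zeta)=(2\pi)^{-3/2}\,(|\xi|^2+m^2)^{-1/4}\,\hat V(\xi-\zeta)\,(|\zeta|^2+m^2)^{-(1+\epsilon)/2}.
\]
As $\mathcal{F}_0$ is unitary, $\|T\|_2^2=\iint_{\mathbb{R}^3\times\mathbb{R}^3}|K(\xi,\zeta)|^2\,d\xi\,d\zeta$, so it suffices to prove this double integral is finite.

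First I would replace the weights by the brackets $\langle k\rangle:=(1+|k|^2)^{1/2}$: there is $C_m>0$ with $(|k|^2+m^2)^{-1/2}\le C_m\langle k\rangle^{-1}$ and $(|k|^2+m^2)^{-(1+\epsilon)}\le C_m^{2(1+\epsilon)}\langle k\rangle^{-2(1+\epsilon)}$, so up to a constant $C$,
\[
\|T\|_2^2\le C\iint \langle\xi\rangle^{-1}\,|\hat V(\xi-\zeta)|^2\,\langle\zeta\rangle^{-2(1+\epsilon)}\,d\xi\,d\zeta .
\]
The naive estimate $\langle\xi\rangle^{-1}\le1$ followed by the $\xi$-integration (which produces $\|\hat V\|_{L^2}^2$) would leave $\int_{\mathbb{R}^3}\langle\zeta\rangle^{-2(1+\epsilon)}\,d\zeta$, and this diverges for $\epsilon\le1/2$; the whole point is therefore to retain the joint momentum decay. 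To do so I apply Peetre's inequality $\langle\xi\rangle^{-1}\le\sqrt2\,\langle\xi-\zeta\rangle\,\langle\zeta\rangle^{-1}$ --- equivalently the elementary bound $\langle\zeta\rangle\le\sqrt2\,\langle\xi-\zeta\rangle\langle\xi\rangle$, rearranged --- which gives, after the substitution $w=\xi-\zeta$ and Tonelli,
\[
\|T\|_2^2\le C'\iint \langle\xi-\zeta\rangle\,|\hat V(\xi-\zeta)|^2\,\langle\zeta\rangle^{-3-2\epsilon}\,d\xi\,d\zeta
=C'\Big(\int_{\mathbb{R}^3}\langle w\rangle\,|\hat V(w)|^2\,dw\Big)\Big(\int_{\mathbb{R}^3}\langle\zeta\rangle^{-3-2\epsilon}\,d\zeta\Big).
\]

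It then remains to note that both factors are finite. The first is $\|\langle\cdot\rangle^{1/2}\hat V\|_{L^2}^2=\|V\|_{H^{1/2}(\mathbb{R}^3)}^2<\infty$, since $V\in H^2(\mathbb{R}^3)\subset H^{1/2}(\mathbb{R}^3)$; the second is $\int_{\mathbb{R}^3}(1+|\zeta|^2)^{-(3+2\epsilon)/2}\,d\zeta$, which converges exactly because $3+2\epsilon>3$, i.e.\ $\epsilon>0$. Hence $T$ is Hilbert--Schmidt. The only delicate point --- and the reason a crude bound fails --- is that $\omega_0^{-1-\epsilon}$ is only barely square-integrable once convolved with $\hat V$; Peetre's inequality is precisely the device that transfers the surplus decay of $\omega_0^{-1/2}$ onto the $\hat V$-factor, at the affordable cost of half a derivative of $V$. (Alternatively one may split the $w$-integral into $\{|w|\le|\xi|/2\}$ and $\{|w|>|\xi|/2\}$, using $(|\xi+w|^2+m^2)^{-1/2}\lesssim(1+|\xi|)^{-1}$ on the first region and $\int_{|w|>R}|\hat V|^2\,dw\lesssim R^{-4}$, again by $V\in H^2$, on the second.)
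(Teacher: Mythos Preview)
Your argument is correct and is a genuinely different route from the paper's. The paper proceeds by writing
\[
\omega_0^{-1/2}V\omega_0^{-1-\epsilon}=V\omega_0^{-3/2-\epsilon}+[\omega_0^{-1/2},V]\,\omega_0^{-1-\epsilon},
\]
disposing of the first summand via the standard $f(x)g(-i\nabla)$ criterion, and then proving that the commutator $[\omega_0^{-1/2},V]$ is itself Hilbert--Schmidt. For the latter it expands $\omega_0^{-1/2}$ through the integral formula $A^{-1/2}=\pi^{-1}\int_0^\infty\lambda^{-1/2}(A^2+\lambda)^{-1}\,d\lambda$, computes $[(\omega_0^2+\lambda)^{-1},V]=(\omega_0^2+\lambda)^{-1}(\Delta V+2\nabla V\cdot\nabla)(\omega_0^2+\lambda)^{-1}$, and estimates each piece using the explicit Yukawa kernel of $(\omega_0^2+\lambda)^{-1}$. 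This requires $\nabla V,\Delta V\in L^2$, i.e.\ the full $H^2$ hypothesis.

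Your momentum-space computation with Peetre's inequality is shorter, avoids the commutator and the resolvent integrals altogether, and uses only $V\in H^{1/2}$ rather than $H^2$; the mechanism that replaces the commutator step is precisely the transfer of one unit of $\langle\xi\rangle$-decay onto $\hat V$, at the cost of half a derivative. The trade-off is that the paper's decomposition isolates $[\omega_0^{-1/2},V]$ as a Hilbert--Schmidt operator in its own right (a fact potentially reusable elsewhere), whereas your estimate is tailored to this specific product and does not yield that stronger intermediate statement.
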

\begin{proof}
Note that
\[ \omega_0^{-1/2} V \omega_0^{-1-\epsilon}
= V \omega_0^{-3/2-\epsilon} + [\omega_0^{-1/2},V]\omega_0^{-1-\epsilon}. \]
As seen above, the first term of the r.h.s. is Hilbert-Schmidt. 
It suffices to show that $[\omega_0^{-1/2},V]$ is Hilbert-Schmidt.
To this end, we write it as
\[ [\omega_0^{-1/2},V] 
= \frac{1}{\pi} \int_0^\infty d\lambda \lambda^{-1/2} [(\omega_0^2 + \lambda)^{-1},V]. \] 
By direct calculation, we see that
\begin{align*} 
[(\omega_0^2 + \lambda)^{-1},V] 
& = (\omega_0^2 + \lambda)^{-1}[V, \omega_0^2](\omega_0^2 + \lambda)^{-1} \\
& = V_1 + V_2,
\end{align*}
where 
\begin{align*} 
& V_1 = (\omega_0^2 + \lambda)^{-1}(\Delta V)(\omega_0^2 + \lambda)^{-1}, \\
& V_2 = 2 (\omega_0^2 + \lambda)^{-1}(\nabla V) \cdot \nabla (\omega_0^2 + \lambda)^{-1}. 
\end{align*}  
One observes that the operators $(\Delta V) (\omega_0^2 + \lambda)^{-1}$
and $(\omega_0^2 + \lambda)^{-1}(\nabla_j V)$ are integral operators with the kernels
$(4\pi|x-y|)^{-1}(\Delta V)(x)e^{\sqrt{m^2 + \lambda}|x-y|}$
and $(4\pi|x-y|)^{-1}e^{\sqrt{m^2 + \lambda}|x-y|}(\nabla_j V)(y)$,
respectively.
Hence we have 
\begin{align*}
\|(\Delta V) (\omega_0^2 + \lambda)^{-1}\|_2^2
& = (4\pi)^{-2}\int dx dy \frac{|(\Delta V)(x)|^2e^{-2\sqrt{m^2 + \lambda}|x-y|}}{|x-y|^2} \\
& = (4\pi)^{-2} \|(\Delta V)\|_{L^2}^2 \int dx \frac{e^{-2\sqrt{m^2 + \lambda}|x|}}{|x|^2} \\
& = \frac{1}{\sqrt{m^2 + \lambda}} \left(\frac{\|(\Delta V)\|_{L^2}^2}{(4\pi)^2} \int dx \frac{e^{-2|x|}}{|x|^2} \right)
\end{align*}
and 
\begin{align*}
\|(\omega_0^2 + \lambda)^{-1}(\nabla_j V) \|_2^2 
= \frac{1}{\sqrt{m^2 + \lambda}} \left(\frac{\|(\nabla V)\|_{L^2}^2}{(4\pi)^2} \int dx \frac{e^{-2|x|}}{|x|^2} \right). 
\end{align*}
Thus we obtain
\begin{align*}
\|V_1\|_2 
& \leq \|(\omega_0^2 + \lambda)^{-1} \| \cdot \|V (\omega_0^2 + \lambda)^{-1} \|_2 \\
& \leq C (m^2 + \lambda)^{-5/4} 
\end{align*}
and
\begin{align*}
\|V_2\|_2 
& \leq \sum_{j=1}^3 \|(\omega_0^2 + \lambda)^{-1}(\nabla_j V)\|_2
	 \cdot \|\nabla_j (\omega_0^2 + \lambda)^{-1/2}\| \cdot \|(\omega_0^2+\lambda)^{-1/2}\| \\
& \leq C (m^2 + \lambda)^{-3/4} 
\end{align*}
with some $C$ independent of $\lambda$.
Hence $[\omega_0^{-1/2}, V]$ is Hilbert-Schmidt since
\[ \|[\omega_0^{-1/2}, V]\|_2 
\leq C \int_0^\infty d\lambda \lambda^{-1/2} ((m^2 + \lambda)^{-5/4} + (m^2 + \lambda)^{-3/4}) < \infty. \]
\end{proof}

\subsection{Existence of the limits \eqref{1738}}
Under Assumptions \ref{asspot} and \ref{assscat},
we prove the existence of 
\begin{align*}
\mbox{s-}\lim_{t \to \pm \infty} 
e^{it\omega}e^{-it\omega_0}.
\end{align*} 
Since $\omega-\omega_0$ is $\omega_0$-bounded and
$\mathcal{S}(\mathbb{R}^3)$ is dense in 
$L^2(\mathbb{R}^3)=\mathcal{H}_{\mathrm{ac}}(\omega_0)$,
it suffices to show that 
\begin{align}\label{cook}
\int_1^\infty
dt
\left\|
(\omega-\omega_0)e^{-it\omega_0}f
\right\|_{\mathfrak{h}}
< \infty
\end{align} 
for any $f\in \mathcal{S}(\mathbb{R}^3)$.
From (\ref{formula}),
we see that 
\begin{align*}
\omega - \omega_0
&=
\frac{1}{\pi}
\int_0^\infty d\lambda 
\lambda^{-1/2}
[
(\omega^2 + \lambda)^{-1} \omega^2
-
(\omega_0^2 + \lambda)^{-1} \omega_0^2
]   \\
&=
\frac{1}{\pi}
\int_0^\infty d\lambda 
\lambda^{-1/2}
[
1-\lambda(\omega^2 + \lambda)^{-1} 
-
1+\lambda(\omega_0^2 + \lambda)^{-1}
]   \\
&=
\frac{1}{\pi}
\int_0^\infty d\lambda 
\lambda^{1/2}
[
(\omega^2 + \lambda)^{-1} 
-
(\omega_0^2 + \lambda)^{-1}
]   \\
&=
\frac{1}{\pi}
\int_0^\infty d\lambda 
\lambda^{1/2}
(\omega^2 + \lambda)^{-1}
[
(\omega_0^2 + \lambda)
-
(\omega^2 + \lambda)
]   
(\omega_0^2 + \lambda)^{-1}
\\
&=
-
\frac{1}{\pi}
\int_0^\infty d\lambda 
\lambda^{1/2}
(\omega^2 + \lambda)^{-1}
V 
(\omega_0^2 + \lambda)^{-1}.
\end{align*} 
Therefore, we have for any $f\in \mathcal{S}(\mathbb{R}^3)$,
\begin{align*}
&\| 
(\omega -\omega_0)e^{-it\omega_0}f
\|_{\mathfrak{h}} \\
\quad&\le C
\int_0^\infty
d\lambda
\lambda^{1/2}
\left\|
(\omega^2 + \lambda)^{-1}
V 
(\omega_0^2 + \lambda)^{-1}
e^{-it\omega_0} f
\right\|_{\mathfrak{h}}\\
\quad&\le C
\int_0^\infty
d\lambda
\lambda^{1/2}
(m^2+\lambda)^{-1}
\left\|
V 
(\omega_0^2 + \lambda)^{-1}
e^{-it\omega_0} f
\right\|_{\mathfrak{h}}\\
\quad&\le C
\left\|
V 
\right\|_{L^2(\mathbb{R}^3)}
\int_0^\infty
d\lambda
\lambda^{1/2}
(m^2+\lambda)^{-1}
\left\| 
(\omega_0^2 + \lambda)^{-1}
e^{-it\omega_0} f
\right\|_{L^\infty(\mathbb{R}^3)}.
\end{align*} 
Here,
we have used the H\"older inequality
in the last inequality.
It follows from \cite{MSW1980} that 
\begin{align*}
\left\| 
(\omega_0^2 + \lambda)^{-1}
e^{-it\omega_0} f
\right\|_{L^\infty(\mathbb{R}^3)}
\le C
|t|^{-3/2}
\left\| 
(\omega_0^2 + \lambda)^{-1}
\omega_0^{5/2}
f
\right\|_{L^1(\mathbb{R}^3)}
\end{align*} 
for any $t\in \mathbb{R}$.
Since 
\begin{align*}
&\left\| 
(\omega_0^2 + \lambda)^{-1}
\omega_0^{5/2}
f
\right\|_{L^1(\mathbb{R}^3)} \\
\quad& \le
\left\| 
(1+|x|)^{-2}
(1+|x|)^2
(\omega_0^2 + \lambda)^{-1}
\omega_0^{5/2}
f
\right\|_{L^1(\mathbb{R}^3)} \\
\quad& \le
\left\| 
(1+|x|)^{-2}
\right\|_{L^2(\mathbb{R}^3)} 
\left\| 
(1+|x|)^2
(\omega_0^2 + \lambda)^{-1}
\omega_0^{5/2}
f
\right\|_{L^2(\mathbb{R}^3)} \\
\quad& \le C
\left\| 
(\omega_0^2 + \lambda)^{-1}
\omega_0^{5/2}
f
\right\|_{L^2(\mathbb{R}^3)}
+
C
\left\| 
|x|^2
(\omega_0^2 + \lambda)^{-1}
\omega_0^{5/2}
f
\right\|_{L^2(\mathbb{R}^3)} 
\end{align*} 
and 
\begin{align*}
|x|^2 (\omega_0^2 + \lambda)^{-1}
=
(\omega_0^2 + \lambda)^{-1} |x|^2
+
8(\omega_0^2 + \lambda)^{-3}\Delta
-
6(\omega_0^2 + \lambda)^{-2},
\end{align*} 
we see that
\begin{align*}
&\left\| 
(\omega_0^2 + \lambda)^{-1}
\omega_0^{5/2}
f
\right\|_{L^1(\mathbb{R}^3)} \\
\quad&\le
C (m^2 + \lambda)^{-1}
\left\{
\left\| 
\omega_0^{5/2}
f
\right\|_{\mathfrak{h}} 
+
\left\| 
|x|^2
\omega_0^{5/2}
f
\right\|_{\mathfrak{h}} 
+
\left\| 
\Delta
\omega_0^{5/2}
f
\right\|_{\mathfrak{h}} 
\right\}.
\end{align*} 
Thus, we obtain for any $t\in \mathbb{R}$, 
\begin{align*}
\left\|
(\omega-\omega_0)e^{-it\omega_0}f
\right\|_{\mathfrak{h}}
\le C
|t|^{-3/2}
\int_0^\infty
d\lambda
\lambda^{1/2}
(m^2+\lambda)^{-2}
\end{align*} 
and hence (\ref{cook}) holds
for any $f\in \mathcal{S}(\mathbb{R}^3)$.
\subsection{Classical wave operator}

We prove Lemma \ref{waveop}.
It holds that
\[ U(t)^*U_0(t)
= \begin{bmatrix} U_{++}(t)^*e^{-it\omega_0} & U_{-+}(t)^*e^{it\omega_0} \\ 
	U_{+-}(t)^*e^{-it\omega_0} & U_{--}(t)^*e^{it\omega_0} \end{bmatrix}. \]
For $a,b,c, t$ and $s \in \mathbb{R}$, we set
\[ I_{(a,b,c)}(s,t) 
= \omega_0^a e^{is\omega} \omega^b e^{it\omega_0} \omega_0^c. \]
If $a+b+c=0$, 
then $I_{(a,b,c)}(s,t)$ is bounded and
\[ \overline{I_{(a,b,c)}(s,t)} = I_{(a,b,c)}(-s,-t). \]
By direct calculation, we have
\begin{align*}
U_{++}(t)^*e^{-it\omega_0}
& = \overline{U_{--}(t)^*e^{it\omega_0}} \\
& = \frac{1}{4}
\Big[
I_{(1/2,0,-1/2)}(t,-t) + I_{(-1/2,0,1/2)}(t,-t) \\
& \quad + I_{(1/2,-1,1/2)}(t,-t) + I_{(-1/2,1,-1/2)}(t,-t) \\
& \quad + I_{(1/2,0,-1/2)}(-t,-t) + I_{(-1/2,0,1/2)}(-t,-t) \\
& \quad -I_{(1/2,-1,1/2)}(-t,-t) - I_{(-1/2,1,-1/2)}(-t,-t) 
\Big]
\end{align*}
and
\begin{align*}
U_{-+}(t)^*e^{it\omega_0}
& = \overline{U_{+-}(t)^*e^{-it\omega_0}} \\
& = \frac{1}{4}
\Big[
I_{(-1/2,0,1/2)}(t,t) -I_{(1/2,0,-1/2)}(t,t) \\
& \quad + I_{(1/2,-1,1/2)}(t,t) - I_{(-1/2,1,-1/2)}(t,t) \\
& \quad + I_{(-1/2,0,1/2)}(-t,t) - I_{(1/2,0,-1/2)}(-t,t) \\
& \quad - I_{(1/2,-1,1/2)}(-t,t) + I_{(-1/2,1,-1/2)}(-t,t) 
\Big].
\end{align*}
By the similar argument as in the proof of Lemma \ref{dua}
with the aid of Lemma \ref{HSlemma},
one can prove the following lemma:
\begin{lemma}
Suppose that Assumptions \ref{asspot} and \ref{assscat}.
Then:
\begin{align*}
& \lim_{t \to \pm \infty}I_{(1/2,0,-1/2)}(t,-t)
= \lim_{t \to \pm \infty}I_{(1/2,-1,1/2)}(t,-t)
= \omega_0^{1/2} w_\pm \omega_0^{-1/2}, \\
& \lim_{t \to \pm \infty}I_{(-1/2,0,1/2)}(t,-t)
= \lim_{t \to \pm \infty}I_{(-1/2,1,-1/2)}(t,-t)
= \omega_0^{-1/2} w_\pm \omega_0^{1/2}, \\
& \lim_{t \to \pm \infty}I_{(1/2,0,-1/2)}(t,t)
= \lim_{t \to \pm \infty}I_{(1/2,-1,1/2)}(t,t)
= 0, \\
& \lim_{t \to \pm \infty}I_{(-1/2,0,1/2)}(t,t)
= \lim_{t \to \pm \infty}I_{(-1/2,1,-1/2)}(t,t)
= 0.
\end{align*}
\end{lemma}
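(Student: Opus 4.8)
The plan is to reduce every one of the limits in the statement to the behaviour of a bare product of free propagators, the reduction being carried out with the boundedness of the ratios $\omega_0^{\theta}\omega^{-\theta}$, $\omega^{\theta}\omega_0^{-\theta}$ ($0\le\theta\le1$) together with Lemma~\ref{HSlemma}. Using that $e^{is\omega}$ commutes with every Borel function of $\omega$ and $e^{it\omega_0}$ with every Borel function of $\omega_0$: writing $\omega_0^{a}=(\omega_0^{a}\omega^{-a})\omega^{a}$ in front, commuting $\omega^{a}$ through $e^{is\omega}$ and combining it with $\omega^{b}$, then commuting the trailing $\omega_0^{c}$ through $e^{it\omega_0}$, and using $a+b+c=0$, one finds
\[ I_{(a,b,c)}(s,t)=(\omega_0^{a}\omega^{-a})\,e^{is\omega}\,(1+K_{(a,b,c)})\,e^{it\omega_0}, \]
with $K_{(a,b,c)}=\omega^{-c}\omega_0^{c}-1$, which is the adjoint of $\omega_0^{c}\omega^{-c}-1$ and hence Hilbert--Schmidt by Lemma~\ref{HSlemma} since $c=\pm\tfrac{1}{2}$; the prefactor $\omega_0^{a}\omega^{-a}$ is bounded since $a=\pm\tfrac{1}{2}$. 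The ``error'' term $(\omega_0^{a}\omega^{-a})e^{is\omega}K_{(a,b,c)}e^{it\omega_0}$ then converges strongly to $0$ in all cases: $e^{it\omega_0}$ tends weakly to $0$ (absolute continuity of the spectrum of $\omega_0$ and Riemann--Lebesgue), a compact operator carries weak convergence into norm convergence, $e^{is\omega}$ is unitary, and the left factor is bounded.

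For the four limits with second argument $-t$ there remains $(\omega_0^{a}\omega^{-a})\,e^{it\omega}e^{-it\omega_0}$. By the remarks after Assumption~\ref{assscat} (existence of $w_\pm$ and the invariance principle for $\varphi$) one has $e^{it\omega}e^{-it\omega_0}\to w_\pm$ strongly as $t\to\pm\infty$, so $I_{(a,b,c)}(t,-t)\to\omega_0^{a}\omega^{-a}w_\pm$. I would then use the intertwining $h w_\pm=w_\pm h_0$, equivalently $f(\omega)w_\pm=w_\pm f(\omega_0)$ for Borel $f$ and in particular $\omega^{-a}w_\pm=w_\pm\omega_0^{-a}$, to rewrite this as $\omega_0^{a}w_\pm\omega_0^{-a}$; for $a=\tfrac{1}{2}$ this is $\omega_0^{1/2}w_\pm\omega_0^{-1/2}$ and for $a=-\tfrac{1}{2}$ it is $\omega_0^{-1/2}w_\pm\omega_0^{1/2}$, which is the content of the first two displayed lines.

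For the four limits with second argument $t$ the reduction leaves $(\omega_0^{a}\omega^{-a})\,e^{it\omega}e^{it\omega_0}$, and the key point is that $e^{it\omega}e^{it\omega_0}\to0$. I would factor $e^{it\omega}e^{it\omega_0}=e^{2it\omega}\,(e^{-it\omega}e^{it\omega_0})$; the second factor converges strongly (to $w_\mp$, by the same existence statement after $t\mapsto-t$), while $e^{2it\omega}\to0$ weakly because $\omega$ also has purely absolutely continuous spectrum under Assumption~\ref{assscat}; hence $e^{it\omega}e^{it\omega_0}\to0$ and therefore $I_{(a,b,c)}(t,t)\to0$, in the weak operator topology — strong convergence really does fail here, already when $V\equiv0$. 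For the application in Lemma~\ref{waveop} this is enough, since in $U_{-+}(t)^{*}e^{it\omega_0}$ these four operators occur only in the combinations $I_{(\pm1/2,0,\mp1/2)}(t,t)-I_{(\pm1/2,\mp1,\pm1/2)}(t,t)$, whose two summands share the leading term $\omega_0^{\pm1/2}\omega^{\mp1/2}e^{it\omega}e^{it\omega_0}$, so that the difference is again of the form $\omega_0^{\pm1/2}\omega^{\mp1/2}\,e^{it\omega}K\,e^{it\omega_0}$ with $K$ Hilbert--Schmidt and hence converges strongly to $0$. The limits with first argument $-t$, and the identification of the sign in $\pm$, then follow by the substitution $t\mapsto-t$.

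The only nonformal step, and the one I expect to be the main obstacle, is exactly the control of the conjugations by the unbounded operator $\omega_0^{1/2}$: strong convergence of $e^{it\omega}e^{-it\omega_0}$ to $w_\pm$ is not preserved under left multiplication by $\omega_0^{1/2}$, and it is Lemma~\ref{HSlemma} that lets one replace the unbounded weights by bounded ones at the cost of a Hilbert--Schmidt (hence compact) remainder. Once that is done, everything rests on two elementary facts already exploited in the proof of Lemma~\ref{dua}: $e^{\pm it\omega_0}$ and $e^{\pm it\omega}$ tend weakly to $0$, and compactness upgrades weak convergence to norm convergence.
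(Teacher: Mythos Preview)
Your reduction is exactly the route the paper indicates (``similar argument as in the proof of Lemma~\ref{dua} with the aid of Lemma~\ref{HSlemma}''): peel off bounded factors $\omega_0^{a}\omega^{-a}$, isolate a Hilbert--Schmidt remainder $\omega^{-c}\omega_0^{c}-1$ which is killed by the weak decay of $e^{it\omega_0}$, and feed the leading term into the existence of $w_\pm$ plus intertwining. The first two displayed lines are handled correctly, and your identification $\omega_0^{a}\omega^{-a}w_\pm=\omega_0^{a}w_\pm\omega_0^{-a}$ via $\omega^{-1/2}w_\pm=w_\pm\omega_0^{-1/2}$ (and its consequence $\omega^{1/2}w_\pm\supset w_\pm\omega_0^{1/2}$) is fine once one remembers that all four composites $\omega_0^{\pm1/2}w_\pm\omega_0^{\mp1/2}$ are bounded.

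Where you go beyond the paper is your observation about the last two lines: you are right that the individual operators $I_{(a,b,c)}(t,t)$ \emph{cannot} converge strongly to $0$, since after stripping the compact remainder the main term $(\omega_0^{a}\omega^{-a})e^{it\omega}e^{it\omega_0}$ has norm bounded below on every vector (the prefactor is boundedly invertible and the rest is unitary); for $V=0$ this is just $e^{2it\omega_0}$. So the lemma as literally stated holds only in the weak operator topology for those four limits. Your repair --- noting that in $U_{-+}(t)^{*}e^{it\omega_0}$ the terms enter only as the differences
\[
I_{(\pm1/2,0,\mp1/2)}(t,t)-I_{(\pm1/2,\mp1,\pm1/2)}(t,t)
=(\omega_0^{\pm1/2}\omega^{\mp1/2})\,e^{it\omega}\bigl(K_{(\pm1/2,0,\mp1/2)}-K_{(\pm1/2,\mp1,\pm1/2)}\bigr)e^{it\omega_0},
\]
which are (bounded)$\times$(unitary)$\times$(Hilbert--Schmidt)$\times$(unitary) and hence converge strongly to $0$ --- is exactly what is needed for Lemma~\ref{waveop}, and the same remark covers the $(-t,-t)$ terms in $U_{++}(t)^{*}e^{-it\omega_0}$. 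The paper's one-line reference to Lemma~\ref{dua} sweeps this under the rug; your version makes the topology explicit and supplies the cancellation that actually gives the strong limit of $U(t)^{*}U_0(t)$.
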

By the above lemma, we have
\begin{align*} 
& \lim_{t \to \pm \infty}U_{++}(t)^*e^{-it\omega_0}
= \frac{1}{2}
	\left[\omega_0^{-1/2} w_\pm \omega_0^{+1/2} 
		+ \omega_0^{+1/2} w_\pm \omega_0^{-1/2}\right], \\
& \lim_{t \to \pm \infty}U_{-+}(t)^*e^{it\omega_0}
= \frac{1}{2}
	\left[\omega_0^{-1/2} w_\mp \omega_0^{+1/2} 
		- \omega_0^{+1/2} w_\mp \omega_0^{-1/2}\right]. \\
\end{align*}
This completes the lemma.

\begin{flushleft}
{\bf Acknowledgments}\\
The authors would like to thank A. Arai and T. Miyao for useful comments.
\end{flushleft}

\end{document}